\def\br#1\er{\textcolor{red}{#1}} %
\newcommand{\be}{\begin{equation}}
\newcommand{\ee}{\end{equation}}
\newcommand{\N}{{\mathbb N}}
\newcommand{\R}{{\mathbb R}}
\newcommand{\C}{\mathfrak{c}}
\newcommand{\J}{{\cal J}}
\newcommand{\ben}{\begin{enumerate}}
\newcommand{\een}{\end{enumerate}}
\newcommand{\bit}{\begin{itemize}}
\newcommand{\eit}{\end{itemize}}
\newcommand{\edoc}{\end{document}}
\newcommand{\cambios}[1]{{\color{black} #1}}
\newcommand{\ncambios}[1]{{\color{black} #1}}
\newcommand{\multiwarped}{doubly warped~}
\newcommand{\Integral}[5]{\int_{#1}^{#2} \frac{\sqrt{#3}}{\alpha_{#4}(s)}\left(\sum_{k=1}^{2} \frac{#5}{\alpha_{k}(s)} \right)^{-1/2}ds}
\newcommand{\point}[3]{(#1,#2,#3)}
\newcommand{\B}{b}
\title{Causality and c-completion of multiwarped spacetimes}
\date{}
\author[1]{Luis Alberto Aké}
\author[1]{José Luis Flores}
\author[2]{Jónatan Herrera}
\affil[1]{\small Departamento de Álgebra, Geometría y Topología, Facultad de Ciencias\\ Universidad de Málaga, Campus Teatinos, 29071 Málaga, Spain}
\affil[2]{\small Departamento de Matemáticas, Edificio Albert Einstein\\ Universidad de Córdoba, Campus de Rabanales, 14071 Córdoba, Spain}
\affil[ ]{\textit{Corresponding author: jonatanhf@gmail.com}}
\begin{document}
\newtheorem{thm}{Theorem}[section]
\newtheorem{prop}[thm]{Proposition}
\newtheorem{lemma}[thm]{Lemma}
\newtheorem{cor}[thm]{Corollary}
\newtheorem{conv}[thm]{Convention}
\theoremstyle{definition}
\newtheorem{defi}[thm]{Definition}
\newtheorem{notation}[thm]{Notation}
\newtheorem{exe}[thm]{Example}
\newtheorem{conj}[thm]{Conjecture}
\newtheorem{prob}[thm]{Problem}
\newtheorem{rem}[thm]{Remark}

\maketitle
\usetikzlibrary{matrix}

\begin{abstract}
In this paper a systematic study of the causal structure and global causality properties of multiwarped spacetimes is developed. This analysis is used to make a detailed description of the causal boundary of these spacetimes. Some applications of our results in examples of physical interest, for instance, in the context of Maldacena's conjecture, are considered.
%
%
\end{abstract}


\tableofcontents

\section{Introduction}

The holographic principle \cite{tHooft:1993dmi,doi:10.1063/1.531249} states that the information of a particular space can be thought as encoded on a lower-dimensional boundary of the space, thus considering the original space as an hologram of the latter. One of the best understood examples of such a principle is the AdS/CFT correspondence, or Maldacena duality \cite{Mal}, where a dual description between the string theory on the bulk space (typically, the product of anti-de Sitter AdS$_n$ by a round sphere $\mathbb{S}^m$, or by another compact manifold) and a Quantum Field Theory without gravity on the  boundary of the initial space is achieved. Currently, there is a growing interest in the study of the realization of such a holographic principle with other bulk spaces \cite{PhysRevD.80.124008,GHODSI201079,1126-6708-2009-04-019}, particularly de Sitter spacetime dS$_{n}$ \cite{Witten:2001kn,1126-6708-2001-10-034,0264-9381-34-1-015009,Gibbons:1984kp}.

Two problems arise here. On the one hand, which {\em boundary} must we consider to formulate the holographic principle? In the original approach to the AdS/CFT correspondence, it is used the conformal boundary. However, this boundary presents important limitations generated by its {\em ad hoc} character: there is  no general formalism ensuring when a reasonably general spacetime has an intrinsic and unique conformal boundary. In fact, Bernstein, Maldacena and Nastase \cite{BMN} put forward different problems when the conformal boundary on plane waves is considered, and some years later, Marolf and Ross \cite{MR1} showed that, indeed, the conformal boundary is not available for non-conformally flat plane waves. This makes the alternative {\em causal boundary} a more suitable construction for the holographic principle, since it is intrinsic, conformally invariant and can be computed systematically.

On the other hand, anti-de Sitter spacetime is embedded in a string theory by making a warping product with a compact manifold. Due to the compactness of the latter, it is not difficult to obtain the causal boundary of the product from the causal boundary of AdS$_n$ (see for instance \cite{AF}). However, if de Sitter spacetime is considered, the no-go theorems (first due to Gibbons \cite{Gibbons:1984kp} and Maldacena, Nuñez \cite{doi:10.1142/S0217751X01003937}) ensure that there is no way to embed it in a string theory by a product with a compact manifold. There exist several ways to circumvent these no-go theorems, for instance, by considering warped product with non-compact Riemannian manifolds, but this complicates significatively the computation of the boundary.

\smallskip

 These problems motivate the systematic study of the causal boundary for the so-called {\em multiwarped spacetimes}, a class of spacetimes wide enough to cover the situations described above. A {\em multiwarped spacetime} $(V,g)$ can be written as $V=(a,b)\times M_1\times \dots \times M_n$, $-\infty\leq a<b\leq\infty$, and
\begin{equation}\label{eqqq}
g=-dt^2 + \alpha_1 g_1+\dots+\alpha_n g_n,
\end{equation}
where $\alpha_i:(a,b)\rightarrow\R$ are positive smooth functions and $(M_i,g_i)$ are Riemannian manifolds, for all $i=1,\ldots,n$.

As far as we know, the unique result in the literature about the causal boundary of these spacetimes is due to Harris \cite{H}:
\begin{thm}(Harris, 2008)\label{thm:harris}
Let $(V,g)$ be a multiwarped spacetime as above, and assume that (for some $c\in (a,b)$) the first $k$ warping functions, $1\leq i\leq k$, obey $\int_{c}^b(\alpha_i(s))^{-1/2}ds<\infty$, and the rest, $k+1\leq i\leq n$, obey $\int_{c}^b(\alpha_i(s))^{-1/2}ds=\infty$. Then the following hold:
  \begin{itemize}
  \item[(a)] If some Riemannian factor $M_i$ is incomplete, the future causal boundary $\hat{\partial} V$ has timelike-related elements.
  \item[(b)] If $M_i$ is incomplete for some $i\geq k+1$, the future causal boundary $\hat{\partial} V$ has null-related elements.
    \item[(c)] If neither of those occur, then $V$ has only spacelike future boundaries.
  \end{itemize}
In the last case, $\hat{\partial} V$ is homeomorphic to $M^0=M_1\times \dots \times M_k$. Furthermore, the future causal completion $\hat{V}$ is homeomorphic to $\left( (a,b]\times M^0\times M'\right)/\sim$, where $M'=M_{k+1}\times \dots \times M_n$ and $\sim$ is the equivalence relation defined by $(b,x^0,x')\sim (b,x^0,y')$ for any $x^0\in M^0$ and $x',y'\in M'$; $\hat{\partial} V$ appears there as $\{b\}\times M^0\times \{*\}$.
\end{thm}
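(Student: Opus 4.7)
The plan is to work within the Geroch--Kronheimer--Penrose framework: the future causal boundary $\hat\partial V$ is constructed from Terminal Indecomposable Past sets (TIPs), i.e.\ chronological pasts $P=I^-[\gamma]$ of inextendible future-directed timelike curves $\gamma$ in $V$, endowed with Harris's future-chronological topology. First I would prove a kinematic lemma characterising causal curves in the multiwarped geometry \eqref{eqqq}: a curve $\gamma(s)=(t(s),x_1(s),\dots,x_n(s))$ is future-directed timelike iff $\dot t>0$ and $\dot t^2>\sum_i\alpha_i(t)\,g_i(\dot x_i,\dot x_i)$. Reparametrising by $t$, the Cauchy--Schwarz inequality yields the fundamental estimate
\[
d_i\bigl(x_i(t_0),x_i(t_1)\bigr)\;\leq\;\int_{t_0}^{t_1}\alpha_i(s)^{-1/2}\,ds,
\]
and conversely any $C^1$-path in $M_i$ satisfying such a bound can be realised by a suitable timelike curve. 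This is the key technical tool: it segregates the factors into those of finite "conformal reach" ($i\leq k$) and those of infinite reach ($i\geq k+1$).

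With this estimate I would classify the TIPs. For any inextendible future-directed timelike $\gamma$, the coordinate $t(s)$ is monotone with limit $t^{*}\in(a,b]$. When $i\leq k$, the integral bound forces $x_i(t)$ to be Cauchy as $t\to t^{*}$; if $M_i$ is complete, $x_i(t)$ converges to a unique $p_i\in M_i$. When $i\geq k+1$ and $t^{*}=b$, the bound is vacuous, and a direct chronological-past computation shows that any two curves sharing the same $t^{*}=b$ and the same $(p_1,\dots,p_k)$ generate the same TIP, because the divergence of $\int\alpha_i^{-1/2}$ allows arbitrary pairs of points in $M_i$ to be joined within the past. This establishes, in the setting of (c), a canonical bijection between $\hat\partial V$ and $M^0$.

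For (a), if some $M_j$ is incomplete I would lift an incomplete Riemannian geodesic in $M_j$ to a timelike curve $\gamma$ in $V$ whose $t$-coordinate is forced to terminate at some interior value $t^{*}<b$ (one chooses $\dot t$ just above $\sqrt{\alpha_j(t)}\,|\dot x_j|_{g_j}$ to satisfy the causal inequality while keeping $t^{*}$ finite). A $t$-truncation of $\gamma$ yields a TIP $P'\subsetneq P$ with $P'\in I^-(P)$, producing timelike-related boundary elements. For (b), if $M_j$ is incomplete with $j\geq k+1$, the divergence of $\int\alpha_j^{-1/2}$ together with the incompleteness allows a timelike curve whose $t$-limit is $b$ but whose $M_j$-projection escapes every compact set; the resulting TIP is comparable to the "regular" TIP with the same $(p_1,\dots,p_k)$ only along the null direction $t\to b$, since shifts in $t$ beyond $b$ are impossible while $M_j$-perturbations remain tangent to the boundary---exactly the fingerprint of null-relatedness. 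Spacelike character in (c) then follows because, for $t^{*}=b$ fixed, distinct tuples $(p_1,\dots,p_k)\ne(q_1,\dots,q_k)$ force $P_{p^0}\not\subseteq P_{q^0}$: any future-timelike excursion from $P_{p^0}$ would require the limit $x_i\to p_i$ to be overcome by $\int_{t}^{b}\alpha_i^{-1/2}<\infty$, a contradiction.

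The main obstacle lies in the final topological identification. One must verify that the chronological topology on $\hat V$ coincides with the quotient topology on $\bigl((a,b]\times M^0\times M'\bigr)/\!\sim$. Concretely, one shows that a sequence $(t_n,x^0_n,x'_n)$ converges to a boundary TIP iff $t_n\to b$ and $x^0_n\to p^0$ in $M^0$, \emph{independently} of $(x'_n)$. This reduces, via the kinematic lemma, to continuity of the tail integrals $t\mapsto\int_t^{b}\alpha_i(s)^{-1/2}\,ds$ for $i\leq k$ and to their divergence for $i\geq k+1$, combined with standard compact-exhaustion arguments in each complete $M_i$. The delicate point is that first-countability of the chronological topology need not hold in general, so the convergence criterion must be verified using the definition of the topology in terms of past-future sets rather than via sequences alone.
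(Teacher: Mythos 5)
First, a point of orientation: the paper does not prove this statement at all --- it is quoted verbatim from Harris \cite{H} as background, and the paper's own contribution is to re-derive and extend it (for $n=2$) via the machinery of Sections 3--6. So the fair comparison is with that machinery, and there the central tool is the \emph{joint} integral characterization of chronology (Prop.~\ref{c0}): $(t^o,x^o)\ll(t^e,x^e)$ iff there exist weights $\mu_1,\dots,\mu_n>0$, $\sum_i\mu_i=1$, with $\int_{t^o}^{t^e}\frac{\sqrt{\mu_i}}{\alpha_i}\bigl(\sum_k\mu_k/\alpha_k\bigr)^{-1/2}ds>d_i(x_i^o,x_i^e)$ for every $i$. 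Your ``fundamental estimate'' $d_i\le\int\alpha_i^{-1/2}$ is only the necessary half of this (obtained by taking $\mu_i=1$), and the converse you assert --- that per-fiber bounds can each be ``realised by a suitable timelike curve'' --- is false as a characterization: the separate bounds can all hold while the points fail to be chronologically related, because the fibers compete for the available $\dot t$ budget. Without the weighted version you cannot decide when one TIP is contained in another, which is exactly what parts (b) and (c) and the maximality arguments in the topology require. This is the missing key lemma.

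Two further steps would fail as written. In (a), a $t$-truncation of an inextendible timelike curve has a future endpoint inside $V$, so its past is a PIP, not a TIP, and produces no boundary element; what is needed is a \emph{second inextendible} curve reaching the same Cauchy-boundary point $x_j^*\in\partial^C M_j$ at an earlier $t$-value $\Omega'<\Omega$ (possible because the incomplete curve has finite length and can be traversed faster), after which the two TIPs $(\Omega',x^*)$ and $(\Omega,x^*)$ are timelike related --- this is the content of the paper's Props.~\ref{pastofcurve}, \ref{structuraparcialsininfinito} and Lemma~\ref{causalstructurenoinf}. In (b) you give no actual pair of null-related TIPs: the mechanism (as in the paper's Thm.~\ref{thm:main}(ii),(iv) and Lemma~\ref{lemma:aux5}) is that when a fiber with divergent warping integral is incomplete, distinct finite Busemann functions differing by a positive constant yield TIPs $P\subsetneq P'$ that are causally but not chronologically related (horismotic lines over $\partial_{\cal B}M_i$); ``the fingerprint of null-relatedness'' as you describe it is not an argument. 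Your outline of (c) and of the quotient topology is sound in spirit and close to what the paper does (cf.\ Thm.~\ref{futurestructurefiniteconditions}, Lemma~\ref{lemma:aux1}, Prop.~\ref{topcurvasfinitas}), but it too ultimately leans on the weighted characterization and, for the homeomorphism, on local compactness of the completions --- automatic here only because Harris assumes the fibers complete.
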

\noindent This result covers the case of warped products of anti-de Sitter with compact manifold, however it does not provide a complete description of the boundary when the product of de Sitter with non-compact manifolds is considered.




 \smallskip

 The aim of this paper is twofold. First, we develop a systematic study of the causal structure and global causality properties of multiwarped spacetimes. Then, we use this approach to describe in full detail the causal boundary of these spacetimes by considering some mild integral hypothesis on the warping functions. Our main results for the future causal boundary, Thms. \ref{futurestructurefiniteconditions} and \ref{futurecomploneinfinite}, not only include the cases covered by Thm. \ref{thm:harris} for the future causal boundary, but also some additional ones. Concretely, we are able to remove the completeness condition on the Riemannian factors, and we also include the case where just one warping integral is infinite ($k+1=n$). Moreover, we consider the total c-boundary in Section \ref{sec:totalcompletion}, i.e. the construction obtained when the future and past causal boundaries are merged, concluding in Theorem \ref{thm:main}.
 Finally, we also discuss some relevant examples where our results are applicable.

The paper is organized as follows. In Section \ref{sec:preliminaries} we consider some preliminaries about the c-completion of spacetimes, focusing on the particular case of Robertson-Walker models, which we are going to use later. In Section \ref{sec:chronologicalrelation} we establish characterizations for the chronological and causal relations in doubly warped spacetimes. Then, in Section \ref{sec:causalladder}, we determine the position of these spacetimes into the causal ladder.
After that, in Sections \ref{sec:futurecompletion}, \ref{ss6} and \ref{sec:totalcompletion}, we use the machinery developed before to make a systematic study of the c-boundary of doubly warped spacetimes.
Finally, in Section \ref{sec:applications}, we discuss the applicability of our results by considering several examples of interest: Kasner models, intermediate Reissner--Nordstr\"om, and de Sitter models with general internal spaces.

\section{Preliminaries}
\label{sec:preliminaries}
\subsection{The c-completion of spacetimes}
The {\em causal completion} of spacetimes is a conformally invariant construction which consists of adding {\em
ideal points} to a strongly causal spacetime in such a way that any timelike
curve in the original spacetime acquires some endpoint in the new
space \cite{GKP}. The {\em c-completion}, which is the concrete formalization of the causal completion that we are going to adopt in this paper, requires some preliminary notions.

Let $(V,g)$ be a spacetime. We say that a non-empty
subset $P\subset V$ is a {\em past set} if it coincides
with its past; i.e. $P=I^{-}(P):=\{p\in V: p\ll q\;\hbox{for
some}\; q\in P\}$. The {\em common past} of $S\subset V$ is
defined by $\downarrow S:=I^{-}(\{p\in V:\;\; p\ll q\;\;\forall
q\in S\})$. From construction, the past and common past sets are
open. When a past set $P$ cannot be written as the union of two proper
past sets, we say that $P$
is an {\em indecomposable past} set, {\em IP}. The indecomposable past sets can be classified in two major classes. On the one hand, the IPs which
coincide with the past of some point of the spacetime,
$P=I^{-}(p)$, $p\in V$, are called {\em proper indecomposable past
  sets}, {\em PIP}. On the other hand, the IPs which are obtained as the past of inextendible timelike curve $\gamma$, $P=I^-(\gamma)$, are called {\em terminal indecomposable past sets}, {\em TIPs}. The dual
notions, {\em future set}, {\em common future}, {\em IF}, {\em
TIF} and {\em PIF}, are defined just by interchanging the roles of
past and future in previous definitions.

In order to construct the {\em future} and {\em past c-completions}, first
we have to identify each {\em event} $p\in V$ with its PIP,
$I^{-}(p)$, and PIF, $I^{+}(p)$. To achieve this, we need to restrict our attention on {\em
distinguishing} spacetimes. On the other hand, in order to obtain consistent topologies for the
c-completions, we need to focus on a somewhat more restrictive class
of spacetimes, the {\em strongly causal ones} (see Defn. \ref{ant}). These are
characterized by the fact that the PIPs and PIFs constitute a
sub-basis for the topology of the manifold $V$.

%

Once the events $p\in V$ have been identified with their corresponding PIPs, we define the {\em future c-boundary} $\hat{\partial}V$ of $V$
as the set of all the TIPs in $V$, and  {\em the future
c-completion} $\hat{V}$ as the set of all the IPs:
\[
V\equiv \hbox{PIPs},\qquad \hat{\partial}V\equiv
\hbox{TIPs},\qquad\hat{V}\equiv \hbox{IPs}.
\]
Analogously, each $p\in V$ can be identified with its corresponding PIF,
$I^+(p)$. The {\em past c-boundary} $\check{\partial}V$ of $V$ is
defined as the set of all the TIFs in $V$, and  {\em the past
c-completion} $\check{V}$ is the set of all the IFs:
\[
V\equiv \hbox{PIFs},\qquad \check{\partial}V\equiv
\hbox{TIFs},\qquad\check{V}\equiv \hbox{IFs}.
\]

In order to merge the future and past c-boundaries together to form the (total) c-boundary, the so-called S-relation comes into
play \cite{Sz}.
Let $\hat{V}_{\emptyset}:=\hat{V}\cup \{\emptyset\}$ (resp.
$\check{V}_{\emptyset}:=\check{V}\cup \{\emptyset\}$), and define the
S-relation $\sim_S$ in $\hat{V}_{\emptyset}\times
\check{V}_{\emptyset}$ as follows: First, in the case $(P,F)\in \hat{V}\times
\check{V}$, \be \label{eSz}  P\sim_S F \Longleftrightarrow \left\{
\begin{array}{l}
P \quad \hbox{is included and is a maximal IP into} \quad
\downarrow F
 \\
F \quad \hbox{is included and is a maximal IF into} \quad \uparrow
P.
\end{array} \right.
\end{equation}
Here, {\em maximal} means that no other $P'\in\hat{V}$ (resp.
$F'\in \check{V}$) satisfies the stated property and contains
strictly $P$ (resp. $F$). As it was proved by Szabados in
\cite{Sz}, $I^-(p) \sim_S I^+(p)$ for all $p\in V$, and these are
the unique S-relations (according to our definition (\ref{eSz}))
involving proper indecomposable sets. In the case $(P,F)\in
\hat{V}_{\emptyset}\times \check{V}_{\emptyset}\setminus
\{(\emptyset,\emptyset)\}$, we also include \be \label{eSz2} P\sim_S
\emptyset, \quad \quad (\hbox{resp.} \; \emptyset \sim_S F )\ee if
$P$ (resp. $F$) is a (non-empty, necessarily terminal)
indecomposable past (resp. future) set that  is not S-related by
(\ref{eSz}) to any other indecomposable set (notice that
$\emptyset$ is never S-related to itself).

Now, we are in conditions to introduce
the notion of c-completion at the point set level, according to \cite{FHSFinalDef}:
\begin{defi}\label{d1}
The {\em c-completion} $\overline{V}$ of a strongly causal spacetime $V$ is formed by all
the pairs $(P,F)\in
\hat{V}_{\emptyset}\times\check{V}_{\emptyset}$ with
$P\sim_{S} F$. The {\em c-boundary} $\partial V$ is defined as
$\partial V:=\overline{V}\setminus V$, under the identification $V\equiv
\{(I^{-}(p),I^{+}(p)): p\in V\}$.
\end{defi}

The chronological relation $\ll$ of the spacetime is extended to
the c-completion  in the following way: Two points $(P,F),
(P',F')\in \overline{V}$ are {\em chronologically related},
$(P,F)\overline{\ll} (P',F')$, if $F\cap P'\neq\emptyset$.
The situation is remarkably more complicated if one tries to define the extension $\overline{\leqslant}$ of the causal relation $\leqslant$. However, in the particular case of the spacetimes treated in this paper, the following criterium suffices (see the discussion in
\cite[Sect. 6.4]{FHSBuseman}, and references therein, for further details). Given two points $(P,F),
(P',F')\in \overline{V}$ with, either $P\neq\emptyset$ or $F'\neq
\emptyset$:
\[
P\subset P'\;\;\hbox{and}\;\; F'\subset F \Rightarrow
(P,F)\overline{\leqslant} (P',F').
\]
Moreover, we will
say that two different pairs in $\overline{V}$ are {\em
horismotically related} if they are causally but not
chronologically related.

\smallskip

Finally, the topology of the spacetime is also extended to the
c-completion by means of the so-called {\em chronological topology} ({\em chr. topology}, for short). This is a {\em sequential} topology defined
in terms of the following
{\em limit operator} $L$ for $\overline{V}$ (see \cite[Section 2]{FHSHaus} for an introduction to sequential topologies): given a sequence
$\sigma=\{(P_{n},F_{n})\}_n\subset\overline{V}$,

\begin{equation}
  \label{eq:29}
(P,F)\in L(\sigma)\iff\left\{ \begin{array}{ccc} P\in \hat{L}(\{P_n\}_n) & \hbox{whenever} &  P\neq \emptyset\\ F\in \check{L}(\{F_n\}_n) & \hbox{whenever} & F\neq \emptyset, \end{array}\right.
\end{equation}
 where
\begin{equation}\label{limcrono}
\begin{array}{c}
\hat{L}(\{P_{n}\}_n):=\{P'\in\hat{V}: P'\subset {\mathrm
LI}(\{P_{n}\}_n)\;\;\hbox{and}\;\; P'\;\;\hbox{is a maximal IP into}\;\; {\mathrm LS}(\{P_{n}\}_n)\} \\
\check{L}(\{F_{n}\}_n):=\{F'\in\check{V}: F'\subset {\mathrm
LI}(\{F_{n}\}_n)\;\;\hbox{and}\;\; F'\;\;\hbox{is a maximal IF
into}\;\; {\mathrm LS}(\{F_{n}\}_n)\}
\end{array}
\end{equation}
(LI and LS are the usual point set inferior and superior limits of
sets).
Concretely, the {\em closed sets} for the chr. topology are those subsets $C\subset V$ satisfying that $L(\sigma)\subset C$ for any sequence $\sigma\subset C$. Note that a topology on the future (resp. past) c-completion
$\hat{V}$ (resp. $\check{V}$) can be defined in a similar way,
just by using the limit operator $\hat{L}$ (resp. $\check{L}$)
instead of $L$. In this case, the resulting topology, which also
extends the topology of the spacetime, is called the {\em future}
(resp. {\em past}) {\em chronological topology}.
\begin{rem}\label{propsimplepunt} {\rm We emphasize
the following natural properties about the chronological topology:
\begin{itemize}
\item[(1)] The chronological topology (as well as the future and
past ones) is sequential and $T_1$ (see \cite[Prop. 3.39 and 3.21]{FHSFinalDef}), but may
be non-Hausdorff.

\item[(2)] Clearly, if $(P,F)\in L(\{(P_{n},F_{n})\}_n)$ then
$\{(P_{n},F_{n})\}_n$ converges to $(P,F)$. When the  converse
happens, $L$ is called {\em of first order} (see \cite[Section
3.6]{FHSFinalDef}).

\item[(3)] Given a pair $(P,F)\in \partial V$, any timelike curve
defining $P$ (or $F$) converges to $(P,F)$ with the chronological
topology (see \cite[Th. 3.27]{FHSFinalDef}).
\end{itemize}
}
\end{rem}

\smallskip

There are several subtleties involving the definition of the c-boundary which are essentially associated to the following facts: on one hand, a TIP (or TIF) may not determine a unique pair in the c-boundary; on the other hand, the topology does not always agree with the $S$-relation, in the sense that, for $S$-related elements as above:

$$P\in \hat{L}(\{P_n\}_n)\not\Leftrightarrow F\in \check{L}(\{F_n\}_n).$$ This
makes natural to consider the following special cases:
\begin{defi}\label{simpletop}
A spacetime $V$ has a c-completion $\overline{V}$ which is {\em
simple as a point set} if each TIP (resp. each TIF) determines a
unique pair in $\partial V$.
Moreover, the c-completion
is {\em simple} if it is simple as a point set and also {\em
topologically simple}, i.e. $(P,F)\in L(\{(P_{n},F_{n})\}_n)$ holds when
either $P\in \hat{L}(\{P_{n}\}_n)$ or $F\in \check{L}(\{F_{n}\}_n)$.
\end{defi}


\subsection{Case of interest: Generalized Robertson-Walker model}\label{sec:Robertson}

Let us restrict our attention to the future c-completion of Robertson-Walker models. In order to obtain it, we will reproduce the study developed in \cite[Section 3]{FHSIso2} adapted to this particular setting.

\smallskip

Let $(V,g)$ be a {\em Generalized Robertson-Walker model}, that is, $V=(a,b)\times M$
and
\[
g=-dt^2+\alpha g_M,
\]
where $\alpha:(a,b)\rightarrow (0,\infty)$ is a positive smooth function and $(M,g_{M})$ is a Riemannian manifold. This spacetime will be denoted by $(a,b) \times_{\alpha} M$ for short. Assume that the warping function $\alpha$ satisfies the following integral condition:
\begin{equation}
  \label{eq:47}
  \int_{\C}^{b}\frac{1}{\sqrt{\alpha(s)}}ds=\infty,\quad a<\C<b.
\end{equation}

\begin{rem}
  The only difference between the spacetime model studied in \cite{FHSIso2} and the one considered here is that  the temporal component $\R$ and the integral conditions

  \[
   \int^{\infty}_{0}\frac{1}{\sqrt{\alpha(s)}}ds=\int^{0}_{-\infty}\frac{1}{\sqrt{\alpha(s)}}ds=\infty
    \]
  has been replaced by a general interval $(a,b)$ and just the integral condition \eqref{eq:47}.
Nevertheless, the results established in this section are easily deducible by simple adaptations of the corresponding proofs in \cite{FHSIso2}. We leave the details to the reader interested on the subject.
\end{rem}

The chronological relation can be characterized in terms of the warping function $\alpha$ and the distance $d$ associated to $(M,g_M)$ as follows:
\begin{equation}
  \label{eq:26}
  \begin{array}{rl}
    (t^o,x^o)\ll (t^e,x^e) \iff   & d(x^o,x^e)<\displaystyle\int_{t^o}^{t^e}\frac{1}{\sqrt{\alpha(s)}}ds \\ & \\
    \iff & \displaystyle \int_{\C}^{t^o}\frac{1}{\sqrt{\alpha(s)}}ds<\int_{\C}^{t^e}\frac{1}{\sqrt{\alpha(s)}}ds-d(x^o,x^e).
  \end{array}
\end{equation}
Take a future-directed timelike curve $\gamma:[\omega,\Omega)\rightarrow V$, which can be expressed without loss of generality as $\gamma(t)=(t,c(t))$. The function
\begin{equation}
  \label{eq:25}
t\mapsto \int_{\C}^{t}\frac{1}{\sqrt{\alpha(s)}}ds-d(\cdot,c(t))
\end{equation}
is increasing with $t$ (see \cite[Prop. 3.1]{FHSIso2} for details). In particular, from \eqref{eq:26},
  \begin{equation}
    \label{eq:27}
  \begin{array}{rl}
      P=& I^-(\gamma)\\ = & \left\{(t^o,x^o)\in V:\displaystyle\int_{\C}^{t^o}\frac{1}{\alpha(s)}ds<\lim_{t^e\rightarrow \Omega}\left(\int_{\C}^{t}\frac{1}{\sqrt{\alpha(s)}}ds-d(x^o,c(t)) \right)\right\}.
    \end{array}
  \end{equation}
    Therefore, $P=P(b_{c})$, where $b_{c}(\cdot):=\lim_{t\rightarrow \Omega}\left(\displaystyle\int_{\C}^{t}\frac{1}{\sqrt{\alpha(s)}}ds-d(\cdot,c(t)) \right)$ is the {\em Busemann function associated to the curve $c$} and
    \begin{equation}
      \label{eq:28}
     P(f):=\{ (t^o,x^o)\in V:\int_{\C}^{t^o}\frac{1}{\sqrt{\alpha(s)}}ds<f(x^o)\}.
   \end{equation}

      Summarizing, the future c-completion $\hat{V}$, i.e. the set of all IPs, can be identified with the set of all Busemann functions on $M$. So, if we denote by $B(M)$ the set of all finite Busemann functions, it follows that
      \[
\hat{V}\equiv B(M)\cup \{\infty\},
        \]
where $\infty$ represents the constantly infinite Busemann function, which is associated to the TIP $P(\infty)=V=i^+$.

        Next, let us write $\hat{V}=\left(\hat{V}\setminus \hat{\partial}^{\B}V\right) \cup \hat{\partial}^{\ncambios{\B}}V$, where $\hat{\partial}^{\ncambios{\B}}V$ denotes the TIPs obtained from inextensible future-directed timelike curves with divergent timelike component ($\Omega=b$). The finite Busemann functions associated to these curves are called {\em proper}, and the set of all of them is denoted by ${\cal B}(M)$. So,
        \[
\hat{\partial}^{\ncambios{\B}}V\equiv {\cal B}(M)\cup \{\infty\}.
          \]


        In order to rewrite this set in a more appealing way, consider the quotient space
        \[
 \partial_{\cal B} M:={\cal B}(M)/\R
            \]
            where  two Busemann functions are $\R$-related if they differ only by a constant.
            Then, we can write
            \[
\hat{\partial}^{\ncambios{\B}} V\equiv \left(\R\times\partial_{\cal B} M \right)\cup \{\infty\},
              \]
              and so, we can see the future c-boundary as a cone with base $\partial_{\cal B} M$ and apex $\{\infty\}$. This picture is reinforced by the fact that the generatrix lines of the cone are shown to be horismotic, that is, each couple of points on the same generatrix line are horismotically related (see \cite[Section 3]{FHSIso2}).


       The remaining set $\left(\hat{V}\setminus \hat{\partial}^{\ncambios{\B}}V\right)$ is formed by IPs obtained as the past of future-directed timelike curves $\gamma:[\omega,\Omega)\rightarrow V$, $\gamma(t)=(t,c(t))$, with $\Omega<b$. It can be proved that, in this case, $c(t)\rightarrow x^*\in M^C$, where $M^C$ denotes the Cauchy completion of $(M,g_{M})$, and so,

         \begin{equation}
           \label{eq:46}
b_{c}(\cdot)=d_{(\Omega,x^*)}(\cdot):=\int_{\C}^{\Omega}\frac{1}{\sqrt{\alpha(s)}}ds-d(\cdot,x^*)
         \end{equation}
(see \cite[(3.7)]{FHSIso2}). In conclusion, we have the following identification
       \begin{equation}
            \label{eq:19}
       \hat{V}\setminus \hat{\partial}^{\ncambios{\B}}V\equiv (a,b) \times M^C,
          \end{equation}
          which implies,
          \[
\hat{V}=
\left(\hat{V}\setminus \hat{\partial}^{\ncambios{\B}}V\right)\cup \hat{\partial}^{\ncambios{\B}}V
\equiv
\left((a,b)\times M^C\right) \cup \left(\R\times \partial_{\cal B} M \right) \cup \{\infty\}.
            \]
            and
            \[
            \hat{\partial}V=((a,b) \times\partial^C M)\cup (\R\times \partial_{{\cal B}}M)\cup \{\infty\}.
            \]

Note that, given $P= P(b_{c})$ and $P_n=P(b_{c_n})$,
              \begin{equation}
                \label{eq:50}
                \begin{array}{c}
                P\subset {\rm LI}(\{P_n\}_n) \iff b_{c}\leq \liminf_n(\{b_{c_n}\}_n)\\
                \left(\hbox{resp. }P\subset {\rm LS}(\{P_n\}_n) \iff b_{c}\leq \limsup_n(\{b_{c_n}\}_n)\right).
               \end{array}
              \end{equation}
              This property joined to the identification between $\hat{V}$ and $B(M)$ described above, suggests to translate
            the future chronological topology on $\hat{V}$ into a (sequential) topology on $B(M)$, which is also called {\em future chronological topology}
            (see \cite[Section 3.3]{FHSIso2}. The limit operatior for this topology, also denoted by $\hat{L}$, is defined as follows: 
            \begin{equation}
              \label{eq:22}
              f\in \hat{L}(\{f_n\}_n) \iff \left\{
                \begin{array}{l}
                  (a)\;\; f\leq {\liminf}_n f_n \hbox{ and}\\
                  (b)\;\; \forall g\in B(M) \hbox{ with $f\leq g\leq \limsup_n f_n,$ it is $g=f$.}
                \end{array}\right.
            \end{equation}

            The following result establishes the relation between this topology and the pointwise topology on $B(M)$ (see \cite[Prop. 3.2]{FHSIso2} and \cite[Prop. 5.29]{FHSBuseman}):
            \begin{prop}\label{prel:PropToponefibre}
              Consider $\{f_n\}_n\subset B(M)$ a sequence which converges pointwise to a function $f\in B(M)$. Then, $f$ is the unique future chronological limit of $\{f_n\}_n$. In particular, if $\{f_n\}_n=\{d_{(\Omega^n,x^n)}\}_n$ with $\Omega^n\rightarrow \Omega$ and $x^n\rightarrow x (\in M^C)$, then $f=d_{(\Omega,x)}\in \hat{L}(\{f_n\}_n)$ is the unique future chronological limit of $\{f_n\}_n$.

              Moreover, if $M^C$ is locally compact, then the following converse follows: if $f=d_{(\Omega,x)}\in \hat{L}(\{f_n\}_n)$, then for $n$ big enough $\{f_n\}_n=\{d_{(\Omega^n,x^n)}\}_n$ for some $\Omega^n\in \R$ and $x^n\in M^C$ satisfying that $\Omega^n\rightarrow \Omega$ and $x^n\rightarrow x (\in M^C)$.
            \end{prop}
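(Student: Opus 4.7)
The plan is to unpack the definition of the future chronological limit operator $\hat{L}$ in \eqref{eq:22} and verify each of the three assertions separately, with the converse under local compactness being the subtle step.

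For the forward direction, if $f_n \to f$ pointwise in $B(M)$ then $\liminf_n f_n = f = \limsup_n f_n$. Condition~(a) of \eqref{eq:22} is trivial, and condition~(b) reduces to observing that any $g \in B(M)$ squeezed between $f$ and $f$ equals $f$; hence $f \in \hat{L}(\{f_n\}_n)$. For uniqueness, if $f' \in \hat{L}(\{f_n\}_n)$ then $f' \leq \liminf_n f_n = f$, and applying (b) to $f'$ with $g := f$ forces $f' = f$. For the special case $f_n = d_{(\Omega^n, x^n)}$ with $\Omega^n \to \Omega$ and $x^n \to x \in M^C$, I would deduce pointwise convergence $f_n \to d_{(\Omega, x)}$ directly from \eqref{eq:46}, using continuity of $t \mapsto \int_\C^t (\alpha(s))^{-1/2}\, ds$ and of the Cauchy-extended distance, and then invoke the forward direction.

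For the converse under local compactness of $M^C$, I would proceed in two stages. Stage one: rule out that infinitely many $f_n$ are proper Busemann functions in $\mathcal{B}(M)$. The crucial dichotomy is that every $b_c \in \mathcal{B}(M)$ satisfies $\sup b_c = +\infty$, as seen by evaluating at $y = c(t_0)$ and using the timelike bound $d(c(t_0), c(t)) < \int_{t_0}^t (\alpha(s))^{-1/2}\, ds$ to obtain $b_c(c(t_0)) \geq \int_\C^{t_0} (\alpha(s))^{-1/2}\, ds \to \infty$ as $t_0 \to b$ by \eqref{eq:47}, whereas $\sup f = \int_\C^\Omega (\alpha(s))^{-1/2}\, ds < \infty$. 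From this dichotomy and the maximality clause (b), I would exhibit a function $g \in B(M)$ strictly larger than $f$ but still dominated by $\limsup_n f_n$ (for instance a suitably truncated translate of a tail proper Busemann function), contradicting $f \in \hat{L}(\{f_n\}_n)$. Stage two: once $f_n = d_{(\Omega^n, x^n)}$ for all $n$ large, the same maximality argument keeps $\sup f_n = \int_\C^{\Omega^n} (\alpha(s))^{-1/2}\, ds$ bounded, so $\Omega^n$ stays away from $b$; local compactness of $M^C$ then allows extracting, from any subsequence, a sub-subsequence with $(\Omega^{n_k}, x^{n_k}) \to (\Omega', x') \in (a, b] \times M^C$. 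By the already-proved special case, $d_{(\Omega', x')}$ is the unique future chronological limit of $\{f_{n_k}\}_k$; since $\hat{L}$ evidently passes to subsequences, $f = d_{(\Omega', x')}$ and hence $(\Omega', x') = (\Omega, x)$. The usual subsequence principle then yields $(\Omega^n, x^n) \to (\Omega, x)$ along the full sequence.

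The main obstacle is Stage~one: the dichotomy is conceptually clear, but turning it into a concrete contradiction with the maximality condition (b) requires carefully designing an auxiliary $g \in B(M)$ strictly dominating $f$ while remaining below $\limsup_n f_n$. Once this point is settled, the rest of the argument is essentially mechanical, with local compactness entering only to guarantee that the bounded sequence $\{(\Omega^n, x^n)\}_n \subset (a,b] \times M^C$ admits the required convergent subsequences.
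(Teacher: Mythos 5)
The paper does not actually prove Prop.~\ref{prel:PropToponefibre}; it cites \cite[Prop.~3.2]{FHSIso2} and \cite[Prop.~5.29]{FHSBuseman}, and its closest in-text analogue is Prop.~\ref{topcurvasfinitas}, whose proof proceeds by taking a pre-compact neighbourhood $U$ of the limit point, showing that the chains generating the $P_n$ must eventually enter $U$, and otherwise extracting a limit $q\in\partial\overline{U}$ of boundary-crossing points to build an IP $P'$ with $P\subsetneq P'\subset\mathrm{LS}(\{P_n\}_n)$, contradicting maximality. Your forward direction (pointwise convergence $\Rightarrow$ unique chronological limit, and the special case via continuity of \eqref{eq:46}) is correct and essentially the standard argument; in particular the uniqueness step, taking $g:=f$ in clause (b) of \eqref{eq:22} applied to a competing limit $f'$, is fine.

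The converse, however, has two genuine gaps, both at the point where you invoke ``the maximality clause.'' First, in Stage one the dichotomy $\sup b_c=+\infty$ for $b_c\in{\cal B}(M)$ versus $\sup f=\int_{\C}^{\Omega}\alpha(s)^{-1/2}ds<\infty$ is correct but does not by itself yield the required $g\in B(M)$ with $f\lneq g\leq\limsup_n f_n$: knowing that infinitely many $f_n$ are unbounded above tells you nothing pointwise about $\limsup_n f_n$ beyond $\limsup_n f_n\geq f$, and a candidate such as $d_{(\Omega+\epsilon,x)}=f+\int_{\Omega}^{\Omega+\epsilon}\alpha(s)^{-1/2}ds$ needs $\limsup_n f_n$ to exceed $f$ by a uniform positive amount everywhere, which you have not established. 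Second, and more seriously, Stage two asserts that local compactness of $M^C$ lets you extract a convergent subsequence of $\{(\Omega^n,x^n)\}_n$; but a complete, locally compact metric space need not be proper, so boundedness of $d(x^n,x)$ does not give subconvergence of $\{x^n\}_n$. What is actually needed is that the $x^n$ eventually lie in a fixed compact subset (equivalently, in the closure of a pre-compact neighbourhood of $x$), and proving this is precisely the content of the boundary-crossing argument of Prop.~\ref{topcurvasfinitas} / \cite[Prop.~5.24]{FHSBuseman} --- the same argument that also disposes of your Stage one. In short, you have correctly identified where maximality must be used, but both times you defer the construction that makes it work; the proof is not complete without the pre-compact-neighbourhood argument.
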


            \smallskip

            Finally, note that the study of the past c-completion is very similar, just with some minor changes. First, the following integral condition is imposed:
              \[
\int_{a}^\C \frac{1}{\sqrt{\alpha(s)}}ds=\infty.
                \]
              Given a past-directed timelike curve $\gamma:[\omega,-\Omega)\rightarrow V$, $\gamma(t)=(-t,c(t))$, then $I^+(\gamma)=F(-b^-_{c})$ where
            \[
F(f):=\{ (t^o,x^o)\in V:\int_{\C}^{t^o}\frac{1}{\sqrt{\alpha(s)}}ds>f(x^o)\}.
              \]
              Moreover, the backward Busemann functions are written now as
              \[
              b^-_{c}(\cdot):=\lim_{t\rightarrow -\Omega}\left(\int^{\C}_{-t}\frac{1}{\sqrt{\alpha(s)}}ds -d(\cdot,c(t))\right). 
              \]
              The space of finite backward Busemann functions coincide with $B(M)$, so there is a natural bijection between the future and past c-completions. Moreover,  when $a<-\Omega$, then $c(t)\rightarrow x^*\in M^C$ and the backward Busemann function becomes (compare with \eqref{eq:46})
              \begin{equation}
                \label{eq:48}
                b^-_{c}(\cdot)=d^-_{(\Omega,x^*)}(\cdot)=\int^{\C}_{\Omega}\frac{1}{\sqrt{\alpha(s)}}ds-d(\cdot,x^*).
              \end{equation}
In conclusion, one deduces
\[
\check{V}\setminus \check{\partial}^{a}V\equiv (a,b) \times M^C,
  \]
  and then,
  \[\begin{array}{rl}
      \check{V} \equiv & B(M)\cup \{-\infty\}   \\
         \equiv  & \left( (a,b) \times M^C  \right) \cup \left({\cal B}(M)\cup \{-\infty\} \right) \\ \equiv & \left( (a,b) \times M^C  \right) \cup \left(\R\times \partial_{\cal B}(M)\right) \cup \{-\infty\}.
  \end{array}
  \]


\section{The causal structure of doubly warped spacetimes}
\label{sec:chronologicalrelation}
%

In this section we are going to characterize the chronological and causal relations in doubly warped spacetimes. First, recall that a {\em \multiwarped spacetime} is a multiwarped spacetime $(V,g)$ as in (\ref{eqqq}) with two fibers ($n=2$), that is,

\begin{equation}
  \label{eq:1-aux}
  V:= (a,b)\times M_{1} \times M_{2}\quad\hbox{and}\quad
g=-dt^{2}+\alpha_{1}g_{1}+\alpha_{2}g_{2}.
\end{equation}

Take $(t^e,x^e)\in V$ and $x^o\in M:=M_1\times M_2$. Denote by $C(x^o,x^e)$ the set of smooth curves in $M$ connecting $x^o$ with $x^e$. Given $c=(c_1,c_2)\in C(x^o,x^e)$, consider the unique future-directed lightlike curve $\rho:[s^o,s^e]\rightarrow V$ with $\rho(s)=(\tau_{c,t^e}(s),c(s))$ and $\tau_{c,t^e}(s^e)=t^e$. From the metric expression in (\ref{eq:1-aux}), the component $\tau_{c,t^e}(s)$ is determined by the Cauchy problem
\[
-\dot{\tau}_{c,t^e}^2+\alpha_1(\tau_{c,t^e})g_{1}(\dot{c}_1,\dot{c}_1)+\alpha_2(\tau_{c,t^e})g_{2}(\dot{c}_2,\dot{c}_2)=0,\qquad
    \tau_{c,t^e}(s^e)=t^e.
  \]
Consider the functional
\[\J_{x^o,(t^e,x^e)}: C(x^o,x^e) \rightarrow (a,b), \quad c \mapsto \tau_{c,t^e}(s^o).\]
A direct computation shows that $(t^o,x^o)\ll (t^e,x^e)$ if, and only if, there exists $c\in C(x^o,x^e)$ such that $t_o<\J_{x^o,(t^e,x^e)}(c)$. This property suggests the following definition for the {\em departure time function}:
\[
T:M\times \left((a,b)\times M\right)\rightarrow (a,b),\qquad T(x^{o},(t^{e},x^{e})):= {Sup}_{C}\J_{x^o,(t^e,x^e)}
\]
(compare with \cite[Section 2.9]{Perlick2004} and \cite[Section 4]{FS2}). By construction, this function characterizes the chronological relation in $(V,g)$, as follows:
\begin{equation}\label{e0}
(t^{o},x^{o}) \ll (t^{e},x^{e}) \;\; \Longleftrightarrow \;\;
t^{o}<T(x^{o},(t^{e},x^{e})).
\end{equation}
In particular, the chronological past of a given point $(t^e,x^e)$ is given by
\[
I^-\left((t^e,x^e) \right):=\{(t,x)\in (a,b)\times M: t<T(x,(t^e,x^e)) \}.
  \]
Given a future-directed timelike curve $\gamma(t)=(t,c(t))$, $t\in [\omega,\Omega)$, and a point $x\in M$, the transitivity of the chronological relaction $\ll$ ensures that the function $T(x,\gamma(t))$ is increasing on $t$. Hence, the chronological past of $\gamma$ can be written as
  \[
I^-(\gamma)=\{(s,x)\in (a,b)\times M: s<b_c(x):=lim_{t\rightarrow b}T(x,\gamma(t))\}.
    \]

Next, let us characterize the departure time function, and so, the chronological relations (recall (\ref{e0})), in terms of some integral conditions involving the warping functions $\alpha_i$ and the Riemannian distances $d_i$ associated to the fibers $(M_i,g_i)$, $i=1,2$.
To this aim, let us consider a future-directed causal curve $\gamma:  I \rightarrow V$,
$\gamma(s)=(t(s),c_{1}(s),c_{2}(s))$. From the metric expression in \eqref{eq:1-aux}:
\[
\frac{dt}{ds}(s)=\sqrt{-D+\frac{\mu_{1}}{\alpha_{1}\circ
t}+\frac{\mu_{2}}{\alpha_{2}\circ t}}(s),
\]
where $D:=g(d\gamma/ds,d\gamma/ds)\leq 0$ and
$\mu_{i}:=(\alpha_{i}\circ t)^2 g_{i}(dc_{i}/ds,dc_{i}/ds)$,
$i=1,2$. From the Inverse Function Theorem, previous formula translates into
\[
\frac{ds}{dt}(t)=\left(-(D\circ s)+\frac{\mu_{1}\circ s}{\alpha
_{1}}+\frac{\mu_{2}\circ s}{\alpha _{2}}\right)^{-1/2}.
\]
Therefore, if we denote $t^{o}=t(s^{o})$, $t^{e}=t(s^{e})$, we
deduce
\begin{equation}\label{eq:3}
\begin{array}{c}
\hbox{length}\left(c_{i}\mid_{[s^{o},s^{e}]}\right)=\int_{s^{o}}^{s^{e}}\sqrt{g_{i}(\dot
c_{i}, \dot c_{i})} ds=\int_{t^o}^{t^e}\sqrt{g_{i}(\dot c_{i}, \dot
c_{i})} \frac{ds}{dt} dt \qquad\qquad\qquad\quad \\
\;\quad\qquad\qquad\qquad =\int_{t^o}^{t^e}\frac{\sqrt{\mu_{i}\circ
s}}{\alpha_i(t)}\left(-(D\circ s)+\frac{\mu_{1}\circ
s}{\alpha_{1}(t)}+\frac{\mu_{2}\circ s}{\alpha_{2}(t)}\right)^{-1/2}dt
\qquad\hbox{for}\;\; i=1,2.
\end{array}
\end{equation}
In the particular case of being $\gamma$ a lightlike geodesic we have: (i) $D=0$ (lightlike character of $\gamma)$, (ii) $\mu_i\circ s$ are constants and (iii) $c_i$ are (pre-)geodesics on the corresponding Riemannian manifold $(M_i,g_i)$ (geodesic character of $\gamma$). So, from (\ref{eq:3}), one deduces (see \cite[Theorem 2]{FS} for details):
\begin{prop}\label{thm:characluzgeodesics}
  Let $(V,g)$ be a {\multiwarped} spacetime as in (\ref{eq:1-aux}) with (weakly) convex fibers (i.e., satisfying that any pair of points can be joined by some minimizing geodesic). Consider two distinct points $(t^o,x_1^o,x_2^o),(t^e,x_1^e,x_2^e)\in V$ with $t^o<t^e$. Then, the following statements are equivalent:
  \begin{itemize}
  \item[(a)] There exists a lightlike geodesic joining $(t^o,x_1^o,x_2^o)$ and $(t^e,x_1^e,x_2^e)$.
  \item[(b)] There exist $\mu_1,\mu_2 \ncambios{\geq}0$ with $\mu_1+\mu_2=1$ such that
    \[
\Integral{t^o}{t^{e}}{\mu_{i}}{i}{\mu_{k}}=d_{i}(x^{o}_{i},x^{e}_{i})\qquad\hbox{for}\;\;
i=1,2;
      \]

  \end{itemize}

\end{prop}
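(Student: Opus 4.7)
The plan is to use the warped-product geodesic structure together with the length identity \eqref{eq:3} as the bridge between (a) and (b). Recall that for a geodesic $\gamma(s)=(t(s),c_1(s),c_2(s))$ of the doubly warped metric, the quantities $\mu_i:=\alpha_i(t)^2 g_i(\dot c_i,\dot c_i)$ are conserved along $\gamma$ and each $c_i$ is a pre-geodesic of $(M_i,g_i)$. An affine rescaling of the parameter multiplies $\mu_1$ and $\mu_2$ by a common positive factor, so we may normalize $\mu_1+\mu_2=1$ without affecting the geometric trajectory.

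For the direction (b) $\Rightarrow$ (a), the plan is to build the geodesic explicitly. Using weak convexity, pick unit-speed minimizing geodesics $\bar c_i\colon[0,d_i]\to M_i$ from $x_i^o$ to $x_i^e$. Solve the scalar Cauchy problem
\begin{equation*}
\dot t(s)=\sqrt{\mu_1/\alpha_1(t(s))+\mu_2/\alpha_2(t(s))},\qquad t(s^o)=t^o,
\end{equation*}
and define $\phi_i(s):=\int_{s^o}^s \sqrt{\mu_i}/\alpha_i(t(s'))\,ds'$. Set $\gamma(s):=(t(s),\bar c_1(\phi_1(s)),\bar c_2(\phi_2(s)))$. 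By the defining ODE for $t$, the metric applied to $\dot\gamma$ vanishes, so $\gamma$ is null; and by construction $\alpha_i(t)^2 g_i(\dot c_i,\dot c_i)\equiv\mu_i$. Since each $\bar c_i$ is a geodesic of $(M_i,g_i)$ and the $\mu_i$ are constant, $\gamma$ is a null pre-geodesic, hence a null geodesic after an affine reparametrization. The change of variable $s\mapsto t$ in the definition of $\phi_i$, combined with the integral identity in (b), yields $\phi_i(s^e)=d_i$ precisely when $t(s^e)=t^e$, so the endpoints are matched.

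For the converse (a) $\Rightarrow$ (b), I would extract the conserved quantities $\mu_i$ from the given null geodesic, normalize them as above, and specialize the length formula \eqref{eq:3} to the case $D=0$. This shows that the length of $c_i|_{[s^o,s^e]}$ in $(M_i,g_i)$ equals the integral on the right-hand side of (b). It then remains to verify that this length is in fact $d_i(x_i^o,x_i^e)$, i.e.\ that the fiber-projections of a null geodesic of a doubly warped spacetime are minimizing between their endpoints. I expect this to be the main obstacle: it can either be imported from \cite[Theorem 2]{FS}, or deduced by a direct comparison argument in which a non-minimizing projection $c_j$ is replaced by a minimizing one (available by weak convexity) while keeping the same $\mu_i$; this produces a faster causal curve reaching $x^e$ at some time $t^{e'}<t^e$, yielding a chronological relation that contradicts the null-geodesic character of $\gamma$.
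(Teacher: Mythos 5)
Your direction (b)$\Rightarrow$(a) is correct and is essentially the construction the paper itself uses (inside the proof of Prop.~\ref{c0}, implication (iii)$\Rightarrow$(i)); note that the paper does not actually write out a proof of Prop.~\ref{thm:characluzgeodesics}, but only derives \eqref{eq:3} and defers to \cite[Theorem 2]{FS}. Your constructed curve is in fact already an affinely parametrized null geodesic (the ODE $\dot t=\sqrt{\textstyle\sum_i\mu_i/\alpha_i(t)}$ together with $\dot\phi_i=\sqrt{\mu_i}/\alpha_i(t)$ reproduces exactly the geodesic equations of the doubly warped metric), so your final affine reparametrization is unnecessary but harmless; the degenerate case $\mu_i=0$ also goes through since then $d_i=0$ and the $i$-th projection is constant.

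The gap is in (a)$\Rightarrow$(b), and it sits exactly where you suspected. Your fallback argument --- replace a non-minimizing projection $c_j$ by a minimizing one with the same $\mu_i$, obtain a chronological relation, and ``contradict the null-geodesic character of $\gamma$'' --- does not close: the existence of a null geodesic from $p$ to $q$ is not incompatible with $p\ll q$. A null geodesic only fails to connect chronologically related points when it lies on $\partial I^{+}(p)$, and that is not part of hypothesis (a). In fact the implication (a)$\Rightarrow$(b) as literally stated fails for general weakly convex fibers: in the product $\R\times\mathbb{S}^{1}\times\R$ with $\alpha_1=\alpha_2=1$, the null geodesic $s\mapsto\bigl(t^{o}+s\sqrt{4\pi^{2}+1},\,\theta_0+2\pi s,\,s\bigr)$, $s\in[0,1]$, joins $(t^{o},\theta_0,0)$ to $(t^{e},\theta_0,1)$ with $t^{e}-t^{o}=\sqrt{4\pi^{2}+1}$, yet (b) would force $\mu_1=0$ (since $d_1=0$) and hence $t^{e}-t^{o}=d_2=1$, a contradiction. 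What \eqref{eq:3} actually yields --- and what \cite[Theorem 2]{FS} asserts --- is the equivalence with the integrals equal to the \emph{lengths of some connecting geodesics} in $(M_i,g_i)$, not to the distances $d_i$. The distance version holds precisely when the fiber projections are minimizing, which is automatic in the two places where the paper invokes the proposition (fibers equal to $\R$ in the proof of Prop.~\ref{c0}, and a null geodesic lying on the boundary of a chronological future in Thm.~\ref{causi}); in the latter situation your replacement argument does become a valid proof, because there the chronological relation you produce genuinely contradicts $(t^{e},x^{e})\notin I^{+}\bigl((t^{o},x^{o})\bigr)$.
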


We are now in conditions to establish the characterization of the chronological relation.
\begin{prop}\label{c0}
Let $(V,g)$ be a \multiwarped spacetime as in (\ref{eq:1-aux}), and $(t^{o},x^{o}), (t^{e},x^{e})\in V$ with $x^{o}\neq
x^{e}$. The following conditions are equivalent:
\begin{itemize}

\item[(i)]  $(t^{o},x^{o})\ll (t^{e},x^{e})$; or, equivalently, $t^o<T(x^o,(t^e,x^e))$ (recall (\ref{e0}));
\item[(ii)] $T(x^o,(t^e,x^e))$ is the unique real value $T\in (a,b)$
with $t^{o}<T<t^{e}$ such that, for some (unique) positive constants $\mu_{1},\mu_2 \ncambios{\geq}
0$, with $\mu_{1}+\mu_{2}=1$, it satisfies
\begin{equation}\label{ee2}
\Integral{T}{t^{e}}{\mu_{i}}{i}{\mu_{k}}=d_{i}(x^{o}_{i},x^{e}_{i})\qquad\hbox{for}\;\;
i=1,2;
\end{equation}

\item[(iii)] there exist strictly positive constants $\mu'_{1},\mu'_{2}> 0$, with $\mu'_1+\mu'_2=1$,
such that
\begin{equation}\label{ee2''}
\Integral{t^{o}}{t^{e}}{\mu_{i}'}{i}{\mu_{k}'}>
d_{i}(x^{o}_{i},x^{e}_{i})\qquad\hbox{for $i=1,2$}.
\end{equation}
\end{itemize}
\end{prop}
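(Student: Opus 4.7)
I would prove the equivalence via the cycle (i)$\Leftrightarrow$(ii)$\Rightarrow$(iii)$\Rightarrow$(i).

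For (i)$\Leftrightarrow$(ii): by the construction of $T$ recalled in \eqref{e0}, (i) is literally $t^o<T(x^o,(t^e,x^e))$, so I only need to identify $T(x^o,(t^e,x^e))$ with the value characterized by the integral system. The supremum defining $T(x^o,(t^e,x^e))$ must be attained by a lightlike geodesic, since any strictly causal (non-lightlike) extremizing curve would admit a perturbation strictly increasing $\J_{x^o,(t^e,x^e)}$. Feeding the resulting lightlike geodesic joining $(T(x^o,(t^e,x^e)),x^o)$ to $(t^e,x^e)$ into Proposition \ref{thm:characluzgeodesics} yields the equalities \eqref{ee2} for some $\mu_i\geq 0$ with $\mu_1+\mu_2=1$. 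Uniqueness of $T$: two distinct admissible values $T_1<T_2$ would give lightlike geodesics to $(t^e,x^e)$ from both $(T_1,x^o)$ and $(T_2,x^o)$, forcing $(T_1,x^o)\ll(T_2,x^o)\leq(t^e,x^e)$ and hence $(T_1,x^o)\ll(t^e,x^e)$, contradicting that $T_1$ already saturates the sup. Given $T$, uniqueness of $(\mu_1,\mu_2)$ follows because the one-dimensional constraint $\mu_1+\mu_2=1$ is pinned down by either of the two integral equations.

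For (ii)$\Rightarrow$(iii): since $t^o<T$ and the integrands in \eqref{ee2} are strictly positive, the integral from $t^o$ to $t^e$ strictly exceeds $d_i$. If both $\mu_i>0$, set $\mu'_i=\mu_i$. Otherwise, say $\mu_1=0$ (so $\mu_2=1$ and $d_1=0$); set $\mu'_1=\delta$, $\mu'_2=1-\delta$ for small $\delta>0$. Continuity of the integrand in $(\mu_1,\mu_2)$ at $(0,1)$ preserves the strict $i=2$ inequality, and the $i=1$ inequality is trivial since $d_1=0$.

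For (iii)$\Rightarrow$(i), the substantive step, I would construct an explicit future-directed timelike curve from $(t^o,x^o)$ to $(t^e,x^e)$. Writing $L_i$ for the left-hand side of \eqref{ee2''}, so $L_i>d_i$, and the pointwise weights $w_i(t):=(\mu'_i/\alpha_i(t))/\sum_k(\mu'_k/\alpha_k(t))$ (so $w_1+w_2\equiv 1$), take a curve $c_i:[t^o,t^e]\to M_i$ joining $x_i^o$ and $x_i^e$ (a reparametrized minimizing geodesic, or an $\epsilon$-almost-minimizer if no minimizer exists) with
\[
|\dot c_i|_{g_i}(t)=\frac{d_i}{L_i}\cdot\frac{\sqrt{\mu'_i}}{\alpha_i(t)}\Big(\sum_{k=1}^{2}\frac{\mu'_k}{\alpha_k(t)}\Big)^{-1/2};
\]
its total $g_i$-length is exactly $d_i$, so such a $c_i$ exists. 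Then for $\gamma(t):=(t,c_1(t),c_2(t))$,
\[
g(\dot\gamma,\dot\gamma)(t)=-1+\sum_i\alpha_i(t)|\dot c_i|^2_{g_i}(t)=-1+\sum_i(d_i/L_i)^2 w_i(t)\leq -1+\max_i(d_i/L_i)^2<0,
\]
so $\gamma$ is future-directed timelike from $(t^o,x^o)$ to $(t^e,x^e)$, giving (i).

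The main obstacle is the (iii)$\Rightarrow$(i) construction: I must simultaneously achieve $g_i$-length $d_i$ in each fiber while keeping $\gamma$ timelike pointwise, using only the scalar integral hypothesis \eqref{ee2''}. The chosen ansatz succeeds because after rescaling by $d_i/L_i<1$, the expression $\sum_i\alpha_i|\dot c_i|^2_{g_i}$ reduces to a pointwise convex combination of the sub-unital factors $(d_i/L_i)^2$, uniformly controlled by $\max_i(d_i/L_i)^2<1$. A minor technical caveat handled above is that $(M_i,g_i)$ need not admit a minimizing geodesic joining $x_i^o$ and $x_i^e$; replacing $d_i$ by $d_i+\epsilon$ in the construction and letting $\epsilon\to 0$ circumvents it.
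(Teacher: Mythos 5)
Your (ii)$\Rightarrow$(iii) step matches the paper's, and your (iii)$\Rightarrow$(i) construction is correct and in fact a nice alternative to the paper's: the paper builds a \emph{lightlike} curve starting from $(t^o+\epsilon,x^o)$ and concludes by transitivity, whereas you directly produce a timelike curve by rescaling the fiber speeds by $d_i/L_i$ and observing that $\sum_i\alpha_i|\dot c_i|^2_{g_i}$ becomes a convex combination of the factors $(d_i/L_i)^2<1$; your $\epsilon$-almost-minimizer caveat handles non-convex fibers adequately.

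The genuine gap is in (i)$\Rightarrow$(ii). You assert that ``the supremum defining $T(x^o,(t^e,x^e))$ must be attained by a lightlike geodesic,'' but Proposition \ref{c0} imposes \emph{no} convexity or completeness hypothesis on the fibers $(M_i,g_i)$, and in that generality the supremum need not be attained by any curve in $V$: there may be no lightlike geodesic from $(T,x^o)$ to $(t^e,x^e)$ at all, and even if one exists its fiber components need not be minimizing, so Proposition \ref{thm:characluzgeodesics} (which explicitly assumes weakly convex fibers) does not apply to $V$. This is not a hypothetical worry --- the example after Theorem \ref{causi} exhibits a fiber in which two points cannot be joined by any minimizing geodesic, and Theorem \ref{causi} shows exactly that causal simplicity (attainment of limiting causal relations by geodesics) fails without $L$-convexity. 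The paper's proof circumvents this by projecting the timelike curve onto the auxiliary \multiwarped spacetime $V'=\left((a,b)\times\R^{2},-dt^{2}+\alpha_{1}dx_{1}^{2}+\alpha_{2}dx_{2}^{2}\right)$ via real curves $\overline{c}_i$ with $\overline{c}_i(t')=0$, $\overline{c}_i(t^e)=d_i(x_i^o,x_i^e)$ and $0\le\dot{\overline{c}}_i\le\sqrt{g_i(\dot c_i,\dot c_i)}$; $V'$ is globally hyperbolic with convex complete fibers, so Avez--Seifert and Proposition \ref{thm:characluzgeodesics} \emph{can} be applied there, and the Euclidean distance $|d_i-0|$ reproduces exactly the right-hand side of \eqref{ee2}. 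Without this (or an equivalent) reduction, your identification of $T(x^o,(t^e,x^e))$ with the solution of the integral system \eqref{ee2} does not go through.
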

\begin{proof}
The implication $(ii)\Rightarrow (iii)$ is trivial unless some $\mu_i$ is equal to $0$. So, assume for instance that $\mu_1=0$ (and so, $\mu_2=1$). Then, \eqref{ee2} becomes
\[
\left\{
  \begin{array}{l}
    0=d_1(x_1^o,x_1^e)\\
    \\
    \displaystyle \int_{T}^{t^e}\frac{1}{\sqrt{\alpha_2(s)}}ds=d_2(x_2^o,x_2^e).
  \end{array}
\right.
  \]
  By continuity, we can modify slightly $\mu_1$, $\mu_2$, to obtain strictly positive $\mu'_1,\mu'_2$, with $\mu'_1+\mu'_2=1$, such that
 \[
    \left\{
      \begin{array}{l}\displaystyle\Integral{t^{o}}{t^{e}}{\mu_{1}'}{1}{\mu_{k}'}>0= d_1(x_1^o,x_1^e)\\
      \\
      \displaystyle\Integral{t^{o}}{t^{e}}{\mu_{2}'}{2}{\mu_{k}'}> d_2(x_2^o,x_2^e),
      \end{array}\right.
    \]
as desired.

For the implication $(iii) \Rightarrow (i)$, denote
  \[
L^\epsilon_i:=\Integral{t^o+\epsilon}{t^e}{\mu_{i}'}{i}{\mu_{k}'},\quad\hbox{for $\epsilon>0$.}
    \]
 Take $\epsilon>0$ small enough so that $t^o+\epsilon<t^e$ and the inequalities in \eqref{ee2''} still hold for $t^o+\epsilon$ instead of $t^o$. Since $L_i^{\epsilon}>d_i(x_i^o,x_i^e)$, there exist curves $y_i:[s^o,s^e]\rightarrow M_i$, with $y_{i}(s^{o})=x_{i}^{o}$ and $y_{i}(s^{e})=x^{e}_{i}$, such that $length(y_{i})=L^\epsilon_{i}$, $i=1,2$. Consider the lightlike curve
$\rho(s)=(\tau(s),\overline{y}_{1}(s),\overline{y}_{2}(s))$,
with $\overline{y}_{i}$ reparametrizations of $y_{i}$,
constructed by
requiring
\[
\left\{\begin{array}{l}\dot{\tau}=\sqrt{\sum_{i=1}^{2}\frac{\mu_{i}'}{\alpha_{i}\circ\tau}}
\\ \tau(s^{e})=t^{e}
\end{array}\right.,\qquad
\left\{\begin{array}{l}g_{i}(\dot{\overline{y}}_{i},\dot{\overline{y}}_{i})=\frac{\mu_{i}'}{(\alpha_{i}\circ
\tau)^{2}} \\
\overline{y}_{i}(s^{e})=x^{e}_{i}\end{array}\right.
\qquad\hbox{for}\;\; i=1,2.
\]
Then, by applying \eqref{eq:3} to the lightlike curve $\rho$ (in particular, $D=0$), we deduce
\[
\hbox{length}(\overline{y}_{i}\mid_{[\tau^{-1}(t^o+\epsilon),s^{e}]})=\Integral{t^o+\epsilon}{t^{e}}{\mu_{i}'}{i}{\mu_{k}'}
=L^\epsilon_{i}=\hbox{length}(y_{i}).
\]
Therefore, $\rho(s)$ is a lightlike curve joining $(t^o+\epsilon,x^{o})$ with
$(t^{e},x^{e})$, and so, these points are causally related. Since $(t^o,x^o)\ll (t^o+\epsilon,x^o)$, necessarily $(t^o,x^o)\ll (t^{e},x^{e})$.

Finally, for the implication $(i) \Rightarrow (ii)$, let us show first that if $T$
satisfies (\ref{ee2}) then $T\leq T(x^o,(t^e,x^e))$. So, assume that (\ref{ee2}) holds. 
Take any sequence $\epsilon_{m} \searrow 0$ and define
\[
L_{i,m}:=\Integral{T-\epsilon_m}{t^{e}}{\mu_{i}}{i}{\mu_{k}}\quad\hbox{i=1,2.}
\]
We have that $L_{i,m}>d_{i}(x_{i}^o,x_{i}^{e})$ for all $i$. The implication (iii)$\Rightarrow$(i), which has been proved before, ensures that
$(T-\epsilon_{m},x^o) \ll (t^e,x^e)$ for all $m$. Therefore, from the definition of $T(x^o,(t^e,x^e))$, $T-\epsilon_{m}<T(x^o,(t^e,x^e))$ for all $m$, and then, $T \leq T(x^o,(t^e,x^e))$.


Next, it is sufficient to prove that some value $T$
verifying (\ref{ee2}) always exists, and necessarily $T\geq T(x^o,(t^e,x^e))$. Let $t'<T(x^o,(t^e,x^e))$, and thus, $(t',x^{o})\ll (t^{e},x^{e})$.
Let $\gamma:[t',t^e]\rightarrow V$,
$\gamma(t)=(t,c_{1}(t), c_{2}(t))$, be a timelike curve such
that $\gamma(t')=(t',x^{o})$ and $\gamma(t^{e})=(t^{e},x^{e})$.
Consider real curves $\overline{c}_{i}$, $i=1,2$, such that
\[
\left\{\begin{array}{l} 0\leq\dot{\overline{c}}_{i}(t)\leq
\sqrt{g_{i}(\dot{c}_{i}(t),\dot{c}_{i}(t))} \\
\overline{c}_{i}(t')=0 \\
\overline{c}_{i}(t^{e})=d_{i}(x_{i}^{o},x_{i}^{e})
\end{array}\right. \qquad\hbox{for}\;\; i=1,2.
\]
Then, $\overline{\gamma}(t)=(t,\overline{c}_{1}(t),\overline{c}_{2}(t))$
becomes a future directed timelike curve in the globally hyperbolic \multiwarped
spacetime with convex fibers
$V'=(\ncambios{(a,b)} \times \R^{2},-dt^{2}+\alpha_{1}dx_{1}^{2}+\alpha_{2}dx_{2}^{2})$ joining $\overline{\gamma}(t')=\point{t'}{0}{0}$ with $\overline{\gamma}(t^e)=\point{t^e}{d_{1}(x_{1}^o,x_{1}^{e})}{d_{2}(x_{2}^o,x_{2}^e)}$, i.e., \[\overline{\gamma}(t')=(t',0,0)\ll (t^e,d_{1}(x_{1}^o,x_{1}^{e}),d_{2}(x_{2}^o,x_{2}^{e}))=\overline{\gamma}(t^e).\]
Consider $T>t'$ such that $(T,0,0)\leq \overline{\gamma}(t^e)$ but $(T,0,0)\not\ll \overline{\gamma}(t^e)$. From
Avez and Seifert's result, there exists some lightlike geodesic in $V'$
joining both points. Now, from Prop. \ref{thm:characluzgeodesics} applied to this lightlike geodesic, there exist unique positive constants $\ncambios{\mu_1,\mu_2 \geq}0$, with $\mu_1+\mu_2=1$, such that
\[
\Integral{T}{t^{e}}{\mu_{i}}{i}{\mu_{k}}=|d_{i}(x^{o}_{i},x^{e}_{i})-0|=d_{i}(x^{o}_{i},x^{e}_{i}) \qquad\hbox{for}\;\;
i=1,2.
  \]
Finally, since $t'<T$ for all $t'<T(x^o,(t^e,x^e))$, necessarily $T(x^o,(t^e,x^e))\leq T$, which concludes the proof.

\end{proof}


Let us consider now the characterization of the causal relation (see \cite[Theorem 2(2)]{FS}).
 \begin{defi}
A Riemannian manifold $(N,h)$ is $L$-{\em convex} if any pair of points $p,q\in N$ with $d_h(p,q)<L$ can be joined by a minimizing geodesic.
\end{defi}
\begin{prop}
\label{p2'}
Let $(V,g)$ be a {\multiwarped} spacetime as in (\ref{eq:1-aux}) whose fibers $(M_i,g_i)$ are $L_i$-convex for $i=1,2$. Consider two points $\point{t^{o}}{x^o_{1}}{x^o_2}, \point{t^{e}}{x_1^{e}}{x_2^e} \in V$, with $t^o \leq t^e$, satisfying $d(x_i^o,x_i^e)<L_i$, $i=1,2$. Then, the following conditions are equivalent:
\begin{itemize}
\item[(i)] the points are causally related,
$\point{t^{o}}{x_1^{o}}{x_2^o} \leq \point{t^{e}}{x_1^{e}}{x_2^e}$;

\item[(ii)] there exists a causal geodesic joining $\point{t^{o}}{x_1^{o}}{x_2^o}$ with $\point{t^{e}}{x_1^{e}}{x_2^e}$;

\item[(iii)] there exist constants $\mu'_{1},\mu'_{2}\geq 0$, $\mu'_1+\mu'_2=1$,
such that
\begin{equation}
\label{e2'''}
\Integral{t^{o}}{t^{e}}{\mu'_{i}}{i}{\mu'_{k}} \geq
d_{i}(x^{o}_{i},x^{e}_{i})\qquad\hbox{for}\;\;
i=1,2.
\end{equation}
\end{itemize}
Moreover, if the equalities hold in (\ref{e2'''}), then there is a lightlike and no timelike geodesic joining the points.
\end{prop}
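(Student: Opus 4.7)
The strategy is to mirror the argument of Proposition \ref{c0}, reducing everything to the auxiliary globally hyperbolic spacetime $V'=((a,b)\times\R^2,-dt^2+\alpha_1 dx_1^2+\alpha_2 dx_2^2)$, whose flat fibers are trivially convex, and combining the Avez--Seifert theorem in $V'$ with Propositions \ref{c0} and \ref{thm:characluzgeodesics}. The implication $(ii)\Rightarrow(i)$ is immediate, since any causal geodesic is in particular a causal curve.

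For $(i)\Rightarrow(iii)$ I take a future-directed causal curve $\gamma(t)=(t,c_1(t),c_2(t))$ joining $(t^o,x^o)$ and $(t^e,x^e)$ in $V$, and build the auxiliary causal curve $\overline{\gamma}(t)=(t,\overline{c}_1(t),\overline{c}_2(t))$ in $V'$ connecting $(t^o,0,0)$ with $(t^e,d_1(x_1^o,x_1^e),d_2(x_2^o,x_2^e))$ via fiber components satisfying $0\le\dot{\overline{c}}_i\le\sqrt{g_i(\dot c_i,\dot c_i)}$, exactly as in the proof of Prop \ref{c0}. Avez--Seifert then yields a causal geodesic in $V'$ between these endpoints: if it is lightlike, Prop \ref{thm:characluzgeodesics} applied in $V'$ produces $\mu_i\ge 0$ with the integral equalities, hence $(iii)$; if it is timelike, the endpoints are chronologically related in $V'$ and Prop \ref{c0} provides $\mu'_i>0$ with strict inequalities, again giving $(iii)$. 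Conversely, for $(iii)\Rightarrow(ii)$ the integral expressions in $(iii)$ depend only on $\alpha_i$ and the fiber distances, so the given $\mu'_i$ also certify that $(t^o,0,0)\le(t^e,d_1,d_2)$ in $V'$. Avez--Seifert produces a causal geodesic $\rho(s)=(\tau(s),y_1(s),y_2(s))$ in $V'$, which I lift to $V$ by replacing each $y_i$ (an affinely reparametrized $\R$-geodesic) with a minimizing geodesic $\tilde y_i$ in $(M_i,g_i)$ from $x_i^o$ to $x_i^e$, reparametrized so that $g_i(\dot{\tilde y}_i,\dot{\tilde y}_i)=g_{\R}(\dot y_i,\dot y_i)$. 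Such minimizers exist by $L_i$-convexity because $d_i(x_i^o,x_i^e)<L_i$, and since the conserved momenta $\mu_i=\alpha_i(\tau)^2 g_i(\dot{\tilde y}_i,\dot{\tilde y}_i)$ match those of $\rho$, the base ODE for $\tau$ and the fiber geodesic equations continue to hold, so $(\tau(s),\tilde y_1(s),\tilde y_2(s))$ is a causal geodesic in $V$ of the same causal type as $\rho$.

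For the ``moreover'' clause: if only equalities can be achieved in $(iii)$ (no choice of $\mu'_i$ produces strict $>$), then Prop \ref{c0} rules out the chronological relation, so no timelike geodesic can exist, while the lightlike geodesic is supplied directly by Prop \ref{thm:characluzgeodesics} from the equality case. The main technical obstacle I anticipate is precisely the lifting step: one must verify that substituting arbitrary minimizing geodesics in $M_i$ for the Euclidean fiber geodesics preserves the multiwarped geodesic structure. The key observation is that in such a warped product the base component $\tau(s)$ depends only on the conserved quantities $\mu_i$ and the warping functions $\alpha_i$, while the fiber components are free to be any geodesics at the prescribed speeds; the $L_i$-convexity hypothesis supplies exactly the segments realizing the specified endpoints.
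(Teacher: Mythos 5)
The paper offers no proof of Prop.~\ref{p2'} at all; it defers to \cite[Theorem 2(2)]{FS}. Your argument is correct and is precisely the natural extension, from the chronological to the causal setting, of the method the paper itself uses for Prop.~\ref{c0}: reduce to the auxiliary globally hyperbolic model $V'$, invoke Avez--Seifert there, and lift the resulting geodesic back to $V$ by substituting minimizing fiber geodesics (available by $L_i$-convexity precisely because $d_i(x_i^o,x_i^e)<L_i$), using that the base component $\tau$ of a multiwarped geodesic depends only on the conserved momenta $\mu_i$, the warping functions and the causal character $D$. Two steps deserve to be written out rather than asserted. First, in $(iii)\Rightarrow(ii)$ the claim that the non-strict inequalities ``certify'' $(t^o,0,0)\leqslant(t^e,d_1,d_2)$ in $V'$ is not an application of Prop.~\ref{c0}, which characterizes $\ll$ through \emph{strict} inequalities with $\mu'_i>0$; you need the perturbation-and-limit argument already used in the proof of Thm.~\ref{causi}: show $(t^o-1/n,0,0)\ll(t^e,d_1,d_2)$ for every $n$ (enlarging the interval strictly increases the integrals, and a slight shift of the $\mu'_i$ handles the degenerate case $\mu'_i=0$, which forces $d_i=0$), and then pass to the closure using the causal simplicity of $V'$. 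Second, in the ``moreover'' clause you silently replace the hypothesis ``the equalities hold for some admissible $\mu'$'' by ``no admissible $\mu'$ yields two strict inequalities''; these are equivalent, because as $\mu'_1$ increases the first integral in (\ref{e2'''}) increases while the second decreases, so a value of $\mu'$ realizing both equalities excludes any value realizing both strict inequalities --- but this monotonicity observation is exactly what legitimizes the conclusion ``no timelike geodesic'' via Prop.~\ref{c0}, so it should be stated.
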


\section{Position into the causal ladder}
\label{sec:causalladder}
In order to have an idea of the goodness of the causality of doubly warped spacetimes, next we are going to
determine their position into the causal ladder. As we will see, this depends on the warping functions integrals and the convexity character of their Riemannian fibers.

Let us consider first a brief remainder of the main levels of the causal ladder. Each level corresponds with a causality condition which is strictly more restrictive than the previous one:
\begin{defi}\label{ant} A spacetime $(V,g)$ is
\begin{itemize}
\item {\em non-totally vicious} if $p\not\ll p$ for some $p\in V$.

\item {\em chronological} if it does not contain closed timelike curves.

\item {\em causal} if it does not contain closed causal curves.

\item {\em distinguishing} if whenever $I^+(p)=I^+(q)$ and $I^-(p)=I^-(q)$, necessarily $p=q$.

\item{\em strongly causal} if it does not contain ``nearly closed'' causal curves, i.e. for any open neighborhood $U$ of $p$ there exists some open neighborhood $V$ with $p\in V\subset U$ such that any timelike segment with extreme points in $V$ is contained in $U$.

\item {\em stably causal} if there exists some causal Lorentzian metric $g'$ on $V$ with $g<g'$, i.e., such that $g'(v,v)<0$ for any $v\in TV\setminus \{0\}$ with $g(v,v)\leq 0$. This is equivalent to the existence of some {\em global time function}, i.e., a function defined on the whole spacetime $(V,g)$ which is strictly increasing along each future-directed causal curve.

\item {\em causally continuous} if it is distinguishing and the set valued functions $I^{+}(\cdot)$ and $I^{-}(\cdot)$ are outer continuous (say, $I^{+}(\cdot)$ is {\em outer continuous at some} $p\in V$ if, for any compact subset $K\subset I^{+}(p)$ there exists an open neighborhood $U\ni p$ such that $K\subset I^{+}(q)$ for all $q\in U$). This is equivalent to being distinguishing and {\em reflecting}, 
i.e. for any pair of events $p,q \in V$, $I^{+}(q) \subset I^{+}(p)$ if and only if $I^{-}(p) \subset I^{-}(q)$.

\item {\em causally simple} if it is causal and $J^{\pm}(p)$ are closed sets for any $p\in V$.

\item {\em globally hyperbolic} if it is causal and $J^{+}(p)\cap J^{-}(q)$ are compact for any $p,q \in V$.
\end{itemize}
\end{defi}
It is direct from the very basic structure of \multiwarped spacetimes (\ref{eq:1-aux}) that $t:V \rightarrow (a,b)$ is a global time function (see \cite[Lemma 3.55]{beem}). Therefore, any \multiwarped spacetime is stably causal. The approach developed in previous section will allow to show that any \multiwarped spacetime is causally continuous as well. In fact:

\begin{thm}
Any \multiwarped spacetime $(V,g)$ as in (\ref{eq:1-aux}) is causally continuous.
\end{thm}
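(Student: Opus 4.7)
Since the time coordinate $t$ is a global time function on $V$, the spacetime is already stably causal, and in particular distinguishing. According to Defn.~\ref{ant}, causal continuity will follow once the reflecting property
$$I^{+}(q)\subset I^{+}(p)\iff I^{-}(p)\subset I^{-}(q)\qquad\text{for all } p,q\in V$$
is established. The plan is to reduce this to a single manifestly symmetric integral condition via Prop.~\ref{c0}. First I would recall the standard spacetime fact $I^{+}(q)\subset I^{+}(p)\iff q\in\overline{I^{+}(p)}$: the implication $(\Leftarrow)$ uses openness of $I^{+}(\cdot)$ together with continuity of $\ll$, while $(\Rightarrow)$ follows by choosing any timelike curve in $I^{+}(q)$ converging to $q$. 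Dually, $I^{-}(p)\subset I^{-}(q)\iff p\in\overline{I^{-}(q)}$. Thus reflecting becomes
$$q\in\overline{I^{+}(p)}\iff p\in\overline{I^{-}(q)}.$$

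Writing $p=(t^{p},x^{p})$ and $q=(t^{q},x^{q})$, my key claim is that each of these two closure conditions is equivalent to the same symmetric condition
$$(\star)\quad t^{p}\leq t^{q}\ \text{ and }\ \exists\,\mu_{1},\mu_{2}\geq 0,\ \mu_{1}+\mu_{2}=1,\ \text{with }\Integral{t^{p}}{t^{q}}{\mu_{i}}{i}{\mu_{k}}\ge d_{i}(x_{i}^{p},x_{i}^{q}),\ i=1,2.$$
Because $(\star)$ depends on the pair $(p,q)$ only through the endpoints of the integration interval and the two distances, it is manifestly symmetric in $p$ and $q$, so the equivalence $q\in\overline{I^{+}(p)}\iff p\in\overline{I^{-}(q)}$ is automatic once each side has been shown equivalent to $(\star)$.

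For the implication $q\in\overline{I^{+}(p)}\Rightarrow(\star)$, I would take a sequence $q_{n}\to q$ with $p\ll q_{n}$, extract from Prop.~\ref{c0}(iii) weights $\mu_{i}^{n}>0$ with $\mu_{1}^{n}+\mu_{2}^{n}=1$ realizing the strict integral inequalities at $q_{n}$, and pass to a subsequence with $\mu_{i}^{n}\to\mu_{i}\geq 0$ (compactness of the simplex); continuity of the integrand in $(\mu_{1},\mu_{2})$ and in the upper endpoint, together with continuity of the $d_{i}$, lets the inequalities pass to the limit non-strictly. For the converse $(\star)\Rightarrow q\in\overline{I^{+}(p)}$, I would approach $q$ from above in time by $q_{n}:=(t^{q}+\tfrac{1}{n},x^{q})$: the same $\mu_{i}$ then yield strict integral inequalities at $q_{n}$, and when both $\mu_{i}>0$ Prop.~\ref{c0}(iii) immediately gives $p\ll q_{n}$. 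The analogous argument, varying $p$ slightly below $p$ and applying the past version of Prop.~\ref{c0}(iii), characterizes $p\in\overline{I^{-}(q)}$ by the very same $(\star)$.

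The main obstacle will be the boundary case of the simplex in which some $\mu_{i}=0$, say $\mu_{1}=0$ and $\mu_{2}=1$, because Prop.~\ref{c0}(iii) requires strictly positive weights. In this situation $(\star)$ forces $x_{1}^{p}=x_{1}^{q}$ (so $d_{1}=0$) and leaves only a strict inequality for $i=2$ at $q_{n}=(t^{q}+1/n,x^{q})$. Perturbing to $(\mu_{1}',\mu_{2}')=(\varepsilon,1-\varepsilon)$ with $\varepsilon>0$ small makes the $i=1$ integrand strictly positive, so its integral exceeds $0=d_{1}$, while continuity in $\varepsilon$ preserves the strict $i=2$ inequality; Prop.~\ref{c0}(iii) then applies to yield $p\ll q_{n}$, closing the argument.
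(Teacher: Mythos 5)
Your proof is correct and runs on exactly the same engine as the paper's: approach the endpoint along a vertical line ($(t^{q}+1/n,x^{q})$, resp.\ $(t^{p}-1/n,x^{p})$), extract weights $\mu_{i}^{n}>0$ from Prop.~\ref{c0}(iii), pass to a limit on the simplex (dominated convergence with dominating function $\alpha_i^{-1/2}$) to obtain the non-strict integral inequalities, and then perturb away from the boundary case $\mu_{i}=0$ to recover strict ones. The only difference is packaging: the paper proves the inclusion $I^{-}(p)\subset I^{-}(q)$ directly, using $I^{-}(p)=\bigcup_{n}I^{-}\bigl((t^{o}-1/n,x^{o})\bigr)$, and dismisses the converse as analogous, whereas you route both implications of reflectivity through the closure characterizations and the single symmetric criterion $(\star)$, which makes the ``converse'' automatic.
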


\begin{proof}  Since $(V,g)$ is stably causal, it is also distinguishing. So, it suffices to show that $(V,g)$ is reflecting. Let $\point{t^o}{x_{1}^o}{x_{2}^{o}}, \point{t^{e}}{x_{1}^{e}}{x_{2}^{e}} \in V$ be such that
$I^{+}(\point{t^{e}}{x_{1}^{e}}{x_{2}^{e}}) \subset I^{+}(\point{t^{o}}{x_{1}^{o}}{x_{2}^{o}})$, and let us prove that $I^{-}(\point{t^{o}}{x_{1}^{o}}{x_{2}^{o}})
\subset I^{-}(\point{t^{e}}{x_{1}^{e}}{x_{2}^{e}})$
(the converse is analogous). Consider the sequence $\{\point{t^{e}+1/n}{x_{1}^{e}}{x_{2}^{e}}\}_{n} \subset I^{+}(\point{t^{e}}{x_{1}^{e}}{x_{2}^{e}})$ 
and note that, by the hypothesis, this sequence also belongs to $I^{+}(\point{t^{o}}{x_{1}^{o}}{x_{2}^{o}})$.
Therefore, from Prop. \ref{c0}, there exist constants $\mu_{1}^{n},\mu_{2}^{n}>0$, with $\mu_{1}^{n} + \mu_{2}^{n}=1$, satisfying the following inequalities:
\begin{equation}
\label{eq2'}
\Integral{t^{o}}{t^{e}+1/n}{\mu_{i}^{n}}{i}{\mu_{k}^{n}}
> d_{i}(x_{i}^{o},x_{i}^{e})\qquad\hbox{for $i=1,2$}.
\end{equation}
Up to a subsequence, we can assume that $\{\mu_{i}^{n}\}_{n}$ converges to $\mu_i$, for all $i$, with $0 \leq \mu_{1},\mu_{2}\leq 1$ and $\mu_{1}+\mu_{2}=1$. Moreover,
\begin{equation}
\label{eq3'}
\left\{\sqrt{\mu_{i}^{n}}\alpha_{i}(s)^{-1}\left(\sum_{k=1}^2\mu_{k}^{n} \alpha_{k}(s)^{-1}\right)^{-1/2}\right\}_n \longrightarrow\sqrt{\mu_{i}}\alpha_{i}(s)^{-1}\left(\sum_{k=1}^2\mu_{k}\alpha_{k}(s)^{-1}\right)^{-1/2}
\end{equation}
uniformly on $[t^o,t^e+1]$. Therefore, from (\ref{eq2'}) and (\ref{eq3'}), we deduce
\[
\Integral{t^{o}}{t^{e}}{\mu_{i}}{i}{\mu_{k}} \geq d_{i}(x_{i}^{o},x_{i}^{e}),\qquad\hbox{for $i=1,2$.}
\]
If we consider $\point{t^{o}-1/n}{x_{1}^{o}}{x_{2}^o}$, and modify slightly $(\mu_{1},\mu_{2})$, by continuity we obtain new coefficients $(\mu_{1}',\mu_{2}')$, with $\mu'_{1},\mu'_{2}>0$ and $\mu'_{1}+\mu'_{2}=1$, such that
\[
\Integral{t^{o}-1/n}{t^e}{\mu'_{i}}{i}{\mu'_{k}}
>
d_{i}(x^{o}_{i},x^{e}_{i})\qquad\hbox{for $i=1,2$.}
\]
Again from Prop. \ref{c0}, we have $\point{t^o-1/n}{x_1^o}{x_2^o} \ll \point{t^e}{x_1^e}{x_2^e}$
for all $n$. So, taking into account that $I^-(\point{t^o}{x_1^o}{x_2^o})=\cup_{n \in {\mathbb N}}I^-(\point{t^o-1/n}{x_1^o}{x_2^o})$, we deduce the inclusion $I^-(\point{t^o}{x_1^o}{x_2^o} )\subset
I^-(\point{t^e}{x_1^e}{x_2^e})$, as required.
\end{proof}
\begin{thm}
\label{causi}
A \multiwarped spacetime $(V,g)$ as in (\ref{eq:1-aux}) is causally simple if and only if $(M_i,g_i)$ is $L_i$-convex for $L_i=\int_{a}^{b}\frac{1}{\sqrt{\alpha_{i}(s)}}ds$, $i=1,2$.
\end{thm}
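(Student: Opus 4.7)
The plan is to handle the two implications separately, in both cases converting between causal/chronological relations and integral inequalities in the warping functions via Props.~\ref{c0} and \ref{p2'}. A guiding observation is that for any points of $V=(a,b)\times M_1\times M_2$ one has the strict bound $\int_{t^o}^{t^e}\alpha_i^{-1/2}\,ds<L_i$ whenever $a<t^o<t^e<b$, which automatically supplies the hypothesis $d_i<L_i$ needed by Prop.~\ref{p2'}.

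For the implication ``$L_i$-convex $\Longrightarrow$ causally simple'', recall that causal continuity already implies causality, so it suffices to prove $J^+(p)$ (and symmetrically $J^-(p)$) is closed for each $p$. Given $p=(t^o,x_1^o,x_2^o)$ and a sequence $q_n=(t^e_n,x_1^{e,n},x_2^{e,n})\in J^+(p)$ converging to $q=(t^e,x_1^e,x_2^e)$, the bound above gives $d_i(x_i^o,x_i^{e,n})<L_i$, so Prop.~\ref{p2'}(iii) provides coefficients $\mu_1^n,\mu_2^n\geq 0$ with $\mu_1^n+\mu_2^n=1$ satisfying the integral inequality. Passing to a subsequence, $\mu_i^n\to\mu_i$ with $\mu_1+\mu_2=1$, and uniform convergence of the integrand on a fixed compact subinterval of $(a,b)$ containing all $[t^o,t^e_n]$ transfers the inequality to the limit. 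Since $d_i(x_i^o,x_i^e)<L_i$ also, Prop.~\ref{p2'} applied in the opposite direction yields $p\leq q$.

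For the converse I argue by contrapositive. Suppose, say, $(M_1,g_1)$ is not $L_1$-convex: there exist $x_1^o,x_1^e\in M_1$ with $d_1:=d_1(x_1^o,x_1^e)<L_1$ that cannot be joined by a minimizing geodesic. Using the intermediate value theorem on the integral over $(a,b)$, pick $t^o,t^e\in(a,b)$ with $\int_{t^o}^{t^e}\alpha_1^{-1/2}\,ds=d_1$, fix a common $x_2\in M_2$, and set $p=(t^o,x_1^o,x_2)$, $q=(t^e,x_1^e,x_2)$. The elementary bound $\sqrt{\mu_1}\,\alpha_1^{-1}\bigl(\mu_1/\alpha_1+\mu_2/\alpha_2\bigr)^{-1/2}\leq \alpha_1^{-1/2}$, with equality iff $\mu_2=0$, combined with Prop.~\ref{c0}(iii) shows $p\not\ll q$; yet the same inequality becomes strict when $t^e$ is replaced by $t^e+1/n$, whence $p\ll (t^e+1/n,x_1^e,x_2)\to q$. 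If $V$ were causally simple, $q\in\overline{I^+(p)}\subset J^+(p)$, so there would exist a causal curve $\gamma(s)=(t(s),c_1(s),c_2(s))$ from $p$ to $q$.

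Along such a curve, the causal condition $\dot t^2\geq \alpha_1(t)\,g_1(\dot c_1,\dot c_1)+\alpha_2(t)\,g_2(\dot c_2,\dot c_2)$ implies in particular $\sqrt{g_1(\dot c_1,\dot c_1)}\leq \dot t/\sqrt{\alpha_1(t)}$; changing parameter to $t$ gives the length estimate $\ell(c_1)\leq \int_{t^o}^{t^e}\alpha_1^{-1/2}\,ds=d_1$. Together with the trivial lower bound $\ell(c_1)\geq d_1$ coming from $c_1$ joining $x_1^o$ to $x_1^e$, this forces $\ell(c_1)=d_1$, so $c_1$ is a length-minimizing curve and hence a minimizing geodesic in $M_1$, contradicting the failure of $L_1$-convexity. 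The main potential obstacle is the passage to the limit in the integral inequalities in the first direction and the rigor of the change of variables in the second, but both reduce to the uniform strict bound $\int_{t^o}^{t^e}\alpha_i^{-1/2}\,ds<L_i$ highlighted at the outset.
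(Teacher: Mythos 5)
Your proof is correct, and while it relies on the same machinery as the paper (the integral characterizations of Props.~\ref{c0} and \ref{p2'}, plus a compactness-and-limit argument in the coefficients $\mu_i^n$), both directions are executed along genuinely different tactical routes. For ``convex $\Rightarrow$ simple'', the paper passes through reflectivity, writing $\overline{J^+(p)}=\overline{I^+(p)}$ and using $I^+(q)\subset I^+(p)$ to get $p\ll(t^e+1/n,x^e)$ via Prop.~\ref{c0}, whereas you work directly with a sequence $q_n\in J^+(p)$ and Prop.~\ref{p2'}; both are fine, and your version makes the needed hypothesis $d_i<L_i$ explicit at the outset (note that this step does also use the fiber-length estimate \eqref{eq:4} for the causal curve joining $p$ to $q_n$, not just the bound $\int_{t^o}^{t^e}\alpha_i^{-1/2}<L_i$ by itself). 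For the converse, the paper argues directly: it produces a point of $\partial I^+$ by a sup argument, invokes the existence of a connecting \emph{null geodesic} between horismotically related points, and reads off that its $M_1$-projection is minimizing; your contrapositive instead chooses $t^e$ so that $\int_{t^o}^{t^e}\alpha_1^{-1/2}$ exactly equals $d_1(x_1^o,x_1^e)$, verifies $q\in\overline{I^+(p)}\setminus I^+(p)$ by the elementary pointwise bound $\sqrt{\mu_1}\,\alpha_1^{-1}(\mu_1/\alpha_1+\mu_2/\alpha_2)^{-1/2}\le\alpha_1^{-1/2}$, and then only needs the crude length estimate $\ell(c_1)\le\int_{t^o}^{t^e}\alpha_1^{-1/2}$ valid for the projection of \emph{any} causal curve. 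This buys a small but real simplification: you never need the (Avez--Seifert-type) fact that points in $J^+\setminus I^+$ are joined by null geodesics, only that a length-minimizing rectifiable curve in a Riemannian manifold is a reparametrized minimizing geodesic.
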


\begin{proof} For the implication to the right, assume that $(V,g)$ is causally simple. We will prove that
$(M_1,g_1)$ is $L_1$-convex (the proof for the second fiber is analogous). Let $x^o_{1},x^e_{1} \in M_{1}$ with $0<d_1(x^o_{1},x^e_{1})<L_1$. Since
$\int_{a}^{b}\frac{1}{\sqrt{\alpha_1(s)}}ds=L_1 > d_1(x^o_{1},x^e_{1})$, there exists $a<\C_1<\C_2<b$ such that
\begin{equation}
\label{eq6'}
\int_{\C_1}^{\C_2} \frac{1}{\sqrt{\alpha_{1}(s)}}ds>d_{1}(x^o_{1},x^e_{1}).
\end{equation}
Fix $x_{2} \in M_{2}$ and consider
the points $\point{\C_1}{x_{1}^o}{x_{2}}$ and $\point{\C_2}{x_{1}^e}{x_{2}}$. Inequality (\ref{eq6'}) and Prop. \ref{c0} imply that
$\point{\C_2}{x_{1}^e}{x_{2}} \in I^{+}(\point{\C_1}{x_{1}^o}{x_{2}})$.
Since $\point{\C_1}{x_{1}^e}{x_{2}} \not \in I^+(\point{\C_1}{x_{1}^o}{x_{2}})$, there exists
$t^e \in \R$ such that $\point{t^e}{x^e_{1}}{x_{2}} \in \partial I^{+}(\point{\C_1}{x^o_{1}}{x_{2}})$,
i.e.,
\[
\begin{array}{c}
\point{t^e}{x^e_{1}}{x_{2}} \in \overline{I^{+}(\point{\C_1}{x^o_{1}}{x_{2}})} \setminus I^{+}(\point{\C_1}{x^o_{1}}{x_{2}}) \qquad\qquad\qquad \\ \qquad\qquad\qquad\qquad\qquad\qquad=J^{+}(\point{\C_1}{x^o_{1}}{x_{2}})\setminus I^{+}(\point{\C_1}{x^o_{1}}{x_{2}}),
\end{array}
\]
where, in the equality, we have used that $(V,g)$ is causally simple. Therefore, there exists a null geodesic
$\gamma(s)=(t(s),c_{1}(s),c_{2}(s))$ connecting $\point{\C_1}{x^o_{1}}{x_{2}}$ with $\point{t^e}{x^e_{1}}{x_{2}}$.
From Prop. \ref{p2'} there exist constants $\mu'_1, \mu'_2\geq 0$ such that the following inequalities hold:
\[
\begin{array}{c}
0<d_{1}(x^o_{1},x^e_{1})\leq \displaystyle\Integral{\C_1}{t^e}{\mu'_{1}}{1}{{\mu'}_{k}}=\hbox{length}_1(c_{1}),
\\ 0=d_{2}(x_{2},x_{2}) \leq \displaystyle\Integral{\C_1}{t^e}{\mu'_{2}}{2}{{\mu'}_{k}}=\hbox{length}_2(c_{2}).
%
%
\end{array}
\]
So, taking into account that $$\point{t^e}{x^e_{1}}{x_{2}} \not\in I^+(\point{\C_1}{x^o_{1}}{x_{2}}),$$ the second inequality in the first line must be an equality (recall Prop. \ref{c0}). In conclusion, $c_1(s)$ is a reparametrization of a minimizing geodesic of $(M_1,g_1)$, as required.

\smallskip

For the implication to the left, assume that $(M_i,g_i)$ is $L_i$-convex for $L_i=\int_{a}^{b}\frac{1}{\sqrt{\alpha_i(s)}}ds$, $i=1,2$. In order to prove that $(V,g)$ is causally simple,
take $\point{t^{e}}{x_{1}^{e}}{x_{2}^{e}} \in \overline{J^{+}(\point{t^{o}}{x_{1}^{o}}{x_{2}^{o}})}=\overline{I^{+}(\point{t^{o}}{x_{1}^{o}}{x_{2}^{o}})}$. Then, $I^+(\point{t^{e}}{x_{1}^{e}}{x_{2}^{e}}) \subset I^+(\point{t^{o}}{x_{1}^{o}}{x_{2}^{o}})$, and thus, $\point{t^{o}}{x_{1}^{o}}{x_{2}^{o}} \ll \point{t^{e}+1/n}{x_{1}^{e}}{x_{2}^{e}}$ for all $n$.
From Prop. \ref{c0}, there exist constants $0 <\mu_{1}^{n},\mu_{2}^{n} < 1$, with $\mu_{1}^{n}+\mu_{2}^{n}=1$ for all $n$, such that
\[
\Integral{t^o}{t^{e}+1/n}{\mu_{i}^{n}}{i}{\mu_{k}^{n}}
> d_{i}(x_{i}^{o},x_{i}^{e}),\quad i=1,2.
\]
Since $\{\mu_{i}^{n}\}_{n}$ converges (up to subsequence) to some $\mu_{i} \in [0,1]$ for $i=1,2$, with $\mu_{1}+\mu_{2}=1$, we have:
\[
\frac{\sqrt{\mu_{i}^{n}}}{\alpha_{i}(s)}\left(\sum_{k=1}^2 \frac{\mu_{k}^{n}}{\alpha_{k}(s)}\right)^{-1/2}\longrightarrow
\frac{\sqrt{\mu_{i}}}{\alpha_{i}(s)}\left(\sum_{k=1}^2 \frac{\mu_{k}}{\alpha_{k}(s)}\right)^{-1/2}\quad\hbox{uniformly on $[t^o,t^e+1]$.}
\]
Recalling now that all previous functions are bounded by the (Lebesgue) integrable function $g:[t^o,t^e+1]\rightarrow \R$,
    $g(t)=\alpha_i(t)^{-1/2}$, the Dominated Convergence Theorem ensures that:
\[
\Integral{t^o}{t^e}{\mu_{i}}{i}{\mu_{k}}
=\lim_{n\rightarrow\infty} \Integral{t^{o}}{t^{e}+1/n}{\mu_{i}^{n}}{i}{\mu_{k}^n} \geq d_{i}(x_{i}^{o},x_{i}^{e}).
\]
In particular,
\[
d_{i}(x_{i}^{o},x_{i}^{e})<\Integral{a}{b}{\mu_{i}}{i}{\mu_{k}} \le \int_{a}^{b}\frac{1}{\sqrt{\alpha_i(s)}}ds=L_i,\quad i=1,2.
\]
So, taking into account that $(M_i,g_i)$ are $L_i$-convex for $i=1,2$ we have that Prop. \ref{p2'} implies $\point{t^{e}}{x_{1}^{e}}{x_2^e} \in J^{+}(\point{t^{o}}{x_{1}^{o}}{x_2^o})$, as required.
\end{proof}

The following example shows the tight character of Thm. \ref{causi}, in the sense that there may exist causally simple warped spacetimes with non-convex fiber (the extension to the case of two fibers is straightforward). In fact:
\begin{exe}

 \begin{figure}
\centering
\ifpdf
  \setlength{\unitlength}{1bp}%
  \begin{picture}(377.33, 110.89)(0,0)
  \put(0,0){\includegraphics{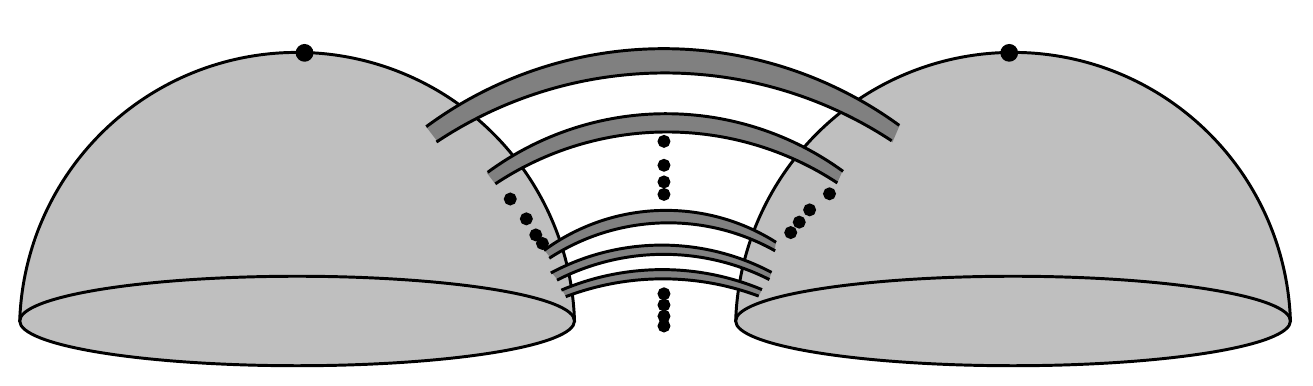}}
  \put(83.67,100.21){\fontsize{9.42}{11.71}\selectfont $x_0$}
  \put(286.41,100.21){\fontsize{9.42}{11.71}\selectfont $x_1$}
  \put(165.89,99.40){\fontsize{9.42}{11.71}\selectfont $T_1$}
  \put(166.29,78.44){\fontsize{9.42}{11.71}\selectfont $T_2$}
  \put(166.70,50.22){\fontsize{9.42}{11.71}\selectfont $T_n$}
  \put(47.67,69.16){\fontsize{13.76}{14.11}\selectfont $H_0$}
  \put(300.28,67.84){\fontsize{13.76}{14.11}\selectfont $H_1$}
  \end{picture}%
\else
  \setlength{\unitlength}{1bp}%
  \begin{picture}(377.33, 110.89)(0,0)
  \put(0,0){\includegraphics{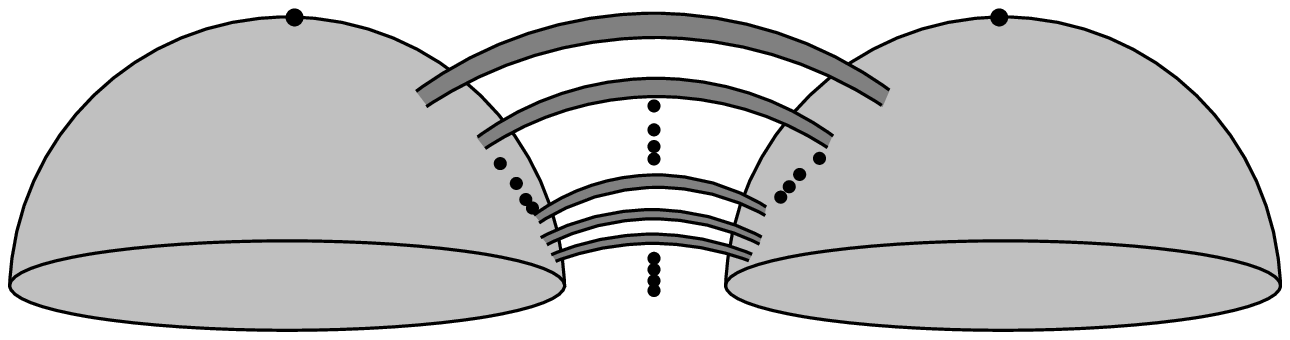}}
  \put(83.67,100.21){\fontsize{9.42}{11.71}\selectfont $x_0$}
  \put(286.41,100.21){\fontsize{9.42}{11.71}\selectfont $x_1$}
  \put(165.89,99.40){\fontsize{9.42}{11.71}\selectfont $T_1$}
  \put(166.29,78.44){\fontsize{9.42}{11.71}\selectfont $T_2$}
  \put(166.70,50.22){\fontsize{9.42}{11.71}\selectfont $T_n$}
  \put(47.67,69.16){\fontsize{13.76}{14.11}\selectfont $H_0$}
  \put(300.28,67.84){\fontsize{13.76}{14.11}\selectfont $H_1$}
  \end{picture}%
\fi

  \caption{\label{fig:1} Both hemispheres $H_0$ and $H_1$ are connected by a sequence of immersed tubes $\{T_n\}_n$, where a length-minimizing curve connecting the north pole $x_0$ of $H_0$ to the north pole $x_1$ of $H_1$ through $T_n$ has bigger length than a length-minimizing curve connecting the same points through $T_{n+1}$. This picture is based on \cite[Figure 1]{Bartolo2002}.}
\end{figure}

\footnote{We are thankful to Prof. Miguel Sánchez for bringing this example to our attention.}In \cite[Section 2.1]{Bartolo2002} the authors construct a Riemannian manifold $(M,g)$ containing two points $x_0, x_1\in M$ such that any geodesic $\gamma\subset M$ connecting them satisfies ${\mathrm length}(\gamma)>d(x_0,x_1)$. The example basically consists of two open hemispheres $H_0$, $H_1$ in $\R^3$ connected by a sequence of immersed tubes $(T_n)_n$ of decreasing lengths, and such that any curve joining the corresponding north poles $x_0$ and $x_1$ through $T_n$ is longer than a minimizing curve joining them through $T_{n+1}$ (see Figure \ref{fig:1}). It is assumed also that the lengths of these tubes converge to a number which is strictly positive. In particular, $x_{0}$ and $x_{1}$ cannot be joined by a minimizing geodesic, and thus,
$(M,g)$ is not convex. However, there exists some $\delta>0$ such that $(M,g)$ is $L$-convex for any $L\leq \delta$. Consider now the warped spacetime $V=\R\times_{\alpha}M$ with $\alpha:\R\rightarrow (0,\infty)$ satisfying $\int_{-\infty}^{+\infty}1/\sqrt{\alpha(s)}ds=L\leq\delta$. From Thm. \ref{causi}, $V$ is causally simple.
\end{exe}

Finally, for the sake of completeness, we include the following simple consequence of \cite[Th. 3.68]{beem}, whose implication to the left is reproved here by using the techniques developed in this paper:

\begin{thm}
A \multiwarped spacetime $(V,g)$ as in (\ref{eq:1-aux}) is globally hyperbolic if and only if $(M_{i},g_{i})$, $i=1,2$, are complete Riemannian manifolds.
\end{thm}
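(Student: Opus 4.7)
The idea is to use the integral characterization of the causal relation in Prop.~\ref{p2'} to translate global hyperbolicity of $V$ into the Hopf--Rinow completeness of each fiber, and vice versa.

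For the implication ``$\Leftarrow$'', since $(V,g)$ is stably causal (hence causal), it remains to check that every causal diamond $J^+(p)\cap J^-(q)$, with $p=(t^o,x_1^o,x_2^o)$ and $q=(t^e,x_1^e,x_2^e)$, is compact. Completeness of each $(M_i,g_i)$ gives $L_i$-convexity for every $L_i$ (Hopf--Rinow), so in particular for $L_i=\int_a^b \alpha_i^{-1/2}\,ds$, and Thm.~\ref{causi} then yields causal simplicity, so the diamond is closed in $V$. Applying Prop.~\ref{p2'} to any $(t,x_1,x_2)$ in the diamond and using the elementary bound $\sqrt{\mu'_i}\,\alpha_i(s)^{-1}\bigl(\sum_k \mu'_k/\alpha_k(s)\bigr)^{-1/2}\leq 1/\sqrt{\alpha_i(s)}$, I obtain
\[
d_i(x_i^o,x_i)\leq \int_{t^o}^{t^e}\frac{ds}{\sqrt{\alpha_i(s)}}=:R_i\qquad(i=1,2).
\]
Hence the diamond is contained in $[t^o,t^e]\times\overline{B_{R_1}(x_1^o)}\times\overline{B_{R_2}(x_2^o)}$, which is compact by Hopf--Rinow, and a closed subset of a compact set is compact.

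For ``$\Rightarrow$'', I argue by contradiction, supposing $(V,g)$ globally hyperbolic but $(M_1,g_1)$ incomplete (the other fiber is symmetric). Hopf--Rinow provides a Cauchy sequence $\{y_n\}_n\subset M_1$ with no limit in $M_1$. After discarding an initial segment and reindexing, I may assume that $d_1(y_1,y_n)<\epsilon$ for every $n$, where $\epsilon>0$ is as small as I wish. Next, I choose $t^o<t^e<t^{e'}$ in $(a,b)$ with $\int_{t^o}^{t^e}\alpha_1^{-1/2}\,ds>2\epsilon$ and $\int_{t^e}^{t^{e'}}\alpha_1^{-1/2}\,ds>2\epsilon$; this is possible as long as $\epsilon$ is small compared to $\int_a^b \alpha_1^{-1/2}\,ds$ (or arbitrary if the total integral is infinite). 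Fix $x_2\in M_2$ and set $p:=(t^o,y_1,x_2)$ and $q:=(t^{e'},y_1,x_2)$.

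The crucial step is to trap the sequence $\{(t^e,y_n,x_2)\}_n$ inside the compact diamond $J^+(p)\cap J^-(q)$. Applying Prop.~\ref{c0}(iii) on $[t^o,t^e]$ with $\mu'_2\downarrow 0$ and $\mu'_1=1-\mu'_2$, the $i=2$ integral stays positive (since $d_2(x_2,x_2)=0$), while the $i=1$ integral converges to $\int_{t^o}^{t^e}\alpha_1^{-1/2}\,ds>2\epsilon>d_1(y_1,y_n)$, yielding $p\ll (t^e,y_n,x_2)$; a symmetric argument on $[t^e,t^{e'}]$ yields $(t^e,y_n,x_2)\ll q$. Global hyperbolicity then forces a convergent subsequence of $\{(t^e,y_n,x_2)\}_n$ in $V$, whose $M_1$-component provides a limit in $M_1$ for a subsequence of $\{y_n\}_n$, contradicting the choice of $\{y_n\}_n$. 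The main technical care is in matching the smallness of $\epsilon$ with the time window $(t^o,t^{e'})$ so that the two integral inequalities of Prop.~\ref{c0}(iii) can be satisfied simultaneously for every $n$.
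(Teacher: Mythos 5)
Your proposal is correct, and for the direction the paper actually proves it follows essentially the same route. The paper only reproves the implication from completeness of the fibers to global hyperbolicity (deferring the converse to Beem--Ehrlich, Th.~3.68): it takes a sequence in a causal diamond, uses Prop.~\ref{p2'} and the bound $\sqrt{\mu_i}\,\alpha_i^{-1}\bigl(\sum_k\mu_k/\alpha_k\bigr)^{-1/2}\le\alpha_i^{-1/2}$ to trap the sequence in $[t^o,t^e]\times\overline{B}_{r_1}(x_1^o)\times\overline{B}_{r_2}(x_2^o)$, and extracts a convergent subsequence. Your version of this half is the same argument, only slightly tidier: you make explicit (via Thm.~\ref{causi} and Hopf--Rinow) that the diamond is closed before concluding compactness, whereas the paper silently asserts that the limit lies in the ``(closed) causal diamond''. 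Your second half is a genuine addition relative to the paper's text: a self-contained proof of the converse, trapping a non-convergent Cauchy sequence $\{(t^e,y_n,x_2)\}_n$ inside a compact diamond by pushing $\mu_2'\downarrow 0$ in Prop.~\ref{c0}(iii). That argument is sound --- the only care needed is the order of quantifiers (fix the time window first, then take $\epsilon$ smaller than both integrals, then pass to a tail of the Cauchy sequence), which you flag yourself; and when $y_n=y_1$ the chronological relations are immediate, so the hypothesis $x^o\ne x^e$ of Prop.~\ref{c0} causes no trouble. The net effect is a fully self-contained proof of the equivalence, at the modest cost of invoking Thm.~\ref{causi} where the paper argues directly with sequences.
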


\begin{proof}
Assume that $(M_{i},g_{i})$, $i=1,2$, are complete. Since $(V,g)$ is causally continuous, and thus, causal, it suffices to prove that any causal diamond is sequentially compact (and thus, compact). Let $\{\point{t^n}{x_1^n}{x_2^n}\}_n$ be a sequence in $J^+(\point{t^o}{x_1^o}{x_2^o})\cap J^-(\point{t^e}{x_1^e}{x_2^e})$. Since the fibers are complete, they are convex, and so, we can apply Prop. \ref{p2'}. Hence, there exist constants $0\leq\mu_{1}^{n},\mu_{2}^{n}\leq 1$, $0\leq\overline{\mu}_{1}^{n},\overline{\mu}_{2}^{n}\leq 1$ with $\mu_{1}^{n}+\mu_{2}^{n}=1=\overline{\mu}_{1}^{n}+\overline{\mu}_{2}^{n}$ for all $n$, such that
\[
  \begin{array}{l}
\displaystyle\Integral{t^{o}}{t^n}{\mu_{i}^{n}}{i}{\mu_{k}^{n}} \geq d_{i}(x_{i}^{o},x_{i}^{n})   \\
\displaystyle\Integral{t^{n}}{t^e}{\overline{\mu}_{i}^{n}}{i}{\overline{\mu}_{k}^{n}} \geq d_{i}(x_{i}^{n},x_{i}^{e}),
  \end{array}\quad i=1,2.
\]
In particular, the following inequalities hold for all $n$:
\[
t^o\leq t^n\leq t^e\quad\hbox{and}\quad\int_{t^{o}}^{t^e}\frac{1}{\sqrt{\alpha_i(s)}}ds \geq d_{i}(x_{i}^{o},x_{i}^{n}),\quad i=1,2.
\]
That is, $t^n \in [t^o,t^e]$ and $x_i^{n} \in \overline{B}_{r_i}(x_i^o)$, $r_i:=\int_{t^{o}}^{t^e}\alpha_i^{-1/2}ds$, $i=1,2$, for large $n$.
But, $[t^o,t^e]$ and $\overline{B}_{r_i}(x_i^o)$, $i=1,2$, are compact sets (recall that $(M_{i},g_{i})$, $i=1,2$, are complete). So,
up to a subsequence, $\{\point{t^n}{x_1^n}{x_2^n}\}_m$ converges to some point $\point{t^*}{x_{1}}{x_{2}}\in V$, which necessarily lies into the (closed) causal diamond $J^+(\point{t^o}{x_1^o}{x_2^o}) \cap J^-(\point{t^e}{x_1^e}{x_2^e})$. In conclusion, the causal diamond is sequentially compact, and so, $(V,g)$ is globally hyperbolic.
\end{proof}



\section{The future c-completion of doubly warped spacetimes}
\label{sec:futurecompletion}
 In this section we are going to study the point set and topological structure of the future c-completion of doubly warped spacetimes.

 Let $\gamma: [\omega,\Omega) \rightarrow V$, $\Omega\leq b$ be a future-directed timelike
curve in $V$. We can reparametrize this curve by using the standard parameter $t$ for the temporal component,
$\gamma(t)=(t,c_{1}(t),c_{2}(t))$. So, from \eqref{eq:3},
\begin{equation}
  \label{eq:4}
  \hbox{length}(c_{i}\mid_{[\omega,\Omega)})
  \leq\int_{\omega}^{\Omega}\frac{ds}{\sqrt{\alpha_{i}(s)}}.
\end{equation}
Next, assume that
$\Omega<b$. Then, the integral in (\ref{eq:4}) is finite. Hence, $\hbox{length}(c_i)<\infty$, and so, $c_{i}(t)\rightarrow x_i^*$ for some $x_i^*\in M_i^C$, where $M_i^C$ denotes the Cauchy completion of the Riemannian manifold $(M_i,g_i)$, $i=1,2$. If, in addition, $x_i^*\in M_i$ for $i=1,2$, the past of $\gamma$ is clearly determined by the triple $(\Omega,x_1^*,x_2^*)$. The following result shows that this is also true if $x_i^*$ belongs to the Cauchy boundary $\partial^C M_i$ for some $i=1,2$.

\begin{prop}\label{pastofcurve}
  Let $\gamma:[\omega,\Omega)\rightarrow V$, $\Omega<b$, be a future-directed timelike curve with $\gamma(t)=(t,c_1(t),c_2(t))$. Then, $\gamma(t)\rightarrow (\Omega, x_1^*,x_2^*)\in (a,b)\times M_1^C\times M_2^C$ for some $(x_1^*,x_2^*)\in M_1^C\times M_2^C$. Moreover, $(t^o,x_1^o,x_2^o)\in I^-(\gamma)$ if, and only if, there exist constants $\mu_{1},\mu_{2}>0$ with $\mu_{1}+\mu_{2}=1$ and
such that
\begin{equation}
  \label{eq:5}
\Integral{t^{o}}{\Omega}{\mu_{i}}{i}{\mu_{k}} >
d_{i}(x^{o}_{i},x^{*}_{i})\qquad\hbox{for $i=1,2$.}
\end{equation}
\end{prop}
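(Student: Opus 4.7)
The statement has two parts: the convergence of $\gamma(t)$ to a point $(\Omega,x_1^*,x_2^*)\in(a,b)\times M_1^C\times M_2^C$, and the characterisation of $I^-(\gamma)$ through \eqref{eq:5}. I would handle them as follows.

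For the convergence, observe that $[\omega,\Omega]\subset(a,b)$ is compact (because $\Omega<b$), so each $\alpha_i$ is bounded away from $0$ on it and $\int_\omega^\Omega ds/\sqrt{\alpha_i(s)}<\infty$. Then \eqref{eq:4} bounds $\mathrm{length}(c_i)$, forcing $c_i(t)$ to be a Cauchy curve in $(M_i,g_i)$ as $t\to\Omega$ and hence to converge to some $x_i^*\in M_i^C$.

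For the ``$\Leftarrow$'' direction, assume \eqref{eq:5} holds with $\mu_1,\mu_2>0$. Since $c_i(t)\to x_i^*$, the triangle inequality gives $d_i(x_i^o,c_i(t))\to d_i(x_i^o,x_i^*)$; and the integral in \eqref{eq:5} depends continuously on its upper limit (the integrand is continuous and bounded on $[t^o,\Omega]$). So the strict inequality is preserved if one replaces $\Omega$ by $t<\Omega$ sufficiently close to $\Omega$ and $x_i^*$ by $c_i(t)$, and Proposition \ref{c0}(iii) then yields $(t^o,x^o)\ll\gamma(t)$, so $(t^o,x^o)\in I^-(\gamma)$.

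For the converse, pick $t_n\nearrow\Omega$ with $(t^o,x^o)\ll\gamma(t_n)$, and use openness of $I^-(\gamma)$ to fix $\delta>0$ with $(t^o+\delta,x^o)\ll\gamma(t_n)$ for $n$ large. Proposition \ref{c0}(iii) then produces $\mu_i^n>0$, $\mu_1^n+\mu_2^n=1$, with
\[
\Integral{t^o+\delta}{t_n}{\mu_i^n}{i}{\mu_k^n}>d_i(x_i^o,c_i(t_n)).
\]
Passing to a subsequence so $\mu_i^n\to\overline{\mu}_i\in[0,1]$ (with $\overline{\mu}_1+\overline{\mu}_2=1$) and applying the Dominated Convergence Theorem with envelope $1/\sqrt{\alpha_i}$ (exactly as in the proof of Theorem \ref{causi}), I obtain
\[
\Integral{t^o+\delta}{\Omega}{\overline{\mu}_i}{i}{\overline{\mu}_k}\geq d_i(x_i^o,x_i^*).
\]
If both $\overline{\mu}_i>0$, enlarging the range from $[t^o+\delta,\Omega]$ to $[t^o,\Omega]$ upgrades $\geq$ to $>$ (the integrand is strictly positive on $[t^o,t^o+\delta]$), yielding the desired $\mu_i$. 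The delicate case is the degenerate one, $\overline{\mu}_1=0$, $\overline{\mu}_2=1$ (after relabelling): then $d_1(x_1^o,x_1^*)=0$ while $\int_{t^o}^\Omega1/\sqrt{\alpha_2(s)}\,ds>d_2(x_2^o,x_2^*)$ holds strictly, and as in the proof of Proposition \ref{c0} perturbing the weights to $(\varepsilon,1-\varepsilon)$ for small $\varepsilon>0$ preserves the second strict inequality by continuous dependence on the parameters and makes the first trivially strict (its right-hand side is $0$ while its left-hand side is positive), producing the required $\mu_1,\mu_2>0$.

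\textbf{The main obstacle} is this last step: coordinating the closed limit (which only preserves $\geq$), the use of openness of $I^-(\gamma)$ to shift $t^o$ by $\delta$ and regain strictness, and a small perturbation of the weights $\mu_i$ to bypass the degenerate limit $\overline{\mu}_i=0$. Each ingredient is routine and already present elsewhere in the paper, but their combination is the only genuinely delicate point of the argument.
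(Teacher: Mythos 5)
Your proposal is correct and follows essentially the same route as the paper's proof: convergence from the length bound \eqref{eq:4}, the converse via continuity of the integral in its upper limit plus Prop.~\ref{c0}(iii), and the forward direction via openness of $I^-(\gamma)$, the Dominated Convergence Theorem with envelope $\alpha_i^{-1/2}$, and a perturbation of the weights in the degenerate case $\overline{\mu}_i=0$. The only difference is the order in which the two implications are treated, which is immaterial.
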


\begin{proof}
As argued above, the first assertion is a direct consequence of \eqref{eq:4}. So, we only need to focus on the last assertion.

For the implication to the right, assume that $(t^o,x_1^o,x_2^o)\in I^-(\gamma)$. Since the chronological past $I^-(\gamma)$ is an open set, we can take $\epsilon >0$ small enough so that $(t^o+\epsilon,x_1^o,x_2^o)\in I^-(\gamma)$. Consider an increasing sequence $\{t_n\}\subset [\omega,\Omega)$ with $t_n\nearrow \Omega$ and $(t^o+\epsilon,x_1^o,x_2^o)\ll \gamma(t_n)$ for all $n$. For each $n$, Thm. \ref{c0} ensures the existence of constants $\mu_1^n,\mu_2^n>0$, with $\mu^n_1+\mu^n_2=1$, such that:
  \begin{equation}\label{eq:6}
\Integral{t^{o}+\epsilon}{t_n}{\mu^n_{i}}{i}{\mu^n_{k}} >
d_{i}(x^{o}_{i},c_{i}(t_n))\qquad\hbox{for $i=1,2$.}
    \end{equation}
    Observe that $\{c_i(t_n)\}_n\rightarrow x_i^*\in M_i^C$ for $i=1,2$, and so, from the continuity of the distance function $d_i(x_i^o,\cdot)$ on $M_i^C$, necessarily $\{d_i(x_i^o,x_i(t_n))\}_n\rightarrow d_i(x_i^o,x_i^*)$. Even more, since $\{\mu^n_i\}_n\subset [0,1]$, we can assume that $\{\mu^n_i\}_n$ converges (up to a subsequence) to, say, $\mu_i^*$, $i=1,2$, with $\mu_1^*+\mu_2^*=1$. Hence,
    \[
\left\{\frac{\sqrt{\mu^n_{i}}}{\alpha_{i}(s)}\left(\sum_{k=1}^2 \frac{\mu_{k}^n}{\alpha_{k}(s)}\right)^{-1/2} \right\}_n\rightarrow \frac{\sqrt{\mu^*_{i}}}{\alpha_{i}(s)}\left(\sum_{k=1}^2 \frac{\mu_{k}^*}{\alpha_{k}(s)}\right)^{-1/2}\quad\hbox{pointwise on $[t^o,\Omega]$.}
    \]
    Arguing as in the proof of Thm. \ref{causi}, we observe that these functions are bounded by the integrable function $g:[t^o,\Omega]\rightarrow\R$, $g(t)=\alpha_i(t)^{-1/2}$, so the Dominated Convergence Theorem ensures that
   \[
\left\{\Integral{t^{o}+\epsilon}{t_n}{\mu^n_{i}}{i}{\mu^n_{k}}\right\}_n\rightarrow \Integral{t^{o}+\epsilon}{\Omega}{\mu^*_{i}}{i}{\mu^*_{k}}.
    \]
    In conclusion, by taking limits in \eqref{eq:6}, we arrive to
    \[
\Integral{t^{o}+\epsilon}{\Omega}{\mu^*_{i}}{i}{\mu^*_{k}}\geq d_i(x_i^o,x_i^*)\qquad\hbox{for $i=1,2$.}
      \]
      In order to conclude the implication, it rests to show that, if $t^o+\epsilon$ is replaced by $t^o$, all previous inequalities are strict. In principle, the only way to avoid this conclusion is by assuming that some $\mu_i^*$ is equal to zero. If, say, $\mu_1^*=0$ (and so, $\mu_2^*=1$), then (i) $d_1(x_1^o,x_1^*)=0$ and (ii)
      \[
\int_{t^o}^{\Omega}\frac{1}{\sqrt{\alpha_2(s)}}ds>d_2(x_2^o,x_2^*).
        \]
        Reasoning as in the proof of Prop. \ref{c0}, a small modification of $\mu_1^*,\mu_2^*$ provides new constants $\mu_1,\mu_2>0$, with $\mu_1+\mu_2=1$, such that
\[
          \left\{\begin{array}{ll}\displaystyle
\Integral{t^{o}}{\Omega}{\mu_{1}}{i}{\mu_{k}}>0= d_1(x_1^o,x_1^*)\\
                   \displaystyle\Integral{t^{o}}{\Omega}{\mu_{2}}{i}{\mu_{k}}> d_i(x_2^o,x_2^*),

            \end{array}
          \right.
          \]
          and we are done.

          \smallskip

          For the converse, assume that \eqref{eq:5} holds for some $(t^o,x_1^o,x_2^o)$ and some constants $\mu_1,\mu_2>0$, with $\mu_1+\mu_2=1$, and let us prove that $(t^o,x^o_1,x^o_2)\in I^-(\gamma)$. Recalling that the inequalities in \eqref{eq:5} are strict and $\gamma(t)=(t,c_1(t),c_2(t))\rightarrow (\Omega,x_1^*,x_2^*)$, there exists some $t^e\in (a,b)$ big enough such that
 \[\Integral{t^{o}}{t^e}{\mu_{i}}{i}{\mu_{k}} >
d_{i}(x^{o}_{i},c_{i}(t^e))\qquad\hbox{for $i=1,2$.}
\]
Hence, from Prop. \ref{c0}, $(t^o,x^o_1,x^o_2)\ll \gamma(t^e)$, as required.


\end{proof}

We have just proved that the chronological past of a future-directed timelike curve $\gamma$ defined on a finite interval $[\omega,\Omega)$, $\Omega<b$, is determined by its future limit point $(\Omega,x^*_1,x^*_2)$, in the sense that any other future-directed timelike curve $\gamma'$ with the same future limit point has the same chronological past. Next, we are going to prove that if $\gamma'$ is another future-directed timelike curve converging to another triple, then it generates a different past.

\begin{prop}\label{structuraparcialsininfinito}
  Let $(V,g)$ be a {\multiwarped} spacetime as in (\ref{eq:1-aux}). If $\gamma^i:[\omega^i,\Omega^i)\rightarrow V$, $i=1,2$ satisfy $\gamma^i(t)\rightarrow p_i:=(\Omega^i,x_1^i,x_2^i)\in (a,b)\times M_1^C\times M_2^C$ with $p_1\neq p_2$, then $I^-(\gamma^1)\neq I^-(\gamma^2)$.
\end{prop}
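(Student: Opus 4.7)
The plan is to split the argument according to whether the temporal components of the two limit points agree, and in each case exhibit a point that lies in one chronological past but not the other. Both cases rest on the characterization of the chronological relation provided by Prop.~\ref{c0}, together with the continuous extension of each distance $d_i$ to the Cauchy completion $M_i^C$.

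\smallskip

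\emph{Case 1: $\Omega^1\neq\Omega^2$.} I would assume without loss of generality that $\Omega^1<\Omega^2$ and pick some $t_0\in(\Omega^1,\Omega^2)$ lying in the domain of $\gamma^2$ (taking $t_0$ close enough to $\Omega^2$ if necessary). Then $\gamma^2(t_0)\ll\gamma^2(t_0+\varepsilon)$ for small $\varepsilon>0$, so $\gamma^2(t_0)\in I^-(\gamma^2)$. On the other hand, every $p\in I^-(\gamma^1)$ satisfies $p\ll\gamma^1(t)$ for some $t<\Omega^1$, which forces the time coordinate of $p$ to be strictly below $\Omega^1$. Since $\gamma^2(t_0)$ has time coordinate $t_0>\Omega^1$, it cannot belong to $I^-(\gamma^1)$, and so the two pasts differ.

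\smallskip

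\emph{Case 2: $\Omega^1=\Omega^2=:\Omega$.} In this case the spatial limits must differ, and I will assume $x_1^1\neq x_1^2$ in $M_1^C$ (the case $x_2^1\neq x_2^2$ being symmetric), setting $\delta:=d_1(x_1^1,x_1^2)>0$. The idea is to show that $\gamma^1(t)\notin I^-(\gamma^2)$ for $t$ close enough to $\Omega$, which already yields $I^-(\gamma^1)\neq I^-(\gamma^2)$. Arguing by contradiction, suppose $\gamma^1(t_n)\in I^-(\gamma^2)$ along a sequence $t_n\nearrow\Omega$; then $\gamma^1(t_n)\ll\gamma^2(\tau_n)$ for some $\tau_n\in(t_n,\Omega)$, and hence $\tau_n\nearrow\Omega$ as well. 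Applying Prop.~\ref{c0} produces constants $\mu_1^n,\mu_2^n>0$ with $\mu_1^n+\mu_2^n=1$ satisfying
\[
\Integral{t_n}{\tau_n}{\mu_1^n}{1}{\mu_k^n}>d_1(c_1^1(t_n),c_1^2(\tau_n)).
\]
Dropping the positive term $\mu_2^n/\alpha_2(s)$ inside the parenthesis shows the integrand is bounded above by $\alpha_1(s)^{-1/2}$, and since $\Omega<b$ and $\alpha_1$ is positive and continuous on $(a,b)$, the function $\alpha_1^{-1/2}$ is bounded on a neighborhood of $\Omega$. Hence the left-hand side tends to $0$, whereas continuity of $d_1$ on $M_1^C$ gives $d_1(c_1^1(t_n),c_1^2(\tau_n))\to \delta>0$, contradicting the displayed inequality for large $n$.

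\smallskip

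The decisive technical step is the $\mu$-free upper bound on the parametrized integrand: it allows me to bypass any extraction of a convergent subsequence for $(\mu_1^n,\mu_2^n)$ and reduce Case 2 to the elementary fact that $\int_{t_n}^{\tau_n}\alpha_1^{-1/2}\,ds\to 0$ when $t_n,\tau_n\to\Omega<b$. Combined with the purely kinematic time-coordinate argument of Case 1, this separates the two pasts in every configuration compatible with $p_1\neq p_2$.
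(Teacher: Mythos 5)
Your proof is correct and follows essentially the same route as the paper's: the case $\Omega^1\neq\Omega^2$ is handled by the time-function argument, and the case $\Omega^1=\Omega^2$ by comparing the $\mu$-independent bound $\int\alpha_1^{-1/2}\,ds$ (small near $\Omega<b$) against the distance $d_1$, which stays near $d_1(x_1^1,x_1^2)>0$. The only cosmetic difference is that you invoke Prop.~\ref{c0} at finite times $\tau_n<\Omega$ instead of passing to the limit point via Prop.~\ref{pastofcurve}, which changes nothing of substance.
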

\begin{proof}
    The conclusion easily follows if, say, $\Omega^1<\Omega^2$, since in this case $\gamma^2(t)\in I^-(\gamma^2)\setminus I^-(\gamma^1)$ whenever $\Omega^1<t<\Omega^2$. So, we will assume that $\Omega^1=\Omega^2(=:\Omega)$ and, say, $d_1(x_1^1,x_1^2)>0$. Let $t^o$ be close enough to $\Omega<\infty$ so that (recall that $c_1^1(t)\rightarrow x_1^1$)
 \[
\int_{t^o}^{\Omega}\frac{1}{\sqrt{\alpha_1(s)}}ds<\frac{d_1(x_1^1,x_1^2)}{3}\quad\hbox{and}\quad d_1(c_1^1(t^o),x_1^2)>\frac{d_1(x_1^1,x_1^2)}{3},
    \]
and define $q=\gamma^1(t^o)\in I^-(\gamma^1)$.   
    Then, $q\not\in I^-(\gamma^2)$, since, otherwise, from Prop. \ref{pastofcurve},
    \[
\Integral{t^{o}}{\Omega}{\mu_{1}}{1}{\mu_{k}} >
d_{1}(c^{1}_{1}(t^o),x^{2}_{1})\quad\hbox{for some $\mu_1,\mu_2>0$}.
      \]
      But this is not possible since, from the choice of $t^o$,
      \[
d_1(c^1_1(t^o),x^2_1)>\frac{d_1(x^1_1,x^2_1)}{3}>\int_{t^o}^{\Omega}\frac{1}{\sqrt{\alpha_1(s)}}ds>\Integral{t^{o}}{\Omega}{\mu'_{1}}{1}{\mu'_{k}}      \]
   for any positive constants $\mu'_1,\mu'_2$, with $\mu'_1+\mu'_2=1$. In conclusion, $I^-(\gamma^1)\neq I^-(\gamma^2)$ if $p_1\neq p_2$, and the conclusion follows.
\end{proof}
\begin{rem}\label{rem:1} In the proof of previous result the key property is the finite value of the integral $\int_{t^o}^{\Omega}\alpha_i(s)^{-1/2}ds<\infty$, not the finite value of $\Omega$. Of course, the second imply the first, but the same argument can be reproduced if only the first holds.
%
  \end{rem}
\noindent Props. \ref{pastofcurve} and \ref{structuraparcialsininfinito} together establish a natural bijection between the space $(a,b) \times M_1^C\times M_2^C$ and the set $\hat{V}\setminus \hat{\partial}^{\ncambios{\B}} V$, where $\hat{\partial}^{\ncambios{\B}} V$ denotes the set of TIPs determined by future-directed timelike curves with divergent temporal component ($\Omega=b$). More precisely:

\begin{prop}\label{structuraparcialsininfinito'}
  Let $(V,g)$ be a {\multiwarped} spacetime as in (\ref{eq:1-aux}). Then, there exists a bijection
  \begin{equation}\label{v}
\hat{V}\setminus \hat{\partial}^{\ncambios{\B}} V\; \leftrightarrow\; (a,b)\times M_1^C\times M_2^C,
    \end{equation}
    which maps each indecomposable past set $P\in\hat{V}\setminus \hat{\partial}^{\ncambios{\B}}V$ to the limit point $(\Omega,x^*_1,x^*_2)\in (a,b)\times M_1^C\times M_2^C$ of any future-directed timelike curve generating $P$.
    \end{prop}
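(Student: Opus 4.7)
The plan is to verify the three conditions characterizing the claimed bijection (\ref{v}): well-definedness of the assignment $P\mapsto(\Omega,x_1^*,x_2^*)$, its injectivity, and its surjectivity.

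\emph{Well-definedness.} Each $P\in\hat{V}\setminus\hat{\partial}^{\B}V$ can be written as $P=I^-(\gamma)$ for a future-directed timelike curve $\gamma:[\omega,\Omega)\to V$ with $\Omega<b$: if $P=I^-(p)$ is a PIP, take for $\gamma$ the temporal line terminating at $p=(\Omega,x_1^*,x_2^*)\in V$; if $P$ is a TIP outside $\hat{\partial}^{\B}V$, then by definition any generating inextendible curve already has $\Omega<b$. In either case Prop.~\ref{pastofcurve} gives $\gamma(t)\to(\Omega,x_1^*,x_2^*)\in(a,b)\times M_1^C\times M_2^C$. If two such representatives had different limit triples, the contrapositive of Prop.~\ref{structuraparcialsininfinito} would force their pasts to differ, contradicting $P=I^-(\gamma^1)=I^-(\gamma^2)$. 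Hence $\Phi(P):=(\Omega,x_1^*,x_2^*)$ is independent of the chosen representative. Injectivity is then precisely Prop.~\ref{structuraparcialsininfinito}.

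\emph{Surjectivity.} Given any $(\Omega,x_1^*,x_2^*)\in(a,b)\times M_1^C\times M_2^C$, we exhibit a future-directed timelike curve $\gamma$ converging to this triple. If $(x_1^*,x_2^*)\in M_1\times M_2$, the horizontal line $\gamma(t):=(t,x_1^*,x_2^*)$, $t<\Omega$, is trivially timelike, and $\Phi(I^-(\gamma))=(\Omega,x_1^*,x_2^*)$. Otherwise, choose Cauchy sequences $\{x_i^n\}_n\subset M_i$ with $x_i^n\to x_i^*$ in $M_i^C$ and times $t_n\nearrow\Omega$. With $\mu'_1=\mu'_2=1/2$, the integrand in Prop.~\ref{c0}(iii) is positive and continuous in $s$, hence bounded below by some $c>0$ on the compact interval $[t_0,\Omega]$. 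Refining both sequences jointly if necessary, we may arrange $d_i(x_i^n,x_i^{n+1})<c(t_{n+1}-t_n)$ for $i=1,2$ and all $n$, whence $(t_n,x_1^n,x_2^n)\ll(t_{n+1},x_1^{n+1},x_2^{n+1})$ by Prop.~\ref{c0}. Concatenating timelike segments joining these consecutive points defines a piecewise smooth future-directed timelike curve $\gamma:[t_0,\Omega)\to V$ whose $M_i$-projection between $t_n$ and $t_{n+1}$ has length bounded by $\int_{t_n}^{t_{n+1}}\alpha_i^{-1/2}\,ds$ by \eqref{eq:4}. Combining this bound (which tends to $0$ as $n\to\infty$) with $d_i(x_i^n,x_i^*)\to 0$ shows that $\gamma$ converges to $(\Omega,x_1^*,x_2^*)$, giving $\Phi(I^-(\gamma))=(\Omega,x_1^*,x_2^*)$.

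The principal obstacle is the Cauchy-boundary surjectivity: threading a single ambient timelike curve whose two spatial projections simultaneously converge to prescribed points of $M_1^C\times M_2^C$ requires balancing the chronological integral inequality of Prop.~\ref{c0}(iii) against the Riemannian distances $d_i(x_i^n,x_i^{n+1})$. Compactness of $[t_0,\Omega]$, positivity of the relevant integrand, and the liberty to pass to subsequences of the Cauchy sequences are exactly what provide the needed flexibility.
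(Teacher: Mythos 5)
Your proof is correct and follows the same route as the paper, which presents this proposition as a direct consequence of Props.~\ref{pastofcurve} (well-definedness of the limit point) and \ref{structuraparcialsininfinito} (injectivity), exactly as you do. The only difference is that you supply an explicit surjectivity argument --- building a chronological chain $(t_n,x_1^n,x_2^n)$ via Prop.~\ref{c0}(iii) with the integrand bounded below on $[t_0,\Omega]$, then controlling the projections by \eqref{eq:4} --- a step the paper leaves implicit; your construction of it is sound.
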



\smallskip

Next, we are going to extend the point set structure obtained above to a topological level. We will consider $(a,b)\times M_1^C\times M_2^C$ attached with the product topology. The first result shows the continuity of bijection (\ref{v}) in the left direction:

\begin{prop}\label{prop:topbuenadir}
Let $P_n,P\in \hat{V}\setminus \hat{\partial}^{\ncambios{\B}}V$ with $P_n\equiv (\Omega_n,x_1^n,x_2^n)$ and $P\equiv (\Omega,x_1^*,x_2^*)$, where we are assuming that the triplets belong to $(a,b)\times M_1^C\times M_2^C$. If $(\Omega_n,x_1^n,x_2^n)\rightarrow (\Omega,x_1^*,x_2^*)$, then $P\in \hat{L}(\{P_n\}_n)$.
\end{prop}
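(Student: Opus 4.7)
The plan is to verify the two defining conditions of the limit operator separately: first that $P\subset\mathrm{LI}(\{P_n\}_n)$, and second that $P$ is a maximal IP contained in $\mathrm{LS}(\{P_n\}_n)$.

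For the inclusion in $\mathrm{LI}$, I pick $(t^o,x_1^o,x_2^o)\in P$ and invoke Prop.~\ref{pastofcurve} to obtain constants $\mu_1,\mu_2>0$ with $\mu_1+\mu_2=1$ satisfying $\Integral{t^o}{\Omega}{\mu_i}{i}{\mu_k}>d_i(x_i^o,x_i^*)$ for $i=1,2$. Since these integrals depend continuously on the upper limit and since $d_i(x_i^o,\cdot)$ extends continuously to $M_i^C$, the hypothesis $(\Omega_n,x_1^n,x_2^n)\to(\Omega,x_1^*,x_2^*)$ ensures that the same strict inequalities persist after replacing $(\Omega,x_1^*,x_2^*)$ by $(\Omega_n,x_1^n,x_2^n)$ for every $n$ large. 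A second application of Prop.~\ref{pastofcurve}, this time to $P_n$, places $(t^o,x_1^o,x_2^o)\in P_n$ eventually, so it belongs to $\mathrm{LI}(\{P_n\}_n)$.

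For maximality I argue by contradiction: suppose $P''\supsetneq P$ is an IP with $P''\subset\mathrm{LS}(\{P_n\}_n)$, and I try to produce a single point of $P''$ missing from all but finitely many $P_n$. By Prop.~\ref{structuraparcialsininfinito'}, either $P''$ corresponds to some triple $(\Omega'',x_1'',x_2'')\neq(\Omega,x_1^*,x_2^*)$ in $(a,b)\times M_1^C\times M_2^C$, or $P''\in\hat{\partial}^{\ncambios{\B}}V$ is generated by a future-directed timelike curve whose temporal component diverges to $b$. In the first alternative, the inclusion $P\subset P''$ together with the fact that $P$ contains points whose temporal coordinate approaches $\Omega$ forces $\Omega''\geq\Omega$. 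If $\Omega''>\Omega$ (or if $P''\in\hat{\partial}^{\ncambios{\B}}V$), I pick $t$ slightly above $\Omega$ but still in the domain of the defining curve $\gamma''$ of $P''$; then $\gamma''(t)\in P''$ has temporal coordinate larger than every $\Omega_n$ for $n$ large, so $\gamma''(t)\notin P_n$ eventually (points of $P_n$ have temporal coordinate $<\Omega_n$), contradicting $\gamma''(t)\in\mathrm{LS}$.

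The remaining, and genuinely delicate, case is $\Omega''=\Omega$ with, say, $d_1(x_1^*,x_1'')>0$. Here I mimic the one-third estimate of the proof of Prop.~\ref{structuraparcialsininfinito} (cf.\ Remark~\ref{rem:1}). Writing $\gamma''(t)=(t,c_1''(t),c_2''(t))$, I choose $t^o$ so close to $\Omega$ that both $\int_{t^o}^{\Omega}\alpha_1(s)^{-1/2}ds$ and $d_1(c_1''(t^o),x_1'')$ are smaller than $d_1(x_1^*,x_1'')/3$. For $n$ large one also has $d_1(x_1^*,x_1^n)<d_1(x_1^*,x_1'')/3$, so the triangle inequality yields $d_1(c_1''(t^o),x_1^n)>d_1(x_1^*,x_1'')/3$. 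On the other hand, the uniform bound $\Integral{t^o}{\Omega_n}{\mu_1}{1}{\mu_k}\leq\int_{t^o}^{\Omega_n}\alpha_1(s)^{-1/2}ds$ (valid for any admissible $\mu_1,\mu_2$) remains below $d_1(x_1^*,x_1'')/3$ for $n$ large. Prop.~\ref{pastofcurve} then forbids $\gamma''(t^o)\in P_n$ for large $n$, contradicting $\gamma''(t^o)\in P''\subset\mathrm{LS}$. This is the main obstacle in the proof; all other steps reduce to routine continuity and time-coordinate comparisons.
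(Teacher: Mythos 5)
Your proposal is correct and follows essentially the same route as the paper: both rest on the analytic characterization of $P$ and $P_n$ from Prop.~\ref{pastofcurve} together with the continuity of the integrals in the upper limit and of $d_i(x_i,\cdot)$ on $M_i^C$. The only difference is that the paper asserts the maximality of $P$ in ${\rm LS}(\{P_n\}_n)$ as a direct consequence of these continuity properties, whereas you spell it out via a case analysis on the candidate $P''\supsetneq P$ (comparison of temporal coordinates when $\Omega''>\Omega$ or $P''\in\hat{\partial}^{\B}V$, and the one-third triangle-inequality estimate of Prop.~\ref{structuraparcialsininfinito} when $\Omega''=\Omega$), which is a legitimate and complete way to fill in that step.
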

\begin{proof}
   First, recall the analytic characterization of the IPs $P$ and $P_n$ provided by Prop. \ref{pastofcurve}: a point $(t,x_1,x_2)\in V$ belongs to $ P$ (resp. $P_n$) if, and only if, there exist positive constants $\mu_1,\mu_2$ ($\mu_1^n,\mu_2^n$) with $\mu_1+\mu_2=1$ ($\mu_1^n+\mu_2^n=1$) and satisfying that, for $i=1,2$:
  \[
    \begin{array}{c}
      \displaystyle\Integral{t}{\Omega}{\mu_{i}}{i}{\mu_{k}}>d_i(x_i,x_i^*) \\
   \left( \displaystyle  \Integral{t}{\Omega_n}{\mu^n_{i}}{i}{\mu^n_{k}}>d_i(x_i,x_i^n)  \right)
    \end{array}
    \]
Second, note that, from the hypotheses, the continuity of the distance map, and the Dominated Convergence Theorem, the following two limits hold: $d_i(x_i,x_i^n)\rightarrow d_i(x_i,x_i^*)$ and
    \[
\Integral{t}{\Omega_n}{\mu_{i}}{i}{\mu_{k}}\rightarrow \Integral{t}{\Omega}{\mu_{i}}{i}{\mu_{k}}\quad\hbox{for any $\mu_1,\mu_2>0$.}
    \]
These two properties directly imply both, $P\subset {\rm LI}(\{P_n\}))$ and $P$ is maximal into ${\rm LS}(\{P_n\})$, i.e., $P\in \hat{L}(\{P_n\})$.
 \end{proof}

In order to prove the continuity of bijection (\ref{v}) in the right direction, we need to impose local compactness on the Cauchy completion, since, otherwise, there exist counterexamples (see, for instance, \cite[Example 4.9]{FHSBuseman})

\begin{prop}\label{topcurvasfinitas}
   Let $(V,g)$ be a doubly warped spacetime as in (\ref{eq:1-aux}) with $M_1^C$ and $ M_2^C$ locally compact. If $\{P_n\}_n$ is a sequence of IPs converging to some IP, $P\equiv (\Omega,x_1^*,x_2^*)\in (a,b) \times M_1^C\times M_2^C$, then $P_n\equiv (\Omega^n,x_1^n,x_2^n)\in (a,b) \times M_1^C\times M_2^C$ for $n$ big enough, and $(\Omega^n,x_1^n,x_2^n)\rightarrow (\Omega,x_1^*,x_2^*)$ with the product topology. As consequence, the bijection (\ref{v}) becomes a homeomorphism.
\end{prop}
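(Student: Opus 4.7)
The plan is to adapt the Robertson--Walker argument of \cite[Prop.~3.2]{FHSIso2} (equivalently \cite[Prop.~5.29]{FHSBuseman}) to the doubly warped setting. Represent each $P_n$ via a future-directed timelike generator $\gamma_n(t)=(t,c_1^n(t),c_2^n(t))$ on $[\omega_n,\Omega_n)$, $\Omega_n\leq b$, and $P$ by $\gamma(t)\to(\Omega,x_1^*,x_2^*)$ as $t\to\Omega<b$. The goal is to show, in two steps, that $\Omega_n<b$ eventually and that the resulting triples $(\Omega_n,x_1^n,x_2^n)\in(a,b)\times M_1^C\times M_2^C$ converge to $(\Omega,x_1^*,x_2^*)$ in the product topology. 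A key preliminary observation is boundedness: any test point $p=\gamma(t_0)\in P$ lies in $P_n$ for $n$ large (since $P\subseteq\mathrm{LI}(\{P_n\})$), so Prop.~\ref{c0} forces the bound $d_i(c_i(t_0),c_i^n(t))<\int_{t_0}^t\alpha_i^{-1/2}(s)\,ds$ for any $t_0<t<\Omega_n$. Because $M_i^C$ is a complete, locally compact length space it is proper (metric Hopf--Rinow--Cohn--Vossen), so bounded sequences admit convergent subsequences in $M_i^C$.

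First I would show $\Omega_n\to\Omega$. The lower bound $\liminf\Omega_n\geq\Omega$ is immediate: for any $\Omega'<\Omega$, $\gamma(\Omega')\in P\subseteq\mathrm{LI}(\{P_n\})$ lies in $P_n$ eventually, forcing $\Omega_n>\Omega'$. For the upper bound suppose by contradiction that, along a subsequence, $\Omega_{n_k}>\Omega+2\eta$ with $\Omega+2\eta<b$. Apply the boundedness observation to the points $q_k:=\gamma_{n_k}(\Omega+\eta)\in P_{n_k}$ and extract $c_i^{n_k}(\Omega+\eta)\to y_i\in M_i^C$. The PIPs $I^-(q_k)$ have triples converging to $(\Omega+\eta,y_1,y_2)$, so Prop.~\ref{prop:topbuenadir} gives $P'':=P(\Omega+\eta,y_1,y_2)\in\hat L(\{I^-(q_k)\})$, hence $P''\subseteq\mathrm{LI}(\{I^-(q_k)\})\subseteq\mathrm{LS}(\{P_n\})$. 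A comparison based on Prop.~\ref{pastofcurve}, the triangle inequality on $d_i$, the analytic estimate $d_i(x_i^*,y_i)\leq\int_\Omega^{\Omega+\eta}\alpha_i^{-1/2}\,ds$, and the regular-openness of multiwarped IPs (being defined by strict integral inequalities of continuous functions) yields the strict inclusion $P\subsetneq P''$, contradicting the maximality of $P$ in $\mathrm{LS}(\{P_n\})$.

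Once $\Omega_n\to\Omega$, Prop.~\ref{structuraparcialsininfinito'} identifies $P_n\equiv(\Omega_n,x_1^n,x_2^n)$, and the boundedness observation also bounds $\{x_i^n\}$ in $M_i^C$. Extract $x_i^{n_k}\to y_i$ by properness; Prop.~\ref{prop:topbuenadir} then yields $\tilde P:=P(\Omega,y_1,y_2)\in\hat L(\{P_{n_k}\})\subseteq\mathrm{LS}(\{P_n\})$. The same regular-open and maximality reasoning forces $P=\tilde P$, and Prop.~\ref{structuraparcialsininfinito} then gives $(y_1,y_2)=(x_1^*,x_2^*)$. Since every subsequence has a further subsequence converging to the same limit, the entire sequence converges in the product topology; combined with Prop.~\ref{prop:topbuenadir}, this upgrades bijection \eqref{v} to a homeomorphism. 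The hardest step is the upper bound $\limsup\Omega_n\leq\Omega$: constructing a strictly larger IP inside $\mathrm{LS}(\{P_n\})$ requires carefully combining the regular-open structure of multiwarped IPs with the Hopf--Rinow--Cohn--Vossen properness of $M_i^C$.
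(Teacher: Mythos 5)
There is a genuine gap at the step you yourself flag as the hardest one, $\limsup\Omega_n\le\Omega$, and it occurs twice. First, your ``boundedness observation'' is not correct as stated: $\gamma(t_0)\in P_n$ only guarantees $\gamma(t_0)\ll\gamma_n(t)$ for $t$ sufficiently close to $\Omega_n$ (how close depends on $n$), not for every $t\in(t_0,\Omega_n)$, so Prop.~\ref{c0} gives no bound on $d_i\bigl(c_i(t_0),c_i^{n_k}(\Omega+\eta)\bigr)$ at the fixed intermediate time $\Omega+\eta$. The triangle-inequality repair (go up to a late time $t$ where the chronological relation does hold and come back down along $c_i^{n_k}$) yields only $d_i(c_i(t_0),c_i^{n_k}(\Omega+\eta))<\int_{t_0}^{t}\alpha_i^{-1/2}+\int_{\Omega+\eta}^{t}\alpha_i^{-1/2}$, which is useless when $\Omega_{n_k}\to b$ and $\int^b\alpha_i^{-1/2}=\infty$; one can easily build curves whose fiber components at time $\Omega+\eta$ leave every bounded set while still satisfying $P\subset\mathrm{LI}(\{P_{n_k}\})$. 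So the extraction of $y_i$ via properness is unjustified. Second, even granting $q_k\to(\Omega+\eta,y_1,y_2)$, the inclusion $P\subset P''$ does not follow from $d_i(x_i^*,y_i)\le\int_\Omega^{\Omega+\eta}\alpha_i^{-1/2}$: to absorb that increment into the defining inequalities of $P''$ for \emph{both} fibers simultaneously you need $d_i(x_i^*,y_i)\le\int_\Omega^{\Omega+\eta}\frac{\sqrt{\mu_i}}{\alpha_i}\bigl(\sum_k\mu_k/\alpha_k\bigr)^{-1/2}$ for a \emph{single} pair $(\mu_1,\mu_2)$, and each of these integrals is strictly smaller than $\int_\Omega^{\Omega+\eta}\alpha_i^{-1/2}$, so your estimate points the wrong way. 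The same two issues resurface at the end, where $P=\tilde P$ is deduced from maximality of $P$ without first establishing $P\subset\tilde P$.

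The paper's proof sidesteps both problems by localizing rather than evaluating at a fixed time: it fixes a pre-compact neighbourhood $U$ of $(\Omega,x_1^*,x_2^*)$, supposes the chains generating the $P_n$ avoid $U$, and considers the points where the future timelike curves \emph{from the chain $\{p_m\}$ generating $P$} to the chains of $P_{\mathfrak{n}(m)}$ exit $\partial\overline U$. Compactness of $\partial\overline U$ replaces your appeal to properness (no a priori distance bound is needed), and since the exit points $(s^k,y_1^k,y_2^k)$ lie by construction on timelike curves issuing from $p_m$, the relations $p_m\ll(s^k,y_1^k,y_2^k)$ come for free and pass to the limit, producing exactly the one-parameter inequality with a common $(\mu_1^*,\mu_2^*)$ that your argument lacks; this is what gives $P\subsetneq P'$ with $P'\subset\mathrm{LS}(\{P_n\})$ and the contradiction with maximality. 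To salvage your fixed-time strategy you would first have to show that the generating curves can be arranged to pass through a fixed compact set at controlled times, which is essentially the paper's exit-point argument in disguise.
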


\begin{proof}
 The proof follows essentially in the same fashion as \cite[Prop. 5.24]{FHSBuseman}.

  Since the Cauchy completion $M_1^C\times M_2^C$ is locally compact, there exists a pre-compact neighbourhood $U$ of $P\equiv(\Omega,x_1^*,x_2^*)$. Let $\{p^n_m\}_m,\{p_m\}_m\subset V$ be future chains generating $P_n$ and $P$, resp. We can assume without restriction that $\{p_m\}\subset U$.
  It suffices to show the existence of $n_0$ and a map $\mathfrak{m}:\N\rightarrow\N$ such that $p_m^n\in U$ for all $n\geq n_0$ and $m\geq \mathfrak{m}(n)$. In fact, in this case, the temporal component of the sequence $\{p^n_m\}_m$ will not diverge as $m\rightarrow\infty$, and so, $P_n$ can be identified with some $(\Omega^n,x_1^n,x_2^n)\in (a,b) \times M_1^C\times M_2^C\cap\overline{U}$. Moreover, since the result is valid for any pre-compact open set $U$, and $(\Omega,x_1^*,x_2^*)$ admits a countable local neighbourhood basis given by pre-compact open sets, necessarily $(\Omega^n,x_1^n,x_2^n)\rightarrow (\Omega_1^*,x_1^*,x_2^*)$.

In order to prove the statement in previous paragraph, assume by contradiction that, up to subsequences, $p^n_m$ is not contained in $U$ for all $m$ and $n$. Since $P\subset {\mathrm LI}(\{P_n\}_n)$, for each $m\in \N$ there exists $n_0$ such that $p_m\in P_n$ for all $n\geq n_0$. Consider a strictly increasing sequence $\{\mathfrak{n}(m)\}_m$ such that $p_m\in P_{\mathfrak{n}(m)}$. Denote by  $\gamma_m$ the future-directed timelike curve from $p_m$ to some point of a future chain generating $P_{\mathfrak{n}(m)}$. Each $\gamma_m $ intersects the boundary of $\overline{U}$ at a point, say, $(s^m,y_1^m,y_2^m)$. Since $U$ is pre-compact, its boundary is compact and we can assume (up to a subsequence) that $(s^m,y_1^m,y_2^m)\rightarrow (s^*,y^*_1,y^*_2)$ for some $(s^*,y^*_1,y^*_2)\in (a,b) \times M_1^C\times M_2^C$. Let us denote by $P'$ the indecomposable set associated to $q=(s^*,y^*_1,y^*_2)$ which is necesssarily different from $P$ (as $q$ belong to the boundary of $U$); and by $\{q_m\}_m$ a future chain generating $P'$. Next, we are going to show that $P'$ breaks the maximality of $P$ into ${\mathrm LS}(\{P_n\})$, in contradiction with $P\in \hat{L}(\{P_n\}_n)$.

Let us show that $P'\subset {\mathrm LS}(\{P_n\})$. First recall that,  for each $m\in \N$, the set $I^+(q_m)$ is an open set containing $q$:
in fact, this is straightforward if $q\in (a,b)\times M_1\times M_2$; otherwise, it suffices to realize that the characterization of the chronological relation given in Prop. \ref{c0} (which is an open property) extends to the set $(a,b)\times M_1^C\times M_2^C$ (see Prop. \ref{pastofcurve}).
In particular, since $\{(s^k,y_1^k,y_2^k)\}\rightarrow q$, it follows that $(s^k,y_1^k,y_2^k)\in I^+(q_m)$ for $k$ big enough. But, from construction, $(s^k,y_1^k,y_2^k)\in P_{\mathfrak{n}(k)}$, so $q_m\in P_{\mathfrak{n}(k)}$ for $k$ big enough. Therefore, $P'\subset {\mathrm LS}(\{P_{\mathfrak{n}(m)}\}_m)$.

It rests to show that $P\subsetneq P'$; that is, any point $p_m$ of the future chain generating $P$ is contained in $P'$. From construction, $p_m=(t^m,x_1^m,x_2^m)\ll p_k \ll (s^k,y_1^k,y_2^k)$ for all $k> m$. Let $\epsilon>0$ be small enough so that $p_m^\epsilon=(t^m+\epsilon,x_1^m,x_2^m)\ll p_{m+1}$, and thus, $p_m^\epsilon\ll (s^k,y_1^k,y_2^k)$ for all $k>m$. From Prop. \ref{c0}, there exist positive constants $\mu_1^k$ and $\mu_2^k$, with $\mu_1^k+\mu_2^k=1$, such that:
\[
\int_{t^m+\epsilon}^{s^k} \frac{\sqrt{\mu^k_i}}{\alpha_{i}(s)}\left(\sum_{l=1}^{2} \frac{\mu^k_l}{\alpha_{k}(s)} \right)^{-1/2}ds>
d_{i}(x^{m}_{i},y^{k}_{i})\qquad\hbox{for $i=1,2$.}
  \]
 But $\{(s^k,y^k_1,y^k_2)\}\rightarrow (s^*,y_1^*,y_2^*)$. By continuity, and up to a subsequence, there exist positive constants $\mu_1^*,\mu_2^*$, with $\mu_1^*+\mu_2^*=1$, such that:
  \[
\int_{t^m+\epsilon}^{s^*} \frac{\sqrt{\mu^*_i}}{\alpha_{i}(s)}\left(\sum_{l=1}^{2} \frac{\mu^*_l}{\alpha_{k}(s)} \right)^{-1/2}ds \geq
d_{i}(x^{m}_{i},y^{*}_{i})\qquad\hbox{for $i=1,2$.}
    \]
    Now if we replace in previous expression $t^m+\epsilon$ by $t^m$, at least one of previous inequalities becomes strict. Then, reasoning as in the proof of Prop. \ref{c0}, we arrive to

      \[
\int_{t^m}^{s^*} \frac{\sqrt{\mu'_i}}{\alpha_{i}(s)}\left(\sum_{l=1}^{2} \frac{\mu'_l}{\alpha_{k}(s)} \right)^{-1/2}ds >
d_{i}(x^{m}_{i},y^{*}_{i})\qquad\hbox{for $i=1,2$,}
      \]
      for some slightly modified constants $\mu'_i$ from $\mu^*_i$. Therefore, the point $p_m$ belongs to $P'$ (recall Prop. \ref{pastofcurve}). Since this argument works for any point of the sequence $\{p_m\}_m$ generating $P$, the inclusion $P\subsetneq P'$ follows.

    \smallskip

For the last assertion, observe that previous argument gives the continuity of bijection \eqref{v} to the right direction, while Prop. \ref{prop:topbuenadir} ensures the continuity to the left one.

  \end{proof}

 Next, we analyze the case $\Omega=b$. In this case, the finiteness/infiniteness of the integrals associated to the warping functions becomes crucial, so we will consider several subsections to discuss it.

\subsection{Finite warping integrals}

First, we consider the case when the integrals associated to the warping functions are both finite:
\begin{equation}
  \label{eq:7}
  \int_{\C}^{b}\frac{1}{\sqrt{\alpha_i(s)}}ds<\infty, \qquad \hbox{$i=1,2$}\quad\hbox{for some $\C\in (a,b)$.}
\end{equation}
In this case, the following result provides the point set and topological structure of the future c-boundary:
\begin{thm}\label{futurestructurefiniteconditions}
  Let $(V,g)$ be a {\multiwarped} spacetime as in (\ref{eq:1-aux}), and assume that the integral conditions \eqref{eq:7} hold. Then, there exists a bijection
  \begin{equation}
    \label{eq:8}
    \hat{V}\; \leftrightarrow \; (a,b]\times M_1^C\times M_2^C
  \end{equation}
  which maps each IP $P\in \hat{V}$ to the limit point $(\Omega,x_1,x_2)\in (a,b]\times M_1^C\times M_2^C$ of any future-directed timelike curve generating $P$. Moreover, if $M_1^C,M_2^C$ are locally compact, this bijection becomes an homeomorphism.
\end{thm}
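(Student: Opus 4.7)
The plan is to extend the pointset bijection of Prop. \ref{structuraparcialsininfinito'} (which already covers $\Omega<b$) to the remaining ``horizon'' set $\hat{\partial}^{\B}V$ of TIPs generated by curves with $\Omega=b$, identifying this set with $\{b\}\times M_1^C\times M_2^C$. The pivotal observation is that \eqref{eq:7} makes each integral $\int_{t^o}^b\alpha_i^{-1/2}\,ds$ finite, so the arguments already available for $\Omega<b$ remain valid with $\Omega=b$; in particular Remark \ref{rem:1} applies.

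First, I would extend Prop. \ref{pastofcurve} to $\Omega=b$. The length estimate \eqref{eq:4}, combined with \eqref{eq:7}, forces $c_i(t)\to x_i^*\in M_i^C$ for $i=1,2$. The analytic description of $I^-(\gamma)$ by the inequalities \eqref{eq:5}, with $\Omega$ replaced by $b$, then follows from the same Dominated Convergence argument, since $\alpha_i^{-1/2}$ is integrable on $[t^o,b]$. Injectivity of the assignment $P\mapsto (b,x_1^*,x_2^*)$ within $\hat{\partial}^{\B}V$ is guaranteed by Remark \ref{rem:1}, whose hypothesis is precisely \eqref{eq:7}. The two branches $\Omega<b$ and $\Omega=b$ do not overlap: a past of the first type is contained in $\{t<\Omega\}\subsetneq\{t<b\}$, while a past of the second type contains points with $t$ arbitrarily close to $b$. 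For surjectivity, given a target $(b,x_1,x_2)$, I would build a generating curve by choosing Cauchy sequences $y_i^n\in M_i$ with $y_i^n\to x_i$ in $M_i^C$ and times $\omega=t_0<t_1<\cdots\nearrow b$ satisfying
\[
\Integral{t_n}{t_{n+1}}{1/2}{i}{1/2}>d_i(y_i^n,y_i^{n+1}),\qquad i=1,2,
\]
which is feasible by \eqref{eq:7} and the Cauchy property of $\{y_i^n\}_n$. By Prop. \ref{c0} the consecutive segments can be joined by future-directed timelike pieces, producing the desired $\gamma$ with $\gamma(t)\to(b,x_1,x_2)$.

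For the topological claim, I would adapt Props. \ref{prop:topbuenadir} and \ref{topcurvasfinitas} under local compactness of $M_1^C$ and $M_2^C$. The continuity of $(\Omega,x_1,x_2)\mapsto P$ rests on continuity of the distance on $M_i^C$ and on $\int_t^{\Omega_n}\alpha_i^{-1/2}\to\int_t^{\Omega}\alpha_i^{-1/2}$ (Dominated Convergence), both of which work for $\Omega_n,\Omega\in(a,b]$ thanks to \eqref{eq:7}. The converse continuity uses local compactness to produce a pre-compact neighbourhood in the product space $(a,b]\times M_1^C\times M_2^C$ of the target point and then replays the maximality argument of Prop. \ref{topcurvasfinitas}. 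The main technical obstacle I foresee is the mixed case in which $P_n\to P$ with $P$ on the horizon ($\Omega=b$) but $\Omega_n<b$: one must rule out, via the maximality of $P$ in $\mathrm{LS}(\{P_n\}_n)$, the possibility that $\Omega_n$ converges to a value strictly smaller than $b$ while the fiber generators escape to a different limit point; this is exactly the role played by local compactness in the last part of the proof of Prop. \ref{topcurvasfinitas}.
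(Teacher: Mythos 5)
Your proposal is correct, and its point–set half coincides with the paper's: the authors likewise reduce to extending Props.~\ref{pastofcurve} and \ref{structuraparcialsininfinito} to $\Omega=b$ via \eqref{eq:7} and Remark~\ref{rem:1}, and your explicit surjectivity construction (concatenating timelike segments over a Cauchy sequence $y_i^n\to x_i$ with $t_n\nearrow b$) fills in a step the paper leaves implicit. The divergence is in the continuity of the map $P\mapsto(\Omega,x_1,x_2)$ at horizon points $\Omega=b$. You propose to re-run the pre-compact-neighbourhood/maximality machinery of Prop.~\ref{topcurvasfinitas} inside the enlarged space $(a,b]\times M_1^C\times M_2^C$; this does go through (the boundary-crossing argument, the identification of the limit crossing point with an IP, and the openness of the extended chronological characterization all survive under \eqref{eq:7}), but it is heavier than what the paper does and slightly misattributes the difficulty. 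The paper handles the horizon case \emph{without} invoking local compactness or maximality there: first, $\Omega_n\to b$ follows from $P\subset\mathrm{LI}(\{P_n\}_n)$ alone (a subsequence with $\Omega_{n_k}<\Omega^*<b$ would miss every $\gamma(t)$ with $t>\Omega^*$); second, $x_i^n\to x_i^*$ follows from a direct $\epsilon_0/3$ estimate choosing $t^o$ with $\int_{t^o}^{b}\alpha_1(s)^{-1/2}ds<\epsilon_0/3$ and a point $q\in P$ close to $x_1^*$, which then cannot lie in any $P_n$, again contradicting $P\subset\mathrm{LI}(\{P_n\}_n)$. So local compactness is genuinely needed only for the interior case already covered by Prop.~\ref{topcurvasfinitas}, whereas in your scheme it is also invoked, unnecessarily, on the horizon. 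Your route buys uniformity (one argument for both cases); the paper's buys economy and makes visible that the finiteness of the warping integrals near $b$ is the real mechanism at the horizon.
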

\begin{proof}
For the first assertion, we only need to prove the corresponding bijection between $\hat{\partial}^{\ncambios{\B}} V$ and $\{b\}\times M_1^C\times M_2^C$ (recall Prop. \ref{structuraparcialsininfinito'}). But this follows from the same arguments as in the proofs of Props. \ref{pastofcurve} and \ref{structuraparcialsininfinito} (recall \eqref{eq:7} and Remark \ref{rem:1}).
%
%
%
%

For the second assertion, the continuity to the left of bijection (\ref{eq:8}) follows as in Prop. \ref{prop:topbuenadir}, just taking into account that the integral condition \eqref{eq:7} must be used in order to apply the Dominated Convergence Theorem. For the continuity to the right, assume that $P\in \hat{L}(\{P_n\}_n)$, with $P=I^-(\gamma)$, $P_n=I^-(\gamma_n)$, and being $\gamma:[\omega,\Omega)\rightarrow V$, $\gamma_n:[\omega_n,\Omega_n)\rightarrow V$ future-directed timelike curves. Let $(\Omega,x_1^*,x_2^*)$ and $(\Omega_n,x_1^n,x_2^n)$ be the limit points in $(a,b]\times M_1^C\times M_2^C$ of $\gamma$ and $\gamma_n$, resp. We need to prove that $(\Omega_n,x_1^n,x_2^n)\rightarrow (\Omega,x_1^*,x_2^*)$. Observe that, if $\Omega<b$, then the result follows from Prop. \ref{topcurvasfinitas}, so we will focus on the case $\Omega=b$.

First, note that $\Omega_n\rightarrow b$. In fact, otherwise, there exists $\Omega^*<b$ and a subsequence $\{\Omega_{n_k}\}$ such that $\Omega_{n_k}<\Omega^*$ for all $k\in \N$. But, in this case, $P_{n_k}$ will not contain any point $\gamma(t)$ with $t>\Omega^*$, and so, $P\not\subset  {\mathrm LI}(\{P_n\})$.

Assume by contradiction that, say, $\{x_1^n\}_n$ does not converge to $ x_1^*$. Then, up to a subsequence, there exists $\epsilon_0>0$ such that $d_1(x_1^n,x_1^*)>\epsilon_0$. Take $t^0$ big enough so that

  \[
\int_{t^o}^{b}\frac{1}{\sqrt{\alpha_1(s)}}ds<\frac{\epsilon_0}{3}.
    \]
    Take $(x_1^o,x_2^o)\in M_1\times M_2$ such that $q=(t^o,x_1^o,x_2^o)\in I^-(\gamma)=P$ with $d_1(x_1^o,x_1^*)<\epsilon_0/3$. It suffices to show that $q$ does not belong to $P_n$ for any $n$, since, in this case, we arrive to a contradiction with $P\subset {\mathrm LI}(P_n)$.  So, assume that $q\in P_n$ for all $n$. From  Prop. \ref{pastofcurve}, there exists some $\mu^n_1,\mu^n_2>0$ such that
    \[
\Integral{t^{o}}{\Omega^n}{\mu^n_{1}}{1}{\mu^n_{k}} >
d_{1}(x^{o}_{1},x^{n}_{1}).
      \]
      This is in contradiction with the fact that, for any pair of positive constants $\mu'_1,\mu_2'>0$ with $\mu_1'+\mu'_2=1$,
      \[
d_1(x^o_1,x^n_1)>\frac{2}{3} d_1(x^*_1,x^n_1)> \frac{2}{3}\epsilon_0>\int_{t^o}^{b}\frac{1}{\sqrt{\alpha_1(s)}}ds>\Integral{t^{o}}{b}{\mu'_{1}}{1}{\mu'_{k}}.
        \]
\end{proof}

\subsection{One infinite warping integral}

Let us consider now the case when just one of the warping integrals is infinite, say:
\begin{equation}
  \label{eq:9}
 \int_{\C}^{b}\frac{1}{\sqrt{\alpha_1(s)}}ds<\infty \qquad \hbox{and}\qquad \int_{\C}^{b}\frac{1}{\sqrt{\alpha_2(s)}}ds=\infty.
\end{equation}

\subsubsection{Point set structure}

The first integral in condition \eqref{eq:9} plus \eqref{eq:4} ensures that any future-directed timelike curve $\gamma:[\omega,b)\rightarrow V$,  $\gamma(t)=(t,c_1(t),c_2(t))$, satisfies that $c_1(t)\rightarrow x_1^*\in M_1^C$. Moreover, the second integral ensures that the associated Generalized Robertson-Walker spacetime $((a,b) \times M_2,-dt^2+\alpha_2g_2)$ corresponds with the model studied in Section \ref{sec:Robertson}. In particular, since the curve $\sigma(t)=(t,c_2(t))$ is also a future-directed timelike in that spacetime, we can consider the Busemann function $b_{c_2}\in B(M_2)\cup \{\infty\}$.

Next, our aim is to show that the chronological past of $\gamma$ is determined by both, $x_1^*\in M_1^C$ and the Busemann function $b_{c_2}\in B(M_2)\cup \{\infty\}$. Let us begin with the following result:
\begin{prop}\label{prop:conddiferbordedif}
  Let $(V,g)$ be a {\multiwarped} spacetime and assume that the integral conditions \eqref{eq:9} are satisfied. Consider two future-directed timelike curves $\gamma^i:[\omega,b)\rightarrow V$, $\gamma^i(t)=(t,c_1^i(t),c_2^i(t))$, with $c_1^i(t)\rightarrow x_1^i\in M_i^C$, $i=1,2$. If $(x_1^1,b_{c_1})\neq (x_1^2,b_{c_2})$ then $I^-(\gamma^1)\neq I^-(\gamma^2)$.
\end{prop}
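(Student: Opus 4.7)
The plan is to split the argument into two cases depending on which coordinate of the pair $(x_1^i, b_{c_2^i})$ differs, and in each case to exhibit an explicit point of the symmetric difference $I^-(\gamma^1)\triangle I^-(\gamma^2)$. The main tools would be Prop.~\ref{c0} (which characterises chronology through positive weights $\mu_1,\mu_2$) and the elementary pointwise bounds
\[
\frac{\sqrt{\mu_j}/\alpha_j(s)}{\sqrt{\mu_1/\alpha_1(s)+\mu_2/\alpha_2(s)}}\;\leq\;\frac{1}{\sqrt{\alpha_j(s)}}\qquad (j=1,2),
\]
valid whenever $\mu_1+\mu_2=1$. In Case~2 I would also invoke the monotonicity of the map $t\mapsto \int_{\C}^{t}\alpha_2^{-1/2}\,ds-d_2(y,c_2^i(t))$, which, as in Section~\ref{sec:Robertson}, follows from the timelike character of $\gamma^i$ (it forces $g_2(\dot c_2^i,\dot c_2^i)<1/\alpha_2$, so $d_2(c_2^i(t_1),c_2^i(t_2))\leq \int_{t_1}^{t_2}\alpha_2^{-1/2}\,ds$).

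\emph{Case 1: $x_1^1\neq x_1^2$.} This should be a direct transcription of Prop.~\ref{structuraparcialsininfinito}, as foreshadowed by Remark~\ref{rem:1}: the only input that proof required was the finiteness of $\int_{t^o}^{b}\alpha_1^{-1/2}\,ds$, which is now guaranteed by the first half of \eqref{eq:9}. I would choose $t^o$ close enough to $b$ so that $\int_{t^o}^{b}\alpha_1^{-1/2}\,ds<d_1(x_1^1,x_1^2)/3$ and, using $c_1^1(t)\to x_1^1$, so that $d_1(c_1^1(t^o),x_1^2)>d_1(x_1^1,x_1^2)/3$. Then $q:=\gamma^1(t^o)\in I^-(\gamma^1)$, and $q\notin I^-(\gamma^2)$ follows because otherwise the $j=1$ inequality of Prop.~\ref{c0} combined with the pointwise bound above would force $\int_{t^o}^{b}\alpha_1^{-1/2}\,ds>d_1(c_1^1(t^o),c_1^2(t^e))$ for some $t^e<b$, which fails once $t^e$ is taken close enough to $b$ (so that $c_1^2(t^e)$ is near $x_1^2$).

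\emph{Case 2: $x_1^1=x_1^2=:x_1^*$ but $b_{c_2^1}\neq b_{c_2^2}$.} Assuming without loss of generality that $b_{c_2^1}(y)>b_{c_2^2}(y)$ for some $y\in M_2$, I would select $t^o\in(a,b)$ with $b_{c_2^2}(y)<\int_{\C}^{t^o}\alpha_2^{-1/2}\,ds<b_{c_2^1}(y)$, together with $x_1^o\in M_1$ sufficiently close to $x_1^*$ (equal to it if $x_1^*\in M_1$), and set $q:=(t^o,x_1^o,y)$. For $q\notin I^-(\gamma^2)$: if $q\ll\gamma^2(t^e)$, the $j=2$ inequality of Prop.~\ref{c0} plus the pointwise bound give
\[
\int_{\C}^{t^o}\alpha_2^{-1/2}\,ds\;<\;\int_{\C}^{t^e}\alpha_2^{-1/2}\,ds-d_2(y,c_2^2(t^e))\;\leq\; b_{c_2^2}(y),
\]
the last step by the monotonicity noted above, contradicting the choice of $t^o$. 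For $q\in I^-(\gamma^1)$: I would take $\mu_1>0$ small and $t^e$ close to $b$; by Dominated Convergence the $j=2$ integrand tends to $\alpha_2^{-1/2}$, so the $j=2$ inequality reduces in the limit to $\int_{\C}^{t^o}\alpha_2^{-1/2}\,ds<b_{c_2^1}(y)$, which holds by construction and thus persists for sufficiently small $\mu_1>0$ and large $t^e$; the $j=1$ inequality has right-hand side $d_1(x_1^o,c_1^1(t^e))\to d_1(x_1^o,x_1^*)$, which is small by choice of $x_1^o$ and can be matched by appropriately tuning $\mu_1$.

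The hard part will be the membership argument in Case~2 when $x_1^*\in\partial^C M_1$: there the $j=1$ and $j=2$ inequalities of Prop.~\ref{c0} pull $\mu_1$ in opposite directions—the first requires $\mu_1$ not too small (so that the $j=1$ integral exceeds the positive quantity $d_1(x_1^o,x_1^*)>0$), while the second prefers $\mu_1$ small (so that the $j=2$ integrand lies close to $\alpha_2^{-1/2}$). Bridging the two would rely on the strict gap $b_{c_2^1}(y)-\int_{\C}^{t^o}\alpha_2^{-1/2}\,ds>0$ together with the freedom to take $d_1(x_1^o,x_1^*)$ arbitrarily small, extracting a compatible pair $(\mu_1,t^e)$ by a continuity argument analogous to the ones appearing in the proofs of Prop.~\ref{c0} and Prop.~\ref{pastofcurve}.
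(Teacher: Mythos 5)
Your proposal follows essentially the same route as the paper: Case 1 is the paper's argument verbatim (Prop.~\ref{structuraparcialsininfinito} plus Remark~\ref{rem:1}), and Case 2 uses the same three ingredients — the Busemann gap at some $y\in M_2$ pinned between $\int_{\C}^{t^o}\alpha_2^{-1/2}$, the monotonicity of $t\mapsto\int_{\C}^{t}\alpha_2^{-1/2}-d_2(y,c_2^i(t))$ for the non-membership half, and a perturbation of $(\mu_1,\mu_2)$ for the membership half. One remark on the step you flag as ``the hard part'': the tension you describe is self-inflicted by insisting that $x_1^o$ be chosen close to $x_1^*$ before $\mu_1$ and $t^e$ are fixed. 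The paper orders the choices the other way: first pick a finite $t'>t^o$ with $\int_{t^o}^{t'}\alpha_2^{-1/2}\,ds>d_2(y,c_2(t'))$, then perturb to get $\mu_1,\mu_2>0$ keeping the $j=2$ inequality while making the $j=1$ weighted integral some fixed $\eta>0$, and only then choose $x_1^o$ with $d_1(x_1^o,c_1(t'))<\eta$. Since the non-membership argument never uses the $M_1$-coordinate of $q$, there is no constraint forcing $x_1^o$ toward $x_1^*$, and the alleged competition between small and large $\mu_1$ disappears; no continuity argument of the kind you sketch is needed.
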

\begin{proof} If $x_1^1\neq x_1^2$, we can reason as in the proof of Prop. \ref{structuraparcialsininfinito} (taking $\Omega=\Omega'=b$ and $x_1^1\neq x_1^2$; Remark \ref{rem:1} and the first integral condition in \eqref{eq:9}). So, it suffices to consider the case $b_{c_2^1}\neq b_{c_2^2}$.

Let $\sigma_i(t)=(t,c_2^i(t))$, $i=1,2$, be two future-directed timelike curves on the Generalized Robertson-Walker spacetime $\left( (a,b)\times M_2,-dt^2+\alpha_2 g_2\right)$. Since $b_{c_2^1}\neq b_{c_2^2}$, necessarily $I^{-}(\sigma_1)\neq I^{-}(\sigma_2)$. Assume, for instance, that  $(t^0,y_2)\in I^{-}(\sigma_2)\setminus I^{-}(\sigma_1)$ (the other case is analogous). Then, taking into account the characterization in \eqref{eq:27}, it follows that

  \begin{equation}\label{eq:b}
b_{c_2^1}(y_2)<\int_{\C}^{t^o}\frac{1}{\sqrt{\alpha_2(s)}}ds <b_{c_2^2}(y_2).
\end{equation}
From the first inequality in (\ref{eq:b}),
    \[
      \begin{array}{l}
  (b_{c^1_2}(y_2)=)\lim_{t\rightarrow b} \left(\int_\C^{t}\frac{1}{\sqrt{\alpha_2(s)}}ds-d_2(y_2,c_2^1(t))\right)\leq\int_\C^{t^o}\frac{1}{\sqrt{\alpha_2(s)}}ds\Rightarrow\\ \Rightarrow  \lim_{t\rightarrow b}\left( \int_{t^o}^{t}\frac{1}{\sqrt{\alpha_2(s)}}ds-d_2(y_2,c_2^1(t))\right)\leq 0.
     \end{array}
 \]
 Therefore, since the function $t\mapsto \left(\int_{t^o}^{t}\frac{1}{\sqrt{\alpha_2(s)}}ds-d_2(y_2,c_2^1(t))\right)$ is increasing, we deduce
      \begin{equation}\label{b}
\int_{t^o}^{t}\frac{1}{\sqrt{\alpha_2(s)}}ds<d_2(y_2,c_2^1(t))\quad\hbox{for all $t$.}
        \end{equation}

Let us show the existence of $x_1^o\in M_1$ such that $q=(t^o,x_1^o,y_2)\in I^-(\gamma^2)$. From the inequality
        \[
\int_\C^{t^o}\frac{1}{\sqrt{\alpha_2(s)}}ds <b_{c_2^2}(y_2)=  \lim_{t\rightarrow b}\left(\int_\C^t\frac{1}{\sqrt{\alpha_2(s)}}ds-d_2(y_2,c_2^2(t))\right),
          \]
        there exists $t'>t^o$ big enough such that
        \[
          \int_{t^o}^{t'}\frac{1}{\sqrt{\alpha_2(s)}}ds> d_2(y_2,c_2^2(t')).
        \]
        From continuity, we can find positive constants $\mu_1,\mu_2$, with $\mu_1+\mu_2=1$, such that
        \[
\left\{
  \begin{array}{l}
    \displaystyle\Integral{t^0}{t'}{\mu_2}{2}{\mu_k}>d_2(y_2,c_2^2(t'))\\
    \displaystyle \Integral{t^0}{t'}{\mu_1}{1}{\mu_k}>0.
  \end{array}
\right.
          \]
          So, if we take $x_1^o$ close enough to $c^2_1(t')$ so that
          \[d_1(x_1^o,c^2_1(t'))<\Integral{t^o}{t'}{\mu_1}{1}{\mu_k},\] Prop. \ref{c0} ensures that $(t^o,x_1^o,y_2)\ll \gamma^2(t')$, and thus, $q=(t^o,x_1^o,y_2)\in I^-(\gamma^2)$.

          \smallskip

          On the other hand, for any pair of positive constants $\mu_1,\mu_2>0$ with $\mu_1+\mu_2=1$, necessarily
        \[
\Integral{t^{o}}{t}{\mu_{2}}{2}{\mu_{k}}<\int_{t^o}^t \frac{1}{\sqrt{\alpha_2(s)}}ds< d_2(y_2,c_2^1(t))\quad\hbox{for all $t>t^0$,}
          \]
          where (\ref{b}) has been used in the last inequality. Therefore, from Prop. \ref{c0}, $q\not\ll \gamma^1(t)$ for all $t>t^o$, and thus, $q\not\in I^-(\gamma^1)$.
\end{proof}


\begin{lemma}\label{lemma:aux3}
Let $\gamma:[\omega,\Omega)\rightarrow V$, $\gamma(t)=(t,c_1(t),c_2(t))$ be a future-directed timelike curve with $c_1(t)\rightarrow x_1^*\in M_1^C$. If $\sigma=\{(t_n,x_1^n,c_2(t_n))\}_n\subset V$ satisfies $\{t_n\}_n\rightarrow \Omega$ and $x_1^n\rightarrow x_1^*$, then $I^-(\gamma)\subset {\mathrm LI}(\{I^-(t_n,x_1^n,c_2(t_n))\}_n)$.
\end{lemma}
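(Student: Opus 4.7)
My plan is to pick an arbitrary $q=(t^o,x_1^o,x_2^o)\in I^-(\gamma)$ and show that $q\ll p_n$ for all sufficiently large $n$, where $p_n=(t_n,x_1^n,c_2(t_n))$; since this gives $q\in I^-(p_n)$ eventually, it will yield $I^-(\gamma)\subset \mathrm{LI}(\{I^-(p_n)\}_n)$. By definition of $I^-(\gamma)$ there is $t^\star\in[\omega,\Omega)$ with $q\ll \gamma(t^\star)$, and by transitivity $q\ll \gamma(t)$ for all $t\in[t^\star,\Omega)$. I would then apply Prop.~\ref{c0}\,(iii) to obtain positive constants $\mu_1,\mu_2$ with $\mu_1+\mu_2=1$ together with a positive slack $\delta>0$ (extracted from the strict inequality) such that, for $i=1,2$,
\[
\int_{t^o}^{t^\star}\frac{\sqrt{\mu_i}}{\alpha_i(s)}\Big(\sum_{k=1}^{2}\frac{\mu_k}{\alpha_k(s)}\Big)^{-1/2}ds>d_i(x_i^o,c_i(t^\star))+2\delta.
\]

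The next step is to verify the two integral inequalities of Prop.~\ref{c0}\,(iii) for $q\ll p_n$ at time $t_n>t^\star$, using essentially the same $\mu$'s. Since $y_2^n=c_2(t_n)$, the second-factor condition amounts to showing that the extension of the integral dominates $d_2(c_2(t^\star),c_2(t_n))$ up to the slack $\delta$; this is handled by the fact that $\gamma$ is timelike, so $d_2(c_2(t^\star),c_2(t_n))<\int_{t^\star}^{t_n}1/\sqrt{\alpha_2(s)}\,ds$ strictly, while choosing $\mu_2$ close to $1$ forces $F_2(s,\mu)$ to be arbitrarily close to $1/\sqrt{\alpha_2(s)}$, so the $2\delta$ slack absorbs the difference. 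For the first-factor condition, the triangle inequality gives
\[
d_1(x_1^o,x_1^n)\le d_1(x_1^o,c_1(t^\star))+d_1(c_1(t^\star),x_1^*)+d_1(x_1^*,x_1^n),
\]
and the last two terms can be controlled: the integral assumption \eqref{eq:9} ensures $c_1(t)\to x_1^*$ with $d_1(c_1(t^\star),x_1^*)\le\int_{t^\star}^{\Omega}1/\sqrt{\alpha_1(s)}\,ds\to 0$ as $t^\star\to\Omega$, and $d_1(x_1^*,x_1^n)\to 0$ by hypothesis; thus for $t^\star$ close to $\Omega$ and $n$ large both are below $\delta/2$, and the slack $2\delta$ dominates the perturbation, giving $\int_{t^o}^{t_n}F_1(s,\mu)\,ds>d_1(x_1^o,x_1^n)$.

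The main obstacle is the circular dependence between the slack $\delta$, the choice of $(\mu_1,\mu_2)$, and the distance $d_1(c_1(t^\star),x_1^*)$, all of which depend on $t^\star$. I would break this cycle by first fixing a $\mu$ with both coordinates positive and $\mu_2$ large enough to accommodate the second-factor estimate (yielding an initial slack $\delta^{(0)}$ at some $t^\star_0$), then choosing $t^\star\geq t^\star_0$ close enough to $\Omega$ that the additional integrals $\int_{t^\star_0}^{t^\star}F_i(s,\mu)\,ds$ and the corresponding triangle-inequality terms $d_i(c_i(t^\star_0),c_i(t^\star))$ preserve most of $\delta^{(0)}$; the finiteness $\int_\C^b 1/\sqrt{\alpha_1}<\infty$ is what makes the direction-$1$ estimate uniformly controllable near $\Omega$. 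Once both inequalities hold at $t_n$ (for $n$ large) with the chosen $\mu$'s, Prop.~\ref{c0}\,(iii) delivers $q\ll p_n$, completing the proof.
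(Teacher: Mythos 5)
Your overall strategy (verify the two inequalities of Prop.~\ref{c0}\,(iii) for $q$ against $p_n=(t_n,x_1^n,c_2(t_n))$, using the triangle inequality and the finiteness of $\int^{b}\alpha_1^{-1/2}$) is sound in outline and genuinely different from the paper's, which argues by contradiction. But there is a real gap at the central step: the selection of a \emph{single} pair $(\mu_1,\mu_2)$, both positive, that works uniformly as $t_n\rightarrow\Omega$. The two requirements pull in opposite directions: the first-fiber inequality at $t^\star$ forces $\mu_1$ above a threshold determined by $d_1(x_1^o,c_1(t^\star))$, while your second-fiber estimate asks for ``$\mu_2$ close to $1$'', i.e.\ $\mu_1$ small; the admissible $\mu_1$'s form an interval whose endpoints you do not control, so you cannot ``first fix $\mu$ with $\mu_2$ large enough''. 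Moreover, both propagation losses are of size $O\big(\int_{t^\star}^{\Omega}\alpha_1^{-1/2}\big)$ --- note that your direction-$2$ estimate also needs this: pointwise closeness of the integrand $F_2(s,\mu)=\frac{\sqrt{\mu_2}}{\alpha_2(s)}\big(\sum_k\mu_k/\alpha_k(s)\big)^{-1/2}$ to $\alpha_2^{-1/2}$ does not control $\int_{t^\star}^{t_n}\big(\alpha_2^{-1/2}-F_2\big)$ when $\int^{\Omega}\alpha_2^{-1/2}=\infty$; one needs the bound $\alpha_2^{-1/2}-F_2\leq\tfrac{1}{\sqrt{2}}\sqrt{\mu_1/\mu_2}\;\alpha_1^{-1/2}$. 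Hence you need the slack $\delta^{(0)}$ obtained at $t^\star_0$ to dominate $\int_{t^\star_0}^{\Omega}\alpha_1^{-1/2}$. Since $\delta^{(0)}$ is whatever Prop.~\ref{c0}\,(iii) happens to give at $t^\star_0$ and may shrink as $t^\star_0\rightarrow\Omega$, your proposed ordering of choices restates the circularity rather than breaking it.

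The paper supplies the missing mechanism. After reducing to $x_1^o\neq x_1^*$ (a cofinality reduction you never make, and which is needed), it shifts $q$ to $q_\epsilon=(t^o+\epsilon,x_1^o,x_2^o)\in I^-(\gamma)$, extracts the constants $\mu^n$ from the relations $q_\epsilon\ll\gamma(t_n)$ \emph{at the times $t_n$ themselves}, and passes to a subsequential limit $\mu^*$. The contradiction hypothesis sandwiches $\int_{t^o+\epsilon}^{t_n}F_1(s,\mu^n)\,ds$ between $d_1(x_1^o,c_1(t_n))$ and $d_1(x_1^o,x_1^n)$, both tending to $d_1(x_1^o,x_1^*)>0$, which forces $\mu_1^*>0$ by dominated convergence; the uniform positive slack then comes from the strictly positive contribution $\int_{t^o}^{t^o+\epsilon}F_1(s,\mu^*)\,ds$, not from a $\delta$ fixed at an early time. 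If you wish to keep a direct formulation, the repair is to adopt exactly this device: take the $\mu^n$ witnessing $q_\epsilon\ll\gamma(t_n)$, pass to a convergent subsequence, and use $x_1^o\neq x_1^*$ to guarantee $\mu_1^*>0$ before restoring strictness on $[t^o,t^o+\epsilon]$.
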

\begin{proof}
Assume by contradiction the existence of some point $q=(t^o,x_1^o,x_2^o)\in I^-(\gamma)$ such that $q\not\ll (t_n,x^n_1,c_2(t_n))$ for infinitely many $n$. From the open character of the chronological relation, we can assume that $x_1^o\neq x_1^*$. Moreover, for $\epsilon>0$ small enough, it follows that $q_{\epsilon}=(t^o+\epsilon,x_1^o,x_2^o)\in I^{-}(\gamma)$.

Assume that, up to a subsequence, $q_{\epsilon}\ll \gamma(t_n)$ for all $n$. From Prop. \ref{c0}, there exist positive constants $\mu_1^n,\mu_2^n>0$, with $\mu_1^n+\mu_2^n=1$, such that
  \[
\Integral{t^{o}+{\epsilon}}{t_n}{\mu^n_{i}}{i}{\mu^n_{k}} >
d_{i}(x^{o}_{i},c_{i}(t_n))\qquad\hbox{for $i=1,2$.}
    \]
We can assume without restriction that $\{\mu_i^n\}_n$ converges to some point $\mu_i^*$, $i=1,2$. Since $q_{\epsilon}\not\in I^-((t_n,x^n_1,c_2(t_n)))$ (recall that $q\not\ll (t_n,x_1^n,c_2(t_n))$), necessarily
   \[
\left(d_{1}(x^{o}_{1},c_{1}(t_n))< \right)\Integral{t^{o}+{\epsilon}}{t_n}{\mu^n_{1}}{1}{\mu^n_{k}}\leq d_1(x_1^o,x^n_1),
      \]
      the last inequality by Prop. \ref{c0}. From the hypothesis, the first and third element in previous expression converge to $d_1(x_1^o,x_1^*)>0$. Moreover, from \eqref{eq:9}, the integral in the middle is also finite. Hence,
      \begin{equation}\label{eq:c}
\left\{\Integral{t^{o}+\epsilon}{t_n}{\mu^n_{1}}{1}{\mu^n_{k}}\right\}_n\rightarrow \Integral{t^{o}+\epsilon}{\Omega}{\mu^*_{1}}{1}{\mu^*_{k}}=d_1(x_1^o,x_1^*)<\infty.
        \end{equation}
          In particular, since $x_1^o\neq x_1^*$, necessarily $\mu_1^*\neq 0$, and so,
        \begin{equation}\label{eq:cc}
\Integral{t^{o}}{\Omega}{\mu^*_{1}}{1}{\mu^*_{k}}>d_1(x_1^o,x_1^*).
          \end{equation}
          Finally, from (\ref{eq:c}) and (\ref{eq:cc}),
        \[
\int_{t^o}^{t^n}\frac{\sqrt{\mu_1^n}}{\alpha_1}\left(\sum_{k=1}^{2}\frac{\mu_k^n}{\alpha_k}\right)^{-1/2}ds>d_1(x_1^o,x^n_1)\quad\hbox{for $n$ big enough,}
          \]
which implies that $q=(t^o,x_1^o,x_2^o)\in I^-((t_n,x^n_1,c_2(t_n)))$ for $n$ big enough, a contradiction.
\end{proof}

\noindent This Lemma has the following direct consequence:

\begin{lemma}\label{lemma:aux1}
 Let $\gamma^i:[\omega,b)\rightarrow V$, $\gamma^i(t)=(t,c_1^i(t),c_2(t))$, $i=1,2$, be future-directed timelike curves. If $c_1^i(t)\rightarrow x_1^*\in M^C_1$, $i=1,2$, then $I^-(\gamma^1)=I^-(\gamma^2)$.
\end{lemma}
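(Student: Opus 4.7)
The plan is to deduce the lemma directly from Lemma \ref{lemma:aux3} by a symmetric application, exploiting the fact that the two curves share the same second component $c_2$.

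First I would fix an increasing sequence $\{t_n\}_n \subset [\omega,b)$ with $t_n \to b$ and set $x_1^n := c_1^2(t_n)$, so that the sequence
\[
\sigma = \{(t_n, x_1^n, c_2(t_n))\}_n = \{\gamma^2(t_n)\}_n
\]
is a future chain generating $I^-(\gamma^2)$. By hypothesis $c_1^2(t_n) \to x_1^*$, and $c_1^1(t) \to x_1^*$ as well, so the hypotheses of Lemma \ref{lemma:aux3} (applied with $\gamma = \gamma^1$) are met. The conclusion of that lemma gives
\[
I^-(\gamma^1) \subset \mathrm{LI}\bigl(\{I^-(t_n, x_1^n, c_2(t_n))\}_n\bigr) = \mathrm{LI}\bigl(\{I^-(\gamma^2(t_n))\}_n\bigr) = I^-(\gamma^2),
\]
where the last equality is the standard fact that the past of a generating chain recovers the IP as a union (hence also as a $\mathrm{LI}$, since $\{I^-(\gamma^2(t_n))\}_n$ is an increasing sequence of sets).

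Second, I would swap the roles of $\gamma^1$ and $\gamma^2$: take $x_1^n := c_1^1(t_n)$, consider $\sigma' = \{\gamma^1(t_n)\}_n$, and apply Lemma \ref{lemma:aux3} with $\gamma = \gamma^2$. The same argument yields $I^-(\gamma^2) \subset I^-(\gamma^1)$, completing the proof.

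I do not anticipate a significant obstacle: the only nontrivial input, Lemma \ref{lemma:aux3}, is already in hand, and the matching of hypotheses works precisely because $c_2$ is common to both curves (so the sequence of points $(t_n, x_1^n, c_2(t_n))$ automatically lies on the other curve's trajectory when $x_1^n$ is taken from its first component). The symmetry of the setup then immediately provides the reverse inclusion.
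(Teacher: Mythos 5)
Your argument is correct and is essentially identical to the paper's proof: both obtain $I^-(\gamma^1)\subset I^-(\gamma^2)$ by applying Lemma \ref{lemma:aux3} to the sequence $\{(t_n,c_1^2(t_n),c_2(t_n))\}_n=\{\gamma^2(t_n)\}_n$ (whose pasts are contained in $I^-(\gamma^2)$), and conclude by symmetry. The only cosmetic difference is that you spell out the symmetric application while the paper dismisses it as analogous.
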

\begin{proof}
 Let us focus on the inclusion to the right (the other one is analogous). Consider the sequence $\sigma=\{(t_n,c_1^2(t_n),c_2(t_n))\}_n$, where $\{t_n\}_n\nearrow \infty$. For any $p\in I^-(\gamma^1)$, Lemma \ref{lemma:aux3} ensures the existence of $n_0$ such that $p\in I^-((t_n,c_1^2(t_n),c_2(t_n)))\subset I^-(\gamma^2)$ for all $n\geq n_0$, as desired.
\end{proof}


\begin{lemma}\label{lemma:aux2}

Let $\gamma^i:[\omega,b)\rightarrow V$, $\gamma^i(t)=(t,c_1(t),c_2^i(t))$, $i=1,2$, be future-directed timelike curves. If $b_{c_2^1}=b_{c_2^2}$, then $I^-(\gamma^1)=I^-(\gamma^2)$.
\end{lemma}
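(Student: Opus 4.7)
The plan is to prove $I^-(\gamma^1)\subset I^-(\gamma^2)$ and conclude by the symmetry of the hypothesis. Fix $q=(t^o,x_1^o,x_2^o)\in I^-(\gamma^1)$ and a sequence $t_n\nearrow b$. Since $I^-(\gamma^1)=\bigcup_t I^-(\gamma^1(t))$, $q\ll\gamma^1(t_n)$ for $n$ large, so Prop.~\ref{c0}(iii) supplies coefficients $\mu^n=(\mu_1^n,\mu_2^n)$ with $\mu_i^n>0$, $\mu_1^n+\mu_2^n=1$, and $G_i(\mu^n;t^o,t_n)>d_i(x_i^o,c_i^1(t_n))$ for $i=1,2$, where $G_i(\mu;s,t):=\int_s^t \sqrt{\mu_i}\,\alpha_i(\tau)^{-1}(\mu_1/\alpha_1(\tau)+\mu_2/\alpha_2(\tau))^{-1/2}\,d\tau$ and $c_1^1:=c_1$. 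It suffices to produce strictly positive $\tilde\mu^n$ with unit sum satisfying the analogous inequalities for $\gamma^2$ (the $c_1$-constraint is literally unchanged); Prop.~\ref{c0}(iii) then yields $q\ll\gamma^2(t_n)$, whence $q\in I^-(\gamma^2)$.

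The hypothesis $b_{c_2^1}=b_{c_2^2}$ enters through two analytic inputs. First, the maps $A_i(t):=\int_\C^t\alpha_2^{-1/2}\,ds-d_2(x_2^o,c_2^i(t))$ are monotone in $t$ (\cite[Prop.~3.1]{FHSIso2}) with common limit $b_{c_2^i}(x_2^o)$, so $\eta_n:=d_2(x_2^o,c_2^2(t_n))-d_2(x_2^o,c_2^1(t_n))\to 0$. Second, the doubly-warped integrand for $c_2$ is pointwise dominated by $\alpha_2^{-1/2}$, so $q\in I^-(\gamma^1)$ forces $(t^o,x_2^o)\in I^-_{RW}(\sigma^1)$ in the Robertson--Walker factor $((a,b)\times M_2,-dt^2+\alpha_2 g_2)$ with $\sigma^i(t):=(t,c_2^i(t))$. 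By the Busemann identification of Section~\ref{sec:Robertson} and $b_{c_2^1}=b_{c_2^2}$, also $(t^o,x_2^o)\in I^-_{RW}(\sigma^2)$, which translates into a uniform slack $\int_{t^o}^{t_n}\alpha_2^{-1/2}\,ds>d_2(x_2^o,c_2^2(t_n))+\delta$ for some $\delta>0$ and all $n$ large.

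Parametrising the simplex by $\lambda\mapsto(1-\lambda,\lambda)$, one checks that $G_1$ is strictly decreasing and $G_2$ strictly increasing in $\lambda$. Let $\lambda_1^n\in(0,1)$ be the root of $G_1=d_1(x_1^o,c_1(t_n))$ and $\lambda_{2,i}^n\in(0,1)$ the root of $G_2=d_2(x_2^o,c_2^i(t_n))$; $\gamma^i$-admissibility is $\lambda\in(\lambda_{2,i}^n,\lambda_1^n)$, non-empty for $i=1$ by hypothesis (it contains $\mu_2^n$). The task reduces to $\lambda_{2,2}^n<\lambda_1^n$ for $n$ large. Whenever $\lambda_{2,1}^n$ stays bounded away from $0$, continuity of $G_2$ in $\lambda$ and $\eta_n\to 0$ give $\lambda_{2,2}^n-\lambda_{2,1}^n\to 0$, and the desired inequality follows. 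In the delicate regime $\lambda_{2,1}^n\to 0$, the derivative $\partial_\lambda G_2$ blows up like $1/\sqrt\lambda$ at $\lambda\to 0^+$, which in fact forces $\lambda_{2,2}^n-\lambda_{2,1}^n=O(\eta_n\sqrt{\lambda_{2,1}^n})\to 0$, while the Robertson--Walker slack $\delta$ (together with a suitable choice of subsequence of $\{t_n\}$) keeps $\lambda_1^n-\lambda_{2,1}^n$ bounded below by a margin exceeding $\eta_n\sqrt{\lambda_{2,1}^n}$. Any $\tilde\mu^n=(1-\lambda^n,\lambda^n)$ with $\lambda^n\in(\lambda_{2,2}^n,\lambda_1^n)$ then completes the construction.

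The main obstacle is exactly this degenerate regime $\lambda_{2,1}^n\to 0$: the naive choice $\tilde\mu^n:=\mu^n$ fails because $\mu^n$ sits arbitrarily close to the $\gamma^1$ threshold $\lambda_{2,1}^n$, and a perturbation of size $\eta_n$ is needed to absorb the Busemann gap without crossing $\lambda_1^n$. The quantitative control of $\lambda_{2,2}^n-\lambda_{2,1}^n$ via the $1/\sqrt\lambda$ blow-up of $\partial_\lambda G_2$, combined with the Robertson--Walker slack $\delta$ that bounds $\lambda_1^n-\lambda_{2,1}^n$ below, provides the margin that makes the perturbation admissible.
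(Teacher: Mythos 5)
Your overall strategy (prove one inclusion directly and invoke the symmetry of the hypothesis) is sound, and two of your ingredients are correct: $\eta_n\to0$ does follow from $b_{c_2^1}(x_2^o)=b_{c_2^2}(x_2^o)$ (at least when the Busemann functions are finite, which the paper's own proof also tacitly assumes), and the projection to the Robertson--Walker factor together with the monotonicity of $t\mapsto\int_\C^t\alpha_2^{-1/2}ds-d_2(x_2^o,c_2^2(t))$ does yield the slack $\delta$. The gap is at the crucial final step: showing that the $\gamma^2$-window $(\lambda_{2,2}^n,\lambda_1^n)$ is non-empty for some $n$. What you actually need is a \emph{uniform positive lower bound} on $S_n:=\sup_{\lambda<\lambda_1^n}\bigl(G_2(\lambda;t^o,t_n)-d_2(x_2^o,c_2^1(t_n))\bigr)$, since the $\gamma^2$-window is non-empty precisely when $S_n>\eta_n$; knowing only $S_n>0$ and $\eta_n\to0$ gives nothing unless $\liminf_n S_n>0$. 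Neither of your regimes delivers this. In the non-degenerate regime, ``$\lambda_{2,2}^n-\lambda_{2,1}^n\to0$, and the desired inequality follows'' is a non sequitur: it yields $\lambda_{2,2}^n<\lambda_1^n$ only if $\lambda_1^n-\lambda_{2,1}^n$ is bounded below, which you never establish (the $\gamma^1$-window is non-empty for each $n$ but may collapse as $n\to\infty$). In the degenerate regime, the bound ``$\lambda_1^n-\lambda_{2,1}^n$ bounded below by a margin exceeding $\eta_n\sqrt{\lambda_{2,1}^n}$'' is asserted, not derived; the Robertson--Walker slack is a statement about $G_2$ at $\lambda=1$, namely $G_2(1;t^o,t_n)\geq d_2(x_2^o,c_2^2(t_n))+\delta$, and $\lambda=1$ lies outside the admissible window whenever $d_1(x_1^o,c_1(t_n))>0$; converting slack at $\lambda=1$ into slack at $\lambda_1^n$ would require controlling $G_2(1)-G_2(\lambda_1^n)$, which can be arbitrarily large.

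The missing idea is the one the paper uses. Since $I^-(\gamma^1)$ is open, $q_\epsilon=(t^o+\epsilon,x_1^o,x_2^o)\in I^-(\gamma^1)$ for some $\epsilon>0$, and Prop.~\ref{c0} gives coefficients $\mu^n$ with $G_i(\mu^n;t^o+\epsilon,t_n)>d_i(\cdot)$ on the shortened interval. A separate perturbation argument (the ``Claim'' in the paper's proof, which uses $\int_\C^b\alpha_1^{-1/2}\,ds<\infty$ in an essential way) shows that these can be chosen with $\mu_2^n$ bounded away from $0$. The chunk $G_2(\mu^n;t^o,t^o+\epsilon)\geq\mathcal{K}>0$ gained by restoring the interval to $[t^o,t_n]$ is then the uniform slack: $G_2(\mu^n;t^o,t_n)>d_2(x_2^o,c_2^1(t_n))+\mathcal{K}>d_2(x_2^o,c_2^2(t_n))+\mathcal{K}-\eta_n>d_2(x_2^o,c_2^2(t_n))$ for large $n$, while the $G_1$-inequality is only improved by enlarging the interval. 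This supplies exactly the bound $\liminf_n S_n\geq\mathcal{K}>0$ that your argument is missing; without the time-shift (or some substitute for it), the $1/\sqrt{\lambda}$ blow-up of $\partial_\lambda G_2$ and the Robertson--Walker slack do not close the proof.
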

\begin{proof}
Since the first warping integral is finite (recall (\ref{eq:9})), the spatial component $c_1$ admits some limit point $x_1^*\in M_1^C$. Assume by contradiction that, say, $q=(t^o,x_1^o,x_2^o)\in I^-(\gamma^2)\setminus I^-(\gamma^1)$. It is not a restriction to additionally assume that $x_1^o\neq x_1^*$. Let $\epsilon>0$ small enough such that $q_\epsilon=(t^o+\epsilon,x_1^o,x_2^o)\in I^-(\gamma^2)\setminus I^-(\gamma^1)$. Since $q_\epsilon\in I^-(\gamma^2)$, there exists an increasing sequence $\{t_n\}\nearrow b$ such that $q_\epsilon\ll \gamma^2(t_n)$ for all $n$. Then, from  Prop. \ref{c0}, there exist positive constants $\mu_1^n, \mu_2^n >0$, with $\mu_1^n+\mu_2^n=1$, for each $n$, such that
  \begin{equation}\label{eq*}
    \left\{\begin{array}{l}
    \displaystyle  \Integral{t^o+\epsilon}{t_n}{\mu^n_{1}}{1}{\mu^n_{k}}>
             d_{1}(x^o_{1},c_{1}(t_n))\\
\displaystyle\Integral{t^o+\epsilon}{t_n}{\mu^n_{2}}{2}{\mu^n_{k}}>
             d_{2}(x^o_{2},c^2_{2}(t_n)).
    \end{array}\right.
    \end{equation}
    It is not a restriction to assume that each sequence $\{\mu^n_i\}_n$ is convergent to $\mu_i^*$ for $i=1,2$. Next, we are going to prove that the sequences can be chosen satisfying $\mu_{1}^{*} \neq 1$:

   \smallskip

{\em Claim. The sequences $\{\mu^n_i\}_n$  can be chosen so that $\mu_1^*\neq 1$ (and thus, $\mu_2^*\neq 0$).}

  \smallskip

{\em Proof of the Claim.} Let us prove that, if we have a sequence $\{t_n\}_n$ such that $q=(t^o,x_1^o,x_2^o)\ll \gamma(t_n)=(t_n,c_1(t_n),c_2(t_n)) $, we can find sequences $\{\mu_i^n\}_n$, with $\mu_1^n+\mu_2^n=1$, which converge, up to a subsequence, to $\mu_1^*\neq 1$ and $\mu_2^*\neq 0$, such that
\begin{equation}
  \label{eq:30}
    \Integral{t^o}{t_n}{\mu^n_i}{i}{\mu^n_{k}}-
             d_{i}(x^o_{i},c_{i}(t_n))>0\quad \hbox{for $i=1,2$.}
\end{equation}
Observe that Prop. \ref{c0} ensures the existence of such convergent sequences $\{\mu_i^n\}_n$  without the statement about the limits. Assume that $\mu_1^*= 1$. By using standard arguments (that is, working with the point $q_{\epsilon}=(t^o+\epsilon,x_1^o,x_2^o)$ as before, and recalling that $\mu_1^n\geq \frac{1}{2}$ for $n$ big enough), we can take limits on \eqref{eq:30} preserving the strict inequality. So,
\[
\lim_{n\rightarrow \infty}\left(\Integral{t^o}{t_n}{\mu^n_1}{1}{\mu^n_{k}}-
             d_{1}(x^o_{1},c_{1}(t_n))\right)=\int_{t^o}^{b}\frac{1}{\sqrt{\alpha_1(s)}}ds-
             d_{1}(x^o_{1},x_1^*)>0
           \]
where $c_1(t_n)\rightarrow x_1^*$. Now take $\overline{\mu}_2^*>0$ small enough such that $\overline{\mu}_1^*=1-\overline{\mu}_2^*>0$ and such that
           \[
\Integral{t^o}{b}{\overline{\mu}_1^*}{1}{\overline{\mu}^*_k}-d_{1}(x^o_{1},x_1^*)>0
             \]
Now, define $\overline{\mu}_1^n=\mu_1^n-\overline{\mu}_2^*$ and $\overline{\mu}_2^n=\mu_2^n+\overline{\mu}_2^*$. As $\mu_{1}^{n} \rightarrow 1$,  we have that $\overline{\mu}_{1}^{n}>0$ for large $n$ and that
$\overline{\mu}_{1}^{n} \rightarrow \overline{\mu_{1}^{*}}$, therefore by  the Dominated Convergence Theorem (recall the integral condition for $\alpha_1$) we have:
\[
\Integral{t^o}{b}{\overline{\mu}_1^*}{1}{\overline{\mu}^*_k}=lim_{n} \Integral{t^o}{t_{n}}{\overline{\mu}_1^n}{1}{\overline{\mu}^n_k},
\]
Hence,
\[
  lim_{n}\left(\Integral{t^o}{t_{n}}{\overline{\mu}_1^n}{1}{\overline{\mu}^n_k}-d_{1}(x_{1}^{o},c_{1}(t_{n}))\right)
  >0,
\]
and so  for large $n$ we can take $\overline{\mu}_1^n$ and $\overline{\mu}_2^n$ satisfying
\[
\Integral{t^o}{t_n}{\overline{\mu}^n_1}{1}{\overline{\mu}^n_{k}}-
             d_{1}(x^o_{1},c_{1}(t_n))>0.
  \]
  Moreover, as $\overline{\mu}_1^n<\mu_1^n$ and $\overline{\mu}_2^n>\mu_2^n$, it easily follows that:

  \[
\Integral{t^o}{t_n}{\overline{\mu}^n_2}{2}{\overline{\mu}^n_{k}}>\Integral{t^o}{t_n}{\mu^n_2}{2}{\mu^n_{k}}\left(>d_2(x_2^0,c_2(t_n))\right).
    \]
In conclusion, equation \eqref{eq:30} is also true with the sequences $\{\overline{\mu}_i^n\}_n$ and $\{\overline{\mu}_1^n\}_n\rightarrow \overline{\mu}_{1}^{*}= 1-\overline{\mu}_{2}^{*}\neq 1$, which proves the claim.


    \smallskip

  Continuing with the proof of the lemma, note that $\gamma^1$ and $\gamma^2$ share the same first spatial component $c_1$, the first integral condition (\ref{eq*}) coincides for both curves. Therefore, since $q_\epsilon\not\in I^-(\gamma^1)$, necessarily (recall Prop. \ref{c0}):
    \begin{equation}
      \label{eq:11}
d_2(x^o_2,c^1_2(t_n))\geq \Integral{t^o+\epsilon}{t_n}{\mu^n_{2}}{2}{\mu^n_{k}}\left(>
             d_{2}(x^o_{2},c^2_{2}(t_n))\right).
    \end{equation}
    Moreover, from the hypothesis, $b_{c_2^1}(x_2^o)=b_{c_2^2}(x_2^o)$. So, from the definition of Busemann function,
    \begin{equation}\label{x}
    \lim_{n}\left(d_2(x_2^o,c_2^1(t_n))-d_2(x_2^o,c_2^2(t_n))\right)=0.
    \end{equation}
    From \eqref{eq:11} and (\ref{x})
    \begin{equation}
      \label{eq:12}
\lim_n \left(\Integral{t^o+\epsilon}{t_n}{\mu^n_{2}}{2}{\mu^n_{k}}-d_2(x_2^o,c_2^1(t_n))\right)=0.
    \end{equation}

On the other hand, from the claim, the sequence of positive constants $\{\mu_2^n\}_n$ does not converge to $0$, so there exists ${\cal K}>0$ such that
   \begin{equation}
     \label{eq:13}
     \Integral{t^o}{t^o+\epsilon}{\mu_2^n}{2}{\mu_k^n}>{\cal K}>0\quad\hbox{for $n$ big enough.}
   \end{equation}
        So, putting together \eqref{eq:12} and \eqref{eq:13} we obtain that
\[
\lim_n \left(\Integral{t^o}{t_n}{\mu^n_{2}}{2}{\mu^n_{k}}-d_2(x_2^o,c_2^1(t_n))\right)>0,
  \]
  and thus,
  \[
\Integral{t^o}{t_n}{\mu^n_{2}}{2}{\mu^n_{k}}>d_2(x_2^o,c_2^1(t_n))\quad\hbox{for $n$ big enough.}
    \]
    From Prop. \ref{c0}, necessarily $q\in I^-(\gamma^1)$, a contradiction.

\end{proof}
\noindent As a direct consequence of Lemmas \ref{lemma:aux1} and \ref{lemma:aux2}, we obtain:
\begin{prop}\label{samecondsamepast}
Let $\gamma^i:[\omega,b)\rightarrow V$, $\gamma^i(t)=(t,c_1^i(t),c_2^i(t))$, $i=1,2$, be future-directed timelike curves. If $c_1^i(t)\rightarrow x_1^*\in M_1^C$, $i=1,2$, and $b_{c_2^1}=b_{c_2^2}$, then $I^-(\gamma^1)=I^-(\gamma^2)$.
\end{prop}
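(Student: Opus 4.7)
The plan is to prove the proposition by chaining the two lemmas through an intermediate curve that shares the first spatial component with $\gamma^1$ and the second with $\gamma^2$. This gives a direct two-step bridge between $I^-(\gamma^1)$ and $I^-(\gamma^2)$ without introducing any new machinery beyond the two lemmas.

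Concretely, I would consider the auxiliary curve $\tilde{\gamma}(t) := (t, c_1^1(t), c_2^2(t))$. Assuming for the moment that $\tilde{\gamma}$ is future-directed timelike, Lemma \ref{lemma:aux2} can be applied to $\gamma^1$ and $\tilde{\gamma}$: they share the first spatial component $c_1^1$, and their second spatial components have Busemann functions $b_{c_2^1} = b_{c_2^2}$ by hypothesis; this yields $I^-(\gamma^1) = I^-(\tilde{\gamma})$. Then Lemma \ref{lemma:aux1} applies to $\tilde{\gamma}$ and $\gamma^2$, since they share the second spatial component $c_2^2$, and both first spatial components $c_1^1, c_1^2$ converge to $x_1^*$; this yields $I^-(\tilde{\gamma}) = I^-(\gamma^2)$. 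Concatenating the two equalities gives the desired $I^-(\gamma^1) = I^-(\gamma^2)$.

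The only real obstacle is the timelike character of $\tilde{\gamma}$. Although $\gamma^1$ and $\gamma^2$ being timelike force $\alpha_1 g_1(\dot{c}_1^1, \dot{c}_1^1) < 1$ and $\alpha_2 g_2(\dot{c}_2^2, \dot{c}_2^2) < 1$ separately, their sum need not be bounded by $1$. To sidestep this, I would first invoke Lemma \ref{lemma:aux1} to replace $\gamma^1$ by an auxiliary timelike curve $\hat{\gamma}^1(t) := (t, \hat{c}_1(t), c_2^1(t))$, where $\hat{c}_1 \to x_1^*$ is chosen with sufficiently small speed so that $(t, \hat{c}_1(t), c_2^2(t))$ is timelike; the positivity of $1 - \alpha_2 g_2(\dot{c}_2^2, \dot{c}_2^2)$ along $\gamma^2$ leaves room for such a choice. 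Lemma \ref{lemma:aux1} gives $I^-(\gamma^1) = I^-(\hat{\gamma}^1)$, and one then runs the two-step argument above with $\hat{\gamma}^1$ in place of $\gamma^1$ and $(t, \hat{c}_1(t), c_2^2(t))$ as the intermediate curve. No new ideas are required beyond this simple interpolation, which is why the author characterises the conclusion as a direct consequence of the two lemmas.
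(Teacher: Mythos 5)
Your proposal is correct and is essentially the paper's own argument: the paper likewise introduces a single slowly-varying curve $c_1\to x_1^*$ making both $(t,c_1(t),c_2^i(t))$ timelike, applies Lemma \ref{lemma:aux1} to swap in this first component, and then applies Lemma \ref{lemma:aux2} to the resulting pair sharing that component. Your explicit attention to the timelike character of the interpolating curve only makes precise a point the paper passes over silently.
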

\begin{proof}
Let $c_1:[\omega,b)\rightarrow M_1$ be a curve with $c_1(t)\rightarrow x_1^*$ such that the curves $\overline{\gamma}^i:[\omega,\infty)\rightarrow V$, $\overline{\gamma}^i(t)=(t,c_1(t),c_2^i(t))$, $i=1,2$, are future-directed timelike. From Lemma \ref{lemma:aux1}, $I^-(\gamma^i)=I^-(\overline{\gamma}^i)$, $i=1,2$. But $\overline{\gamma}^1$, $\overline{\gamma}^2$ share the same first spatial components, and their second spatial components define the same Busemann function. Hence, Lemma \ref{lemma:aux2} ensures that $I^-(\overline{\gamma}^1)=I^-(\overline{\gamma}^2)$, as required.
\end{proof}

%

\noindent Summarizing, if we put together Props. \ref{structuraparcialsininfinito'}, \ref{prop:conddiferbordedif}
and \ref{samecondsamepast}, we deduce the following point set structure for the future c-completion of $(V,g)$:
\begin{thm}\label{futurecomploneinfinite}
  Let $(V,g)$ be a  {\multiwarped} spacetime as in \eqref{eq:1-aux}, and assume that the integral conditions \eqref{eq:9} hold. Then, there exists a bijection
 \begin{equation}
   \label{eq:10}
     \hat{V}\; \leftrightarrow \;  M_1^C\times \left(B(M_2)\cup \{\infty\}\right)\;\equiv\;
       \left( (a,b)\times M_1^C\times M_2^C\right) \cup M_{1}^{C} \times \left({\cal B}(M_2)\cup \{\infty\}\right).
%
  \end{equation}
 This bijection maps each indecomposable past set $P=I^-(\gamma)\in \hat{V}$, where $\gamma:[\omega,\Omega)\rightarrow V$, $\gamma(t)=(t,c_1(t),c_2(t))$, is any curve generating $P$, to a pair $(x_1^*,b_{c_2})$, where $x_1^*\in M_1^C$ is the limit point of the curve $c_1$. If $\Omega<b$, then $b_{c_2}=d_{(\Omega,x_2^*)}$, where $x_2^*$ is the limit point of $c_2$ (see \eqref{eq:46}), and thus, $P$ can be also identified with the limit point $(\Omega,x_1^*,x_2^*)$ of $\gamma$ (recall Prop. \ref{structuraparcialsininfinito'}).

\end{thm}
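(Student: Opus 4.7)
The strategy is to decompose $\hat{V}=(\hat{V}\setminus\hat{\partial}^{\B}V)\cup \hat{\partial}^{\B}V$ and assemble the bijection from the three preceding results. For the ``interior'' piece, Prop. \ref{structuraparcialsininfinito'} already yields a bijection $\hat{V}\setminus\hat{\partial}^{\B}V\leftrightarrow (a,b)\times M_1^C\times M_2^C$. Combined with the parametrization $(\Omega,x_2^*)\leftrightarrow d_{(\Omega,x_2^*)}$ of \eqref{eq:46} (which identifies $(a,b)\times M_2^C$ with the set of Busemann functions arising from curves having $\Omega<b$), this accounts for the summand $(a,b)\times M_1^C\times M_2^C$ on the right-hand side of \eqref{eq:10}.

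For the ``boundary'' piece $\hat{\partial}^{\B}V$, I define the map $P=I^-(\gamma)\mapsto(x_1^*,b_{c_2})$, where $\gamma(t)=(t,c_1(t),c_2(t))$ is any curve generating $P$ with divergent temporal component $\Omega=b$, $x_1^*\in M_1^C$ is the Cauchy limit of $c_1$ (which exists by the first integral condition in \eqref{eq:9} together with \eqref{eq:4}), and $b_{c_2}$ is the Busemann function of $c_2$ in the auxiliary Generalized Robertson-Walker spacetime $((a,b)\times M_2,-dt^2+\alpha_2 g_2)$ (to which the framework of Sec. \ref{sec:Robertson} applies thanks to the second integral condition in \eqref{eq:9}). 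Well-definedness (independence of the generating curve) and injectivity are immediate consequences of Prop. \ref{prop:conddiferbordedif}, whose contrapositive forbids two generating curves of the same TIP from producing distinct pairs.

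The step that I expect to be the main technical obstacle is surjectivity: given $(x_1^*,b)\in M_1^C\times(\mathcal{B}(M_2)\cup\{\infty\})$, one must construct a future-directed timelike $\gamma(t)=(t,c_1(t),c_2(t))$ in $V$ realizing this pair. For the second component, Sec. \ref{sec:Robertson} supplies a future-directed timelike curve $\sigma(t)=(t,c_2(t))$ in the one-fiber model with $b_{c_2}=b$ when $b\in\mathcal{B}(M_2)$, whereas for $b=\infty$ a constant curve $c_2\equiv x_2$ works (the divergent integral in \eqref{eq:9} forces $b_{c_2}\equiv\infty$); in either case $\alpha_2 g_2(\dot c_2,\dot c_2)<1$ pointwise. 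For the first component, if $x_1^*\in M_1$ take $c_1\equiv x_1^*$; if $x_1^*\in\partial^C M_1$, concatenate short paths joining a Cauchy sequence $\{y_n\}\to x_1^*$ in $M_1$, parametrized over intervals $[t_n,t_{n+1}]$ (with $t_n\nearrow b$) chosen long enough so that $\alpha_1 g_1(\dot c_1,\dot c_1)<1-\alpha_2 g_2(\dot c_2,\dot c_2)$ holds pointwise. This is feasible because the right-hand side is positive at each $t$ and the speed of $c_1$ can be freely tuned by reparametrization. The resulting $\gamma$ is timelike and satisfies $c_1(t)\to x_1^*$, and Prop. \ref{samecondsamepast} guarantees that $I^-(\gamma)$ depends only on $(x_1^*,b)$. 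Merging the two pieces yields \eqref{eq:10}; the final assertion that $b_{c_2}=d_{(\Omega,x_2^*)}$ when $\Omega<b$ is built into the construction via \eqref{eq:46}.
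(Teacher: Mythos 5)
Your proposal is correct and follows essentially the same route as the paper, whose proof of this theorem consists precisely of assembling Props. \ref{structuraparcialsininfinito'}, \ref{prop:conddiferbordedif} and \ref{samecondsamepast} over the decomposition $\hat{V}=(\hat{V}\setminus\hat{\partial}^{\B}V)\cup\hat{\partial}^{\B}V$. The only difference is that you additionally spell out surjectivity onto $M_1^C\times({\cal B}(M_2)\cup\{\infty\})$ (a step the paper leaves implicit), and your construction is sound: since the length budget $\int\sqrt{\bigl(1-\alpha_2 g_2(\dot c_2,\dot c_2)\bigr)/\alpha_1}\,dt$ is strictly positive, starting $c_1$ at a point sufficiently close to $x_1^*$ lets the concatenated paths fit under the timelike speed bound.
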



\subsubsection{Topological Structure}

 Next, we are going to extend previous study to a topological level, showing that the bijection obtained above is actually a homeomorphism when the corresponding product topology on $M_1^C\times (B(M_2)\cup \{\infty\})$ is considered.

 To this aim, we only need to prove the following equivalence:
given $P\equiv (x_1^*,b_{c_2})\in \hat{V}$ and $\{P_n\}_n\equiv \{(x_1^n,b_{c_2^n})\}_n\subset \hat{V}$,
\begin{equation}\label{equ}
P\in \hat{L}(\{P_n\}_n)\iff  x_1^n\rightarrow x^*_1\;\;\hbox{and}\;\; b_{c_2}\in \hat{L}(\{b_{c_2^n}\}_n).
  \end{equation}
  Under the hypothesis of $M_1^C$ and $M_2^C$ being locally compact, the equivalence (\ref{equ}) for the case $b_{c_2}\equiv d_{(\Omega,x_2)}$ is already proved in Prop. \ref{topcurvasfinitas}. In fact, if $P_n=I^-(\gamma^n)$ with $\gamma^n:[\omega,\Omega_n)\rightarrow V$, then $\Omega_n<b$ for $n$ big enough. In particular, $b_{c_2^n}\equiv d_{(\Omega_n,x^n_2)}$ with $x_2^n\in M_2^C$ (see \eqref{eq:46}). Moreover, Prop. \ref{topcurvasfinitas} implies that $(\Omega_n,x_1^n,x_2^n)\rightarrow (\Omega,x^*_1,x^*_2)$. Hence, $\{d_{(\Omega_n,x_2^n)}\}_n$ converges pointwise to $d_{(\Omega,x_2)}$, and thus, $d_{(\Omega,x_2)}\in \hat{L}(\{d_{(\Omega_n,x_2^n)}\}_n)$ (see Prop. \ref{prel:PropToponefibre}). So, to finish the proof of (\ref{equ}), we can focus just on the case $b_{c_2}\in {\cal B}(M_2)$.

  We begin with some preliminary results.
  \begin{lemma}\label{lemma:aux5}
    Let $P,P'\in \hat{V}$ and $\{P_n\}_n\subset \hat{V}$, and assume that $P\equiv (x_1,b_{c_2}), P'\equiv (x_1',b_{c'_2})$ and $P_n\equiv (x_1^n,b_{c_2^n})$ belong to $M_1^C\times \left(B(M_2)\cup \{\infty\}\right)$ for all $n$ (recall the identification in \eqref{eq:10}). Then, the following statements hold:
    \begin{itemize}
    \item[(i)] If $x_1=x_1'$, then
        \[
b_{c_2}\leq b_{c_2'} \iff P\subset P'.
        \]
    \item[(ii)] If $x_1^n\rightarrow x_1$, then
      \[
P\subset {\mathrm LI}(\{P_n\}_n) \iff b_{c_2}\leq {\mathrm lim\, inf}_n (\{b_{c_2^n}\}_n).
        \]
    \end{itemize}
  \end{lemma}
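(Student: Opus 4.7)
The plan is to reduce both parts to computations with the analytic characterization of the chronological relation from Prop.~\ref{c0} (and its extensions via Prop.~\ref{pastofcurve} and Thm.~\ref{futurecomploneinfinite}), combined with the monotonicity of the functional in \eqref{eq:25} and the perturbation-of-coefficients device used throughout the proofs of Props.~\ref{c0} and \ref{pastofcurve}. For part (i), Lemma~\ref{lemma:aux1} allows us to replace the generating curves by $\gamma(t)=(t,c_1(t),c_2(t))$ and $\gamma'(t)=(t,c_1(t),c_2'(t))$ sharing a common first spatial component (so that both first coordinates converge to $x_1=x_1'$).

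For the implication $b_{c_2}\leq b_{c_2'}\Rightarrow P\subset P'$, given $q=(t^o,x_1^o,x_2^o)\in P$, I would pick $t^*\in[\omega,b)$ with $q\ll\gamma(t^*)$ and, by Prop.~\ref{c0}, positive coefficients $\mu_1,\mu_2$ with $\mu_1+\mu_2=1$ realizing strict integral inequalities against $(c_1(t^*),c_2(t^*))$. The monotonicity of $t\mapsto \int_{\C}^{t}\alpha_2(s)^{-1/2}ds-d_2(x_2^o,c_2'(t))$ and the hypothesis $b_{c_2}(x_2^o)\leq b_{c_2'}(x_2^o)$ allow one to pick $t^{**}<b$ (possibly much larger than $t^*$) so that the analogous second-factor inequality holds against $c_2'(t^{**})$; the first-factor inequality, which is strict, is preserved upon a small perturbation of $(\mu_1,\mu_2)$ as in the proof of Prop.~\ref{c0}, yielding $q\ll\gamma'(t^{**})$. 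The converse is proved by contraposition: if $b_{c_2}(x_2^o)>b_{c_2'}(x_2^o)$ at some $x_2^o\in M_2$, then the witness construction in the second case of Prop.~\ref{prop:conddiferbordedif} produces a point of the form $(t^o,x_1^o,x_2^o)\in P\setminus P'$, contradicting $P\subset P'$.

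For part (ii), the forward implication follows by contradiction: if $b_{c_2}(x_2^o)>\liminf_n b_{c_2^n}(x_2^o)$ at some $x_2^o\in M_2$, pass to a subsequence along which $b_{c_2^{n_k}}(x_2^o)$ is bounded above by a value strictly less than $b_{c_2}(x_2^o)$; the witness construction from the second case of Prop.~\ref{prop:conddiferbordedif}, combined with $x_1^n\to x_1$, then yields a point $q=(t^o,x_1^o,x_2^o)\in P$ lying outside every $P_{n_k}$, breaking $P\subset\mathrm{LI}(\{P_n\}_n)$. For the reverse direction, take $q=(t^o,x_1^o,x_2^o)\in P$ and apply Prop.~\ref{c0} to the point $q_\epsilon=(t^o+\epsilon,x_1^o,x_2^o)\in P$ (with $\epsilon>0$ small) and to a time $t^*<b$, to obtain coefficients $\mu_1,\mu_2>0$ with strict integral inequalities against $(c_1(t^*),c_2(t^*))$. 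Using $x_1^n\to x_1$ and the Dominated Convergence Theorem (justified by the finite first integral in \eqref{eq:9}), the first-factor inequality persists against $c_1^n(t^{**}_n)$ for large $n$; using $b_{c_2}(x_2^o)\leq\liminf_n b_{c_2^n}(x_2^o)$ together with the monotonicity of \eqref{eq:25}, one can choose $t^{**}_n<b$ so that the second-factor inequality also holds against $c_2^n(t^{**}_n)$ for large $n$. A slight perturbation of $(\mu_1,\mu_2)$ then delivers $q\ll\gamma^n(t^{**}_n)$, and hence $q\in P_n$ eventually.

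The main obstacle will be the careful bookkeeping required to transfer one-factor Busemann information on $M_2$ into the coupled integral inequalities of Prop.~\ref{c0}, in which the two factors are entangled through the coefficients $\mu_1,\mu_2$ inside the integrand. Limits $t\to b$ and $n\to\infty$ may degenerate one of these coefficients to zero, so the perturbation-of-coefficients trick from the proofs of Props.~\ref{c0} and \ref{pastofcurve} (shrinking one coefficient slightly against the other to restore strict inequalities in both factors simultaneously) will have to be applied at every limit step. Once this device is systematically in place, both parts (i) and (ii) reduce to direct, if somewhat tedious, limit computations.
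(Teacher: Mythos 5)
Your treatment of the two ``easy'' directions is fine: projecting to the Generalized Robertson--Walker factor $\bigl((a,b)\times M_2,-dt^2+\alpha_2 g_2\bigr)$ (equivalently, arguing by contraposition with the witness construction of Prop.~\ref{prop:conddiferbordedif}) does give $P\subset P'\Rightarrow b_{c_2}\leq b_{c_2'}$ and $P\subset{\mathrm LI}(\{P_n\}_n)\Rightarrow b_{c_2}\leq\liminf_n b_{c_2^n}$, because the coupled integrand is dominated by $\alpha_2^{-1/2}$. The gap is in the two hard directions. Your plan there is: secure the second-factor inequality against $c_2'(t^{**})$ (resp.\ $c_2^n(t^{**}_n)$) from the Busemann hypothesis and monotonicity, secure the first-factor inequality separately, and then ``perturb $(\mu_1,\mu_2)$''. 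But having each one-factor inequality with its own positive margin does \emph{not} produce a common admissible pair $(\mu_1,\mu_2)$ for the coupled system: as $\mu_1\to 0$ the first coupled integral tends to $0$ while $d_1(x_1^o,c_1(t^{**}))\to d_1(x_1^o,x_1)>0$, and as $\mu_2\to 0$ the second coupled integral tends to $0$. Worse, with $\mu_1,\mu_2$ \emph{fixed} the deficit $\int_{t^o}^{t}\alpha_2^{-1/2}\,\bigl(1-(1+\tfrac{\mu_1\alpha_2}{\mu_2\alpha_1})^{-1/2}\bigr)\,ds$ between the uncoupled and coupled second-factor integrals can diverge as $t\to b$, whereas the margin supplied by $b_{c_2}\leq b_{c_2'}$ stays bounded; so the second-factor coupled inequality can simply fail for large $t^{**}$. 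The perturbation-of-coefficients trick from Props.~\ref{c0} and \ref{pastofcurve} only rescues the situation when one of the fiber distances is \emph{zero}; it does not arbitrate between two positive distances competing for the same $\mu$'s.

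What is missing is the decoupling device the paper uses, namely Lemma~\ref{lemma:aux3} together with auxiliary points whose second spatial component stays \emph{on the original curve} $c_2$. For (i)$\Leftarrow$ one interposes $(t_n,c_1'(s_n),c_2(t_n))$: the relation $(t_n,c_1'(s_n),c_2(t_n))\ll\gamma'(s_n)$ has vanishing first-fiber distance (so the GRW relation on $M_2$ plus the perturbation trick suffices), while the comparison of $\gamma$ with these points has coinciding second components and is handled once and for all by Lemma~\ref{lemma:aux3} --- whose proof is precisely where the degenerating $\mu_1^n$ are controlled, via the dichotomy ``$\mu_1^*=0$ forces $d_1(x_1^o,x_1^*)=0$'' and the finiteness of $\int^b\alpha_1^{-1/2}$. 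The reverse direction of (ii) needs the same lemma applied to the interleaved sequence $(l_n,c_1^n(r_n),c_2(l_n))$. Without Lemma~\ref{lemma:aux3} (or an equivalent argument), your limit computations do not close.
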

  \begin{proof} Let $\gamma:[\omega,\Omega)\rightarrow V$, $\gamma':[\omega',\Omega')\rightarrow V$ and $\gamma^n:[\omega^n,\Omega^n)\rightarrow V$ be future-directed timelike curves generating $P,P'$ and $P_n$, resp.

    (i) First, let us prove the implication to the left. Assume that $\gamma(t)=(t,c_1(t),c_2(t))$ and $\gamma'(t)=(t,c_1'(t),c_2'(t))$ satisfy that $c_{1}(t) \rightarrow x_{1}$, $c_{1}'(t) \rightarrow x_{1}$ and $b_{c_{2}}$, $b_{c_{2}'}$ are their Busemann functions. Consider the future-directed timelike curves $\sigma(t)=(t,c_2(t))$ and $\sigma'(t)=(t,c_2'(t))$ in the Generalized Robertson-Walker spacetime $$\left( (a,b)\times M_2,-dt^2+\alpha_2g_2\right).$$ Since $P\subset P'$, necessarily $P(b_{c_2})=I^-(\sigma)\subset I^-(\sigma')=P(b_{c_2'})$, and thus, $b_{c_2}\leq b_{c_2'}$ (recall \eqref{eq:27} and \eqref{eq:28}).

For the implication to the right, assume that $x_1=x_1'$ and $b_{c_2}\leq b_{c'_2}$. It suffices to show the existence of a sequence $\sigma=(t_n,y_1^n,c_2(t_n))$ with $\{t_n\}_n\nearrow \Omega$, satisfying $\{y_1^n\}_n\rightarrow x_1$ and $\sigma\subset P'$. In fact, in this case, Lemma \ref{lemma:aux3} ensures that $P\subset {\rm LI}(\sigma)$ and, taking into account that $\sigma\subset P'$, necessarily $P\subset P'$.

To this aim, take $\{t_n\}_n\nearrow \Omega$ and observe that, by hypothesis, $b_{c_2}\leq b_{c'_2}$. So, in the Generalized Robertson-Walker spacetime $\left( (a,b)\times M_2,-dt^2+{\alpha_2}g_2\right)$, the inclusion $P(b_{c_2})\subset P(b_{c'_2})$ holds (recall equations \eqref{eq:27} and \eqref{eq:28}). In particular, since the future-directed timelike curves $\sigma(t)=(t,c_2(t))$ and $\sigma'(t)=(t,c'_2(t))$ satisfy $I^-(\sigma)=P(b_{c_2})$ and $ I^-(\sigma')=P(b_{c'_2})$, there exists a sequence $\{s_n\}_n$, with $\{s_n\}_n\nearrow \Omega'$, such that
$\sigma(t_n)=(t_n,c_2(t_n))\ll (s_n,c'_2(s_n))=\sigma'(s_n)$. Let us show that $\sigma=\{(t_n,c_1'(s_n),c_2(t_n))\}$ is the required sequence. From construction and the fact that $(t_n,c_1'(s_n),c_2(t_n))\ll (s_n,c_1'(s_n),c'_2(s_n))$ in $V$ for all $n$, necessarily $\sigma\subset P'$. Moreover, since $\{s_n\}_n\nearrow \Omega'$, necessarily $c_1'(s_n)\rightarrow x_1'=x_1$, as desired.

\smallskip

(ii) For the implication to the right, assume that $P\subset {\mathrm LI}(\{P_n\}_n)$ and let us show that $b_{c_2}\leq \liminf (\{b_{c_2^n}\}_n)$. Denote by $\sigma(t)=(t,c_2(t))$ and $\sigma_n(t)=(t,c_2^n(t))$ future-directed timelike curves in the Generalized Robertson Walker model \[\left((a,b)\times M_2,-dt^2+ \alpha_2g_2\right).\] Since $P\subset {\mathrm LI}(\{P_n\}_n)$, necessarily
  \[
P(b_{c_2})=I^-(\sigma)\subset {\mathrm LI}(\{I^-(\sigma_n)\}_n)={\mathrm LI}(\{P(b_{c_2^n})\}_n)
    \]
(where we are considering past sets in the associated Generalized Robertson Walker model), and the conclusion follows from \eqref{eq:50}.

        \smallskip

        For the implication to the left, assume that $b_{c_2}\leq {\mathrm lim\, inf}_n (\{b_{c_2^n}\}_n)$ and let us prove that $P\subset {\mathrm LI}(\{P_n\}_n)$. Let $\{t_k\}\nearrow \Omega$ be an arbitrary sequence. For each $k$, and from the timelike character of $\gamma$, we have $(t_k,c_2(t_k))\ll (t,c_2(t))$ in the Generalized Robertson-Walker spacetime $\left( (a,b)\times M_2,-dt^2+{\alpha_2}g_2\right)$ for all $t>t_k$. From \eqref{eq:26} and the increasing character of \eqref{eq:25},

      \[
\int_\C^{t_k}\frac{1}{\sqrt{\alpha_2(s)}}ds < b_{c_2}(c_2(t_k))=\lim_{t\rightarrow \Omega} \left(\int_\C^t \frac{1}{\sqrt{\alpha_2(s)}}ds - d_2(c_2(t_k),c_2(t))\right).
        \]
Since $b_{c_2}\leq {\mathrm lim\,inf}(\{b_{c_2^n}\}_n)$, there exists an increasing sequence $\{n_k\}_k$ such that
\begin{equation}
  \label{eq:16}
\int_\C^{t_k}\frac{1}{\sqrt{\alpha_2(s)}}ds<b_{c^n_2}(c_2(t_k))=\lim_{r\rightarrow \Omega_n} \left(\int_\C^{r}\frac{1}{\sqrt{\alpha_2(s)}}ds- d_2(c_2(t_k),c_2^n(r))\right)\quad\hbox{$\forall$ $n\geq n_k$.}
\end{equation}
  For each $n_k\leq n<n_{k+1}$, consider $r_n\in [\omega_{n},\Omega_n)$ such that
  \begin{equation}
    \label{eq:24}
\int_\C^{t_k}\frac{1}{\sqrt{\alpha_2(s)}}ds< \int_\C^{r_n}\frac{1}{\sqrt{\alpha_2(s)}}ds-d_2(c_2(t_k),c_2^n(r_n)),\qquad d_{1}(c_1^n(r_n),x_1^n)<\frac{1}{2^n},
  \end{equation}
    (for the first inequality recall \eqref{eq:16}; for the second one, recall that $c_1^n(t)\rightarrow x_1^n$). From the first inequality, it follows that
    \[
    (t_k,c_2(t_k))\ll (r_n,c_2^n(r_n))\quad\hbox{for $n_k\leq n< n_{k+1}$ and all $k$.}
    \]
    However, since $\{(t_k,c_2(t_k))\}$ is a chronological chain, previous chronological relation is true for all $n\geq n_k$: in fact, if $n\geq n_k$, there exists $k'(\geq k)$ such that $n_{k'}\leq n < n_{k'+1}$. As we have noted before $(t_{k'},c_2(t_{k'}))\ll (r_n,c_2^n(r_n)))$ but, taking into account $(t_k,c_2(t_k))\ll (t_{k'},c_2(t_{k'}))$, necessarily $(t_{k},c_2(t_{k}))\ll (r_n,c_2^n(r_n)))$.

    Next, define the sequence $\sigma=\{(l_n,c^n_1(r_n),c_2(l_n))\}_n$, where $l_n:=t_k$ if $n_{k}\leq n< n_{k+1}$. Since $\{t_k\}_k\nearrow \Omega$, necessarily $\{l_n\}_n\rightarrow \Omega$. Moreover, since $(t_k,c_2(t_k))\ll (r_n,c_2^n(r_n))$,
    \[(l_n,c^n_1(r_n),c_2(l_n))=(t_k,c_1^n(r_n),c_2(t_k))\ll (r_n,c_1^n(r_n),c^n_2(r_n))=\gamma^n(r_n),\] hence $(l_n,c^n_1(r_n),c_2(l_n))\in P_n$ for all $n$. Finally, note that $\sigma$ satisfies the conditions of Lemma \ref{lemma:aux3}, as $\{l_n\}_n\rightarrow \Omega$ and $c_1^n(r_n)\rightarrow x_1$ (recall that $c_1^n(t)\rightarrow x_1^n$, $x_1^n\rightarrow x_1$ from hypothesis and the second inequality in \eqref{eq:24}). Therefore,
\[
    P\subset {\mathrm LI}(\{I^-(l_n,c^n_1(r_n),c_2(l_n))\}_n)\subset {\mathrm LI}(\{P_n\}),
    \]
    as desired.
  \end{proof}


  \begin{prop}\label{prop:topcharac} Let $P\in \hat{V}$ and $\{P_n\}_n\subset \hat{V}$, and assume that $P\equiv (x_1,b_{c_2})$ and $P_n\equiv (x_1^n,b_{c_2^n})$ (in $M_1^C\times \left(B(M_2)\cup \{\infty\} \right)$) for all $n$. Then,
$P\in \hat{L}(\{P_n\}_n)$ if, and only if, $x_1^n\rightarrow x_1$ and $b_{c_2}\in \hat{L}(\{b_{c_2^n}\}_n)$.
  \end{prop}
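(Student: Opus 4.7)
The strategy is to prove both implications by decomposing the convergence at the IP level into two independent pieces of information: the limit of the first (Cauchy) coordinate $x_1^n$ and the Busemann-function behaviour of the second component. Once this decomposition is established, Lemma \ref{lemma:aux5} gives a bridge between the $\mathrm{LI}/\mathrm{LS}$ operators on IPs and the $\liminf/\limsup$ on Busemann functions, so the problem reduces to facts already known in the GRW setting (Prop. \ref{prel:PropToponefibre}). The case where $b_{c_2}$ is non-proper (i.e.\ $\Omega<b$) is already covered by Prop. \ref{topcurvasfinitas}, so we focus on the case $b_{c_2}\in\mathcal{B}(M_2)\cup\{\infty\}$.

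For the implication ($\Rightarrow$), assume $P\in\hat{L}(\{P_n\}_n)$. First I show $x_1^n\to x_1$ in $M_1^C$. The argument mimics Prop. \ref{topcurvasfinitas} but uses only the first factor: by local compactness of $M_1^C$, if $x_1^n\not\to x_1$ I can extract a subsequence whose first coordinates converge (inside a precompact neighbourhood of $x_1$) to some $\tilde{x}_1\neq x_1$; building with this data an IP $P'\equiv(\tilde{x}_1,g)\supsetneq P$ contained in $\mathrm{LS}(\{P_n\}_n)$ contradicts the maximality of $P$. The key ingredient here is the finiteness of $\int_{\mathfrak{c}}^{b}\alpha_1^{-1/2}ds$, which (exactly as in Remark \ref{rem:1} and the proof of Prop. \ref{structuraparcialsininfinito}) distinguishes the first-coordinate limit. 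Once $x_1^n\to x_1$ is in hand, Lemma \ref{lemma:aux5}(ii) applied to $P\subset\mathrm{LI}(\{P_n\}_n)$ gives immediately $b_{c_2}\le\liminf_n b_{c_2^n}$, which is condition (a) of \eqref{eq:22}.

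For the maximality condition (b) of \eqref{eq:22}, suppose $g\in B(M_2)$ satisfies $b_{c_2}\le g\le\limsup_n b_{c_2^n}$, and let $P'\equiv(x_1,g)\in\hat V$. By Lemma \ref{lemma:aux5}(i), $P\subset P'$. I then have to verify $P'\subset\mathrm{LS}(\{P_n\}_n)$: given $q=(t^o,x_1^o,x_2^o)\in P'$, write the characterization in Prop. \ref{pastofcurve} with $g$ playing the role of $b_{c'_2}$; using $g(x_2^o)\le\limsup_n b_{c_2^n}(x_2^o)$ together with $x_1^n\to x_1$, one finds a subsequence with $q\in P_{n_k}$ by the same Dominated Convergence / continuity argument already exploited in the proof of Lemma \ref{lemma:aux5}(ii). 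This gives $P\subsetneq P'\subset\mathrm{LS}(\{P_n\}_n)$ unless $g=b_{c_2}$, and the maximality of $P$ in $\hat{L}(\{P_n\}_n)$ forces $P=P'$, hence $g=b_{c_2}$ by Lemma \ref{lemma:aux5}(i).

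The converse ($\Leftarrow$) runs symmetrically. Lemma \ref{lemma:aux5}(ii) immediately gives $P\subset\mathrm{LI}(\{P_n\}_n)$ from $x_1^n\to x_1$ and $b_{c_2}\le\liminf_n b_{c_2^n}$. For the maximality of $P$ in $\mathrm{LS}(\{P_n\}_n)$, take any $P'\equiv(x_1',b_{c_2'})\in\hat V$ with $P\subsetneq P'$; I first show $x_1'=x_1$ by the short contradiction already used in Prop. \ref{structuraparcialsininfinito}: choose $t^o$ close to $b$ so that $\int_{t^o}^{b}\alpha_1^{-1/2}ds$ is smaller than $d_1(x_1,x_1')/3$, and then $q=\gamma(t^o)\in P$ cannot belong to $P'$, contradicting $P\subset P'$. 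With $x_1=x_1'$, Lemma \ref{lemma:aux5}(i) yields $b_{c_2}\le b_{c'_2}$, and the argument of the previous paragraph (run in the opposite direction) shows that $P'\subset\mathrm{LS}(\{P_n\}_n)$ implies $b_{c'_2}\le\limsup_n b_{c_2^n}$. The hypothesis $b_{c_2}\in\hat L(\{b_{c_2^n}\}_n)$ then forces $b_{c_2}=b_{c'_2}$, and therefore $P=P'$, completing the proof. The main obstacle is the first step of the forward direction---forcing $x_1^n\to x_1$---which is essentially where local compactness of $M_1^C$ and the finiteness of the first warping integral enter decisively.
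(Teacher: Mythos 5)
Your overall architecture is the paper's: reduce to the $\Omega=b$ case, use Lemma \ref{lemma:aux5} as the bridge between ${\mathrm{LI}}/{\mathrm{LS}}$ of IPs and $\liminf/\limsup$ of Busemann functions, and settle maximality by comparing a candidate larger IP $\overline{P}\equiv(\overline{x}_1,b_{\overline{c}_2})$ after passing to a subsequence. Those parts are sound. However, the very first step of your forward implication --- the one you yourself single out as the main obstacle --- is argued incorrectly, in two ways. First, if $x_1^n\not\rightarrow x_1$ you only get a subsequence with $d_1(x_1^{n_k},x_1)\geq \epsilon_0$; such a subsequence lies \emph{outside} every small neighbourhood of $x_1$, so it cannot be placed ``inside a precompact neighbourhood of $x_1$'', and local compactness of $M_1^C$ gives no convergent sub-subsequence for an arbitrary sequence of points (they may escape every compact set). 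Second, even granting a limit $\tilde{x}_1\neq x_1$, the contradiction you aim for requires an IP $P'\equiv(\tilde{x}_1,g)$ with $P\subsetneq P'$ inside ${\mathrm{LS}}(\{P_n\}_n)$; but by the argument of Prop.\ \ref{structuraparcialsininfinito} / Prop.\ \ref{prop:conddiferbordedif} (using the finiteness of $\int^b\alpha_1^{-1/2}$), two IPs whose first spatial components converge to distinct points of $M_1^C$ can never be nested, so no such $P'$ exists and maximality of $P$ is never threatened. The scheme you are transplanting from Prop.\ \ref{topcurvasfinitas} works there because the extracted points lie on the \emph{compact boundary} of a precompact neighbourhood and are chronologically above the chain generating $P$; neither feature is available here.

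The correct (and simpler) argument is the one the paper borrows from Thm.\ \ref{futurestructurefiniteconditions}: it contradicts $P\subset{\mathrm{LI}}(\{P_n\}_n)$ directly rather than the maximality clause. If $d_1(x_1^n,x_1)>\epsilon_0$ along a subsequence, choose $t^o$ with $\int_{t^o}^{b}\alpha_1(s)^{-1/2}ds<\epsilon_0/3$ and a point $q=(t^o,x_1^o,x_2^o)\in P$ with $d_1(x_1^o,x_1)<\epsilon_0/3$; then for those $n$ the inequality of Prop.\ \ref{pastofcurve} in the first fiber cannot hold, so $q\notin P_n$ and $P\not\subset{\mathrm{LI}}(\{P_n\}_n)$. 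No compactness is needed for this step. With that repair, the remainder of your proof (condition (a) via Lemma \ref{lemma:aux5}(ii); maximality via a subsequence on which $\limsup$ becomes $\liminf$; the converse via $P\subset P'\Rightarrow x_1=x_1'$ and Prop.\ \ref{samecondsamepast}) matches the paper's proof, modulo the fact that your ``${\mathrm{LS}}$ implies $\limsup$'' claim should also be routed through the same subsequence trick rather than asserted as a separate lemma.
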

  \begin{proof}
    For the implication to the right, and reasoning as in the proof of Thm. \ref{futurestructurefiniteconditions}, it follows that $x_1^n\rightarrow x_1$ (recall the finite warping integral in \eqref{eq:9} and Remark \ref{rem:1}). Hence, we will focus on $b_{c_2}\in \hat{L}(\{b_{c_2^n}\})$. From Lemma \ref{lemma:aux5} and the fact that $P\in \hat{L}(\{P_n\}_n)$, necessarily $b_{c_2}\leq \liminf(\{b_ {c_2^n}\}_n)$. So, $b_{c_2}\in \hat{L}(\{b_{c_2^n}\})$ follows if we prove that $b_{c_2}$ is maximal into $\limsup(\{b_{c^n_2}\}_n)$. Consider any $b_{\overline{c}_2}$ such that $b_{c_2}\leq b_{\overline{c}_2}\leq {\mathrm lim\,sup}(\{b_{c^n_2}\}_n)$, and consider the associated past set $\overline{P}\equiv (x_1,b_{\overline{c}_2})$. Up to a subsequence, we can assume that $b_{\overline{c}_2}\leq {\mathrm lim\,inf}(\{b_{c^n_2}\}_n)$. From Lemma \ref{lemma:aux5}, $P\subset \overline{P}$ and $\overline{P}\subset {\mathrm LI}(\{P_{n}\}_n)$. But $P$ is maximal into the superior limit of the sequence $\{P_n\}_n$, so necessarily $P=\overline{P}$. From Prop. \ref{prop:conddiferbordedif} we have that $b_{c_2}=b_{\overline{c}_2}$ so the maximal character of $b_{c_2}$ into $\limsup(\{b_{c_2^n}\}_n)$ is obtained.

    \smallskip

    For the implication to the left, first note that $P\subset {\mathrm LI}(\{P_n\}_n)$ (recall Lemma \ref{lemma:aux5} and the definition of $\hat{L}$ for Busemann functions \eqref{eq:22}). So, we only need to focus on the maximal character of $P$ into ${\mathrm LS}(\{P_n\})$. Take $\overline{P}$ an indecomposable past set with $P\subset \overline{P}$ and maximal into ${\mathrm LS}(\{P_n\})$, and let us prove that $P=\overline{P}$. Assume that $\overline{P}\equiv (\overline{x}_1,b_{\overline{c}_2})$. Up to a subsequence, we can also assume that $\overline{P}\subset {\mathrm LI}(\{P_n\})$, hence $\overline{P}\in \hat{L}(\{P_n\})$. Hence, from previous part, $x_1^n\rightarrow \overline{x}_1$. But, by hypothesis, $x_1^n\rightarrow x_1$, obtaining that $x_1=\overline{x}_1$. Once this is observed, Lemma \ref{lemma:aux5} ensures both, $b_{c_2}\leq b_{\overline{c}_2}$ and $b_{\overline{c}_2}\leq \limsup(\{b_{c_2^n}\})$. Since $b_{c_2}\in \hat{L}(\{b_{c_2^n}\})$, necessarily $b_{c_2}=b_{\overline{c}_2}$, and so, $P=\overline{P}$ (recall Prop. \ref{samecondsamepast}).
  \end{proof}

\noindent Summarizing, we are in conditions to deduce the following result:

\begin{thm}\label{futurecomploneinfinite}
  Let $(V,g)$ be a  {\multiwarped} spacetime as in \eqref{eq:1-aux}, and assume that the integral conditions in \eqref{eq:9} are satisfied. If $M_1^C$ and $M_2^C$ are locally compact, the bijection (\ref{eq:10}) becomes a homeomorphism.
\end{thm}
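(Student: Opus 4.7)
The plan is to leverage Proposition \ref{prop:topcharac}, which does essentially all of the work. Both the chronological topology on $\hat{V}$ and the product topology on $M_1^C\times (B(M_2)\cup\{\infty\})$ (where $M_1^C$ carries its metric topology and $B(M_2)\cup\{\infty\}$ carries the future chronological topology from Section \ref{sec:Robertson}) are sequential and $T_1$. Hence verifying the bijection \eqref{eq:10} is a homeomorphism reduces to checking that the limit operators on the two sides correspond under the bijection.

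First, I would invoke Proposition \ref{prop:topcharac} directly: for $P\equiv (x_1,b_{c_2})$ and $P_n\equiv (x_1^n,b_{c_2^n})$, one has $P\in\hat{L}(\{P_n\}_n)$ if and only if $x_1^n\to x_1$ in $M_1^C$ and $b_{c_2}\in\hat{L}(\{b_{c_2^n}\}_n)$. This is precisely bicontinuity of the bijection at the level of convergent sequences, and by sequentiality upgrades to a homeomorphism.

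Second, I would make explicit the compatibility of this with the decomposition written on the right of \eqref{eq:10}. For pairs $(x_1,b_{c_2})$ with $b_{c_2}=d_{(\Omega,x_2)}\in B(M_2)\setminus\mathcal{B}(M_2)$ (i.e. $\Omega<b$), Proposition \ref{prel:PropToponefibre}, together with local compactness of $M_2^C$, identifies convergence in the future chronological topology with pointwise convergence and so with convergence of the triples $(\Omega_n,x_1^n,x_2^n)\to (\Omega,x_1,x_2)$ in $(a,b)\times M_1^C\times M_2^C$. This is consistent with Proposition \ref{topcurvasfinitas} restricted to this locus. For pairs $(x_1,b_{c_2})$ with $b_{c_2}\in\mathcal{B}(M_2)\cup\{\infty\}$, the second factor is controlled directly by $\hat{L}$ on $B(M_2)\cup\{\infty\}$.

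The main obstacle has effectively been dealt with in Proposition \ref{prop:topcharac}, whose most delicate step was the maximality half of the equivalence: given $P\subset\overline{P}$ with $\overline{P}$ maximal in $\mathrm{LS}(\{P_n\}_n)$, one first extracts from hypothesis the fact that the $M_1^C$-component of $\overline{P}$ coincides with $x_1$ (which is where local compactness of $M_1^C$ and the finite integral condition $\int_{\C}^b \alpha_1^{-1/2}<\infty$ enter, via arguments in the style of Theorem \ref{futurestructurefiniteconditions} and Remark \ref{rem:1}), and only then compares the two Busemann components using Lemma \ref{lemma:aux5} together with Proposition \ref{samecondsamepast}. With that machinery in place, the present theorem becomes a formal assembly.
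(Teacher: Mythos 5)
Your proposal is correct and follows essentially the same route as the paper: the paper's proof is precisely the assembly of Prop.~\ref{topcurvasfinitas} (for the locus $\hat{V}\setminus\hat{\partial}^{\B}V$, i.e.\ the case $b_{c_2}=d_{(\Omega,x_2)}$ with $\Omega<b$, via Prop.~\ref{prel:PropToponefibre}) with Prop.~\ref{prop:topcharac} (for the case $b_{c_2}\in\mathcal{B}(M_2)\cup\{\infty\}$), using that both topologies are determined by their limit operators. Your identification of the maximality half of Prop.~\ref{prop:topcharac} as the delicate step also matches where the paper places the real work.
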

\begin{proof}
  From Prop. \ref{topcurvasfinitas}, the bijection between $\hat{V}\setminus \hat{\partial}^{\B} V$ and $(a,b) \times M_1^C\times M_2^C$ is a homeomorphism if we assume that $M_1^C$ and $M_2^C$ are locally compact. From Prop. \ref{prop:topcharac}, the homeomorphism can be extended to the bijection (\ref{eq:10}).
\end{proof}

\section{The past c-completion of doubly warped spacetimes}\label{ss6}

Obviously, similar arguments provide the corresponding results for the past c-completion:

\begin{thm}\label{pfuturestructurefiniteconditions}
  Let $(V,g)$ be a {\multiwarped} spacetime as in (\ref{eq:1-aux}), and assume that the integral conditions
  \begin{equation}
  \label{eqq:7}
  \int_{a}^{\C}\frac{1}{\sqrt{\alpha_i(s)}}ds<\infty, \qquad \hbox{$i=1,2$}\quad\hbox{for some $\C\in (a,b)$.}
\end{equation}
 hold. Then, there exists a bijection
  \begin{equation}
    \label{eqq:8}
    \check{V}\; \leftrightarrow \; [a,b) \times M_1^C\times M_2^C
  \end{equation}
  which maps each IF $F\in \check{V}$ to the limit point $(\Omega,x_1,x_2)\in [a,b)\times M_1^C\times M_2^C$ of any past-directed timelike curve generating $F$. Moreover, if $M_1^C$ and $M_2^C$ are locally compact, then this bijection becomes a homeomorphism.
\end{thm}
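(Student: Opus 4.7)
The plan is to reduce this statement directly to Theorem \ref{futurestructurefiniteconditions} via a time reversal argument, exploiting the inherent time symmetry of the multiwarped structure \eqref{eq:1-aux}. Specifically, I would consider the time-reversed spacetime $(\tilde V, \tilde g)$ with $\tilde V = (-b, -a) \times M_1 \times M_2$ and $\tilde g = -dt^2 + \tilde\alpha_1 g_1 + \tilde\alpha_2 g_2$, where $\tilde\alpha_i(s) := \alpha_i(-s)$. The map $\phi: V \to \tilde V$, $(t, x_1, x_2) \mapsto (-t, x_1, x_2)$, is an anti-isometry reversing the time orientation, so it interchanges the chronological past and future: $\phi(I^+_V(p)) = I^-_{\tilde V}(\phi(p))$ and conversely. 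Consequently, $\phi$ induces a bijection between IFs of $V$ and IPs of $\tilde V$, and thus between $\check V$ and $\hat{\tilde V}$.

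Under this correspondence, a past-directed timelike curve $\gamma$ in $V$ with limit point $(\Omega, x_1, x_2) \in [a,b) \times M_1^C \times M_2^C$ (with $\Omega \ge a$) transforms into a future-directed timelike curve $\phi \circ \gamma$ in $\tilde V$ with limit point $(-\Omega, x_1, x_2) \in (-b, -a] \times M_1^C \times M_2^C$. Moreover, the integral condition \eqref{eqq:7} for the $\alpha_i$ near $a$ becomes precisely the integral condition \eqref{eq:7} for $\tilde\alpha_i$ near $-a$ (the upper endpoint of $\tilde V$), so Theorem \ref{futurestructurefiniteconditions} applies to $(\tilde V, \tilde g)$. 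The resulting bijection $\hat{\tilde V} \leftrightarrow (-b,-a] \times M_1^C \times M_2^C$ pulled back along $\phi^{-1}$ yields exactly the desired bijection $\check V \leftrightarrow [a,b) \times M_1^C \times M_2^C$, with the limit points aligned since $\phi$ is continuous and interchanges the orientations of timelike curves.

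For the topological assertion, I would verify that the past chronological topology on $\check V$ coincides with the pullback via $\phi^{-1}$ of the future chronological topology on $\hat{\tilde V}$. This reduces to a direct dualization of the limit operators in \eqref{limcrono}: if $F_n = \phi^{-1}(\tilde P_n)$ and $F = \phi^{-1}(\tilde P)$, then $F \in \check L(\{F_n\}_n)$ iff $\tilde P \in \hat L(\{\tilde P_n\}_n)$, since the point-set $\mathrm{LI}/\mathrm{LS}$ operators and the maximality conditions are preserved by $\phi$. Once local compactness of $M_1^C, M_2^C$ is assumed, the homeomorphism statement follows immediately by transporting the homeomorphism furnished by Theorem \ref{futurestructurefiniteconditions}. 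The only real obstacle is careful bookkeeping of the time-reversal on the c-boundary (in particular, confirming that the extra boundary component appears at $\{a\} \times M_1^C \times M_2^C$ rather than $\{b\} \times M_1^C \times M_2^C$), but no substantively new analytic input beyond Sections \ref{sec:chronologicalrelation}--\ref{sec:futurecompletion} is needed.
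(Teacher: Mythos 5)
Your proposal is correct and is essentially the paper's own approach: the paper simply asserts that ``similar arguments provide the corresponding results for the past c-completion,'' i.e.\ it invokes exactly the future/past duality that you make explicit via the time-reversal isometry $(t,x)\mapsto(-t,x)$. Your formalization (transporting IPs to IFs, the integral condition \eqref{eqq:7} to \eqref{eq:7}, and the limit operators $\hat L$ to $\check L$) is a clean and valid way to make that duality rigorous, with no new analytic input required.
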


\begin{thm}\label{pfuturecomploneinfinite}
  Let $(V,g)$ be a  {\multiwarped} spacetime as in \eqref{eq:1-aux}, and assume that the integral conditions
  \begin{equation}
  \label{eqq:9}
 \int_{a}^{\C}\frac{1}{\sqrt{\alpha_1(s)}}ds<\infty \qquad \hbox{and}\qquad \int_{a}^{\C}\frac{1}{\sqrt{\alpha_2(s)}}ds=\infty,
\end{equation}
hold. Then, there exists a bijection
 \begin{equation}
   \label{eqq:10}
     \check{V}\; \leftrightarrow\;  M_1^C\times \left(B(M_2)\cup \{-\infty\}\right)
     \equiv  \left( (a,b)\times M_1^C\times M_2^C\right) \cup M_{1}^{C} \times \left({\cal B}(M_2)\cup \{\infty\}\right).
  \end{equation}
  This bijection maps each indecomposable future set $F=I^+(\gamma)\in \check{V}$, where $\gamma:[\omega,-\Omega)\rightarrow V$, $\gamma(t)=(-t,c_1(t),c_2(t))$, is any curve generating $F$, to a pair $(x_1^*,b^-_{c_2})$, where $x_1^*\in M_1^C$ is the limit point of the curve $c_1$. If $-\Omega>a$, then  $b^-_{c_2}=d^-_{(\Omega,x_2^*)}$, where $x_2^*$ is the limit point of $c_2$ (see \eqref{eq:48}), and thus, $F$ can be also identified with the limit point $(\Omega,x_1^*,x_2^*)$ of $\gamma$.
\end{thm}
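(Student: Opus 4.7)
The plan is to obtain Theorem \ref{pfuturecomploneinfinite} as the exact past analogue of Theorem \ref{futurecomploneinfinite}, by dualizing each step of its proof. The conceptually cleanest route is to invoke the time-reversal symmetry of the doubly warped structure: setting $\tilde{t}=-t$ identifies $(V,g)$ with a doubly warped spacetime on $((-b,-a)\times M_1\times M_2,-d\tilde{t}^2+\tilde{\alpha}_1 g_1+\tilde{\alpha}_2 g_2)$ where $\tilde{\alpha}_i(\tilde{s})=\alpha_i(-\tilde{s})$. This isometry swaps past-directed timelike curves and IFs of $(V,g)$ with future-directed timelike curves and IPs of the reversed spacetime, and transforms the hypothesis \eqref{eqq:9} into \eqref{eq:9}. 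Hence Theorem \ref{futurecomploneinfinite} transports verbatim, with the backward Busemann function $b^-_{c_2}$ of \eqref{eq:48} playing the role of $b_{c_2}$.

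For a more transparent account, I would instead reproduce the entire chain leading to Theorem \ref{futurecomploneinfinite} directly in the past setting. First, for any past-directed timelike curve $\gamma(t)=(-t,c_1(t),c_2(t))$ with $-\Omega>a$, the finite first integral in \eqref{eqq:9} combined with the length estimate \eqref{eq:4} (applied to the time-reversed curve) forces $c_1(t)\to x_1^*\in M_1^C$. The past analogue of Proposition \ref{pastofcurve} then characterizes $I^+(\gamma)$ by an integral inequality dual to \eqref{eq:5}, and the past analogue of Proposition \ref{structuraparcialsininfinito'} yields the bijection $\check{V}\setminus\check{\partial}^a V\equiv(a,b)\times M_1^C\times M_2^C$, which is the ``non-boundary'' half of \eqref{eqq:10}.

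Second, for curves with $-\Omega=a$, the infinite second integral in \eqref{eqq:9} places the factor $((a,b)\times M_2,-dt^2+\alpha_2 g_2)$ inside the Generalized Robertson-Walker framework of Section \ref{sec:Robertson} (past version), so the backward Busemann function $b^-_{c_2}\in B(M_2)\cup\{-\infty\}$ is well defined and serves as the $M_2$-invariant attached to $I^+(\gamma)$. The past analogue of Proposition \ref{prop:conddiferbordedif} (showing that distinct pairs $(x_1^*,b^-_{c_2})$ yield distinct IFs) and the past analogue of Proposition \ref{samecondsamepast} (via past versions of Lemmas \ref{lemma:aux1}, \ref{lemma:aux2}, \ref{lemma:aux3}) then close the argument. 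Each step transcribes by swapping $I^-\leftrightarrow I^+$, $b_{c_2}\leftrightarrow b^-_{c_2}$, $t\nearrow b\leftrightarrow t\searrow a$, and invoking the Dominated Convergence Theorem with the integrable bound $\alpha_i^{-1/2}$ furnished by \eqref{eqq:9}.

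The main obstacle is bookkeeping rather than conceptual: one must carefully transcribe the internal Claim inside Lemma \ref{lemma:aux2} (ensuring that after passing to a subsequence one may arrange $\mu_1^*\neq 1$) and the maximality manipulations of Proposition \ref{samecondsamepast} into the past setting. Since the key analytic input, namely finiteness of $\int_a^\C \alpha_1(s)^{-1/2}ds$, is exactly the first hypothesis in \eqref{eqq:9}, no new argument is needed and the bijection \eqref{eqq:10} follows.
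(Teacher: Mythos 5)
Your proposal is correct and follows essentially the same route as the paper, which gives no separate argument for Theorem \ref{pfuturecomploneinfinite} beyond the remark that ``similar arguments provide the corresponding results for the past c-completion''; your time-reversal reduction and step-by-step dualization of Propositions \ref{pastofcurve}, \ref{structuraparcialsininfinito'}, \ref{prop:conddiferbordedif}, \ref{samecondsamepast} and Lemmas \ref{lemma:aux1}--\ref{lemma:aux3} is precisely the intended (and here made explicit) justification.
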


\section{The total c-completion of doubly warped spacetimes}
\label{sec:totalcompletion}

We are now in conditions to construct the (total) c-completion of doubly warped spacetimes by merging appropriately the future and past c-boundaries obtained in previous section.

To this aim, first we need to determine the S-relation between indecomposable sets. So, let $\gamma:[\omega,\Omega)\rightarrow V$, $\gamma(t)=(t,c_{1}(t),c_{2}(t))$, be an inextensible future-directed timelike curve.
%
%
Clearly, if $\Omega=b$ then $\uparrow I^{-}(\gamma)=\emptyset$, and there are no IFs S-related to $I^{-}(\gamma)$. So, we will focus on the case $\Omega<b$.

\begin{prop}
\label{tip}
Let $(V,g)$ be a \multiwarped spacetime and consider a future-directed (resp. past-directed) timelike curve $\gamma$ with associated endpoint $(\Omega^+,x_1^*,x_2^*) \in (a,b) \times M_{1}^{C} \times M_{2}^{C}$  (resp. $(\Omega^-,y_1^*,y_2^*) \in (a,b) \times M_{1}^{C} \times M_{2}^{C}$). Then
\begin{equation}
\begin{aligned}
        \uparrow I^{-}(\gamma) &=\{(t,x_1,x_2) \in V\; \mid\; \exists \,  \mu_{1},\mu_{2} > 0\;\; \hbox{{\rm such that}} \\
        & \Integral{\Omega^+}{t}{\mu_{i}}{i}{\mu_{k}}> d_{i}(x_{i},x_{i}^*),\; i=1,2.\}
\end{aligned}
\end{equation}

\begin{equation*}
\begin{aligned}
(\hbox{resp.}\;\;\downarrow I^{+}(\gamma) &=\{(t,x_1,x_2) \in V\; \mid\; \exists \,  \mu_{1},\mu_{2} > 0\;\; \hbox{{\rm such that}} \\
                & \Integral{t}{\Omega^-}{\mu_{i}}{i}{\mu_{k}}> d_{i}(x_{i},y^{*}_{i}),\; i=1,2\}).
\end{aligned}
\end{equation*}
As consequence, if $P\in\hat{V}$ and $F\in\check{V}$ are associated to $(\Omega^+,x^*_1,x^*_2)$ and $(\Omega^-,y^*_1,y^*_2)$ in $(a,b) \times M_{1}^{C} \times M_{2}^{C}$, resp, then the following equivalence holds:
\[
P \sim_{S} F\quad \Longleftrightarrow\quad
\Omega^{-}=\Omega^{+}\;\;\hbox{and}\;\; x^*_{i}=y^*_{i} \in M_{i}^{C},\; i=1,2.
\]
\end{prop}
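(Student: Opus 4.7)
The plan is to first establish the explicit characterization of $\uparrow I^{-}(\gamma)$ (and dually $\downarrow I^{+}(\gamma)$), and then derive the S-relation equivalence as a corollary.

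For the characterization of $\uparrow I^{-}(\gamma)$, the key observation is that in a distinguishing spacetime the common future of $I^{-}(\gamma)$ coincides with $\bigcap_{t} I^{+}(\gamma(t))$, so $q=(s,x_1,x_2)\in \uparrow I^{-}(\gamma)$ iff there exists $q'\in \bigcap_{t} I^{+}(\gamma(t))$ with $q'\ll q$. Applying Prop. \ref{c0} to each relation $\gamma(t)\ll q'$ yields coefficients $\mu_{1}^{t},\mu_{2}^{t}\geq 0$ summing to one and satisfying integral inequalities between $t$ and the temporal coordinate of $q'$. I would then extract a convergent subsequence $\mu_{i}^{t_n}\to \mu_{i}^{*}$ by Bolzano-Weierstrass, use Prop. \ref{pastofcurve} to see $c_{i}(t_n)\to x_{i}^{*}$, and pass to the limit via the Dominated Convergence Theorem exactly as in the proof of Thm. \ref{causi} (finiteness of the integrals up to $\Omega^{+}<b$ is essential here), obtaining the non-strict version of the claimed inequality with endpoint $\Omega^{+}$ and right-hand side $d_{i}(x_{i}^{*},x_{i})$. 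To upgrade to strict inequalities with $\mu_{i}>0$, I would invoke the standard perturbation trick used repeatedly in the paper (compare with the $(ii)\Rightarrow(iii)$ step of Prop. \ref{c0}), exploiting that $q$ strictly follows $q'$ chronologically. The converse inclusion follows constructively as in the $(iii)\Rightarrow(i)$ step of Prop. \ref{c0}: the strict inequalities allow one to build a lightlike geodesic from an intermediate $q''=(s'',x_1,x_2)$, with $\Omega^{+}<s''<s$, back to $q$, and $q''$ automatically lies in the common future of $I^{-}(\gamma)$ while satisfying $q''\ll q$. The statement for $\downarrow I^{+}(\gamma)$ is entirely dual.

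For the S-relation, the implication $(\Leftarrow)$ is then direct: if $\Omega^{+}=\Omega^{-}=:\Omega$ and $x_{i}^{*}=y_{i}^{*}$, comparing the characterization of $P$ given by Prop. \ref{pastofcurve} with that of $\downarrow F$ from the first part of the proposition shows $P\subset \downarrow F$, and symmetrically $F\subset \uparrow P$. The maximality of $P$ in $\downarrow F$ is forced by Prop. \ref{structuraparcialsininfinito}: any $P'\supsetneq P$ lying in $\hat{V}\setminus \hat{\partial}^{\B}V$ corresponds to a distinct endpoint, and a point of $P'\setminus P$ chosen close enough in time to that endpoint will violate the integral inequality defining $\downarrow F$; TIPs in $\hat{\partial}^{\B}V$ are ruled out because the temporal coordinate in $\downarrow F$ is bounded above by $\Omega^{-}<b$. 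The same argument handles the maximality of $F$ in $\uparrow P$.

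Conversely, if $P\sim_{S}F$, the inclusion $F\subset \uparrow P$ together with the explicit form of $\uparrow P$ forces $\Omega^{-}\geq \Omega^{+}$ (otherwise a point of $F$ with temporal coordinate below $\Omega^{+}$ would violate the inequality, whose integration domain starts at $\Omega^{+}$); symmetrically $\Omega^{+}\geq \Omega^{-}$, so the two values coincide. Setting $\Omega^{+}=\Omega^{-}=:\Omega$, a further comparison of the characterizations rules out $x_{i}^{*}\neq y_{i}^{*}$: if the spatial endpoints differed, one could find a point of $F$ arbitrarily close to $(\Omega,y_1^{*},y_2^{*})$ whose distance to $x_{i}^{*}$ exceeds what the vanishing integral permits, contradicting $F\subset \uparrow P$. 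The main technical obstacle throughout is the rigorous limit passage in the coefficients $\mu_{i}^{t}$ combined with the need to handle Cauchy-boundary points, where the distance function is defined only by continuity and minimizing-geodesic arguments are not directly available; here the finiteness of the warping integrals up to $\Omega^{\pm}<b$, together with the continuity of $d_i$ on $M_i^{C}$, is what makes the perturbation-and-limit scheme work.
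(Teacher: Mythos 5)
Your proposal is correct and follows essentially the same route as the paper: the characterization of $\uparrow I^{-}(\gamma)$ is obtained by exactly the paper's scheme (apply Prop.~\ref{c0} along a sequence $t_n\nearrow\Omega^+$, extract convergent coefficients $\mu_i^n$, pass to the limit by Dominated Convergence using $\Omega^+<b$, and recover strict inequalities with $\mu_i>0$ by the usual perturbation exploiting a chronological slack), and the converse inclusion and the $S$-relation are derived from the same integral characterizations together with the injectivity of the endpoint map (Prop.~\ref{structuraparcialsininfinito}). The only cosmetic difference is in the $S$-relation step, where the paper shortcuts the maximality discussion by observing that $\uparrow P$ is itself an IF (namely $I^+(\sigma)$ for a past-directed curve with the same endpoint), whereas you argue maximality directly; both arguments rest on the same facts.
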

\begin{proof} Assume that $\gamma:[\omega,\Omega^+) \rightarrow V$, $\gamma(t)=(t,c_1(t),c_2(t))$, is a future-directed causal curve with associated endpoint $(\Omega^+,x_1^*,x_2^*) \in (a,b) \times M_{1}^{C} \times M_{2}^{C}$ (for the past is analogous). We need to show that $\uparrow I^-(\gamma)=A_{(\Omega,x_1^*,x_2^*)}$, where
\begin{equation*}
\begin{aligned}
A_{(\Omega^+,x_1^{*},x_2^*)}&:=\{(r,x_1,x_2) \in V \mid \exists \,  \mu_{1},\mu_{2} > 0\;\; \hbox{{\rm such that}} \\
        &
        \int_{\Omega^+}^{r}\frac{\sqrt{\mu_{i}}}{\alpha_i(s)}\left(\sum_{k=1}^2\frac{\mu_k}{\alpha_k(s)}\right)^{-1/2}dt > d_{i}(x_{i},x_{i}^*),\; i=1,2\}.
\end{aligned}
\end{equation*}
For the inclusion to the right, take $(r,x_1,x_2) \in \uparrow I^{-}(\gamma)$ and $\epsilon>0$ small enough so that $(r-\epsilon,x_1,x_2)\in \uparrow I^-(\gamma)$ (recall that the common future is open). For any sequence $\{t_{n}\}_{n}\nearrow \Omega^+$ we have $\gamma(t_n) \ll (r-\epsilon,x_1,x_2)$ for all $n$. From Prop. \ref{c0} there exist constants $\mu^n_{1},\mu^n_{2}>0$, with $\mu^n_{1}+\mu_{2}^n=1$ for all $n$, such that
\[
\int_{t_n}^{r-\epsilon}\frac{\sqrt{\mu^n_{i}}}{\alpha_i(s)}\left(\sum_{k=1}^2\frac{\mu_k^n}{\alpha_k(s)}\right)^{-1/2}dt > d_{i}(x_i,c_{i}(t_n))\quad i=1,2.
\]
Then, by the standard limit process, we deduce the following inequalities:
\[
\int_{\Omega^+}^{r-\epsilon}\frac{\sqrt{\mu^*_{i}}}{\alpha_i(s)}\left(\sum_{k=1}^2\frac{\mu_k^*}{\alpha_k(s)}\right)^{-1/2}dt \geq d_{i}(x_{i},x_{i}^{*}),\quad i=1,2,
\] where $\mu_i^*$ is the limit (up to a subsequence) of $\{\mu_i^n\}$. Now observe that some of previous inequalities become strict if we replace $r-\epsilon$ by $r$. So, a small variation of $\mu_1^*$ and $\mu_2^*$ if necessary (concretely, if one of these constants is zero), provides
positive constants $\mu_{1}',\mu_{2}'>0$ satisfying
\[
\int_{\Omega^+}^{r}\frac{\sqrt{\mu'_{i}}}{\alpha_i(s)}\left(\sum_{k=1}^2\frac{\mu'_k}{\alpha_k(s)}\right)^{-1/2}dt > d_{i}(x_{i},x_{i}^{*}),\quad i=1,2.
\]
In particular, $(r,x_1,x_2)\in A_{(\Omega,x_1^{*},x_2^*)}$, and so, $\uparrow I^{-}(\gamma) \subset A_{(\Omega,x_1^{*},x_2^*)}$.

\smallskip

 For the inclusion to the left, assume that $(r,x_1,x_2) \in A_{(\Omega,x_1^*,x_2^*)}$. By the continuity of both, the integral with respect to the lower limit of integration and the distance function, and the convergence of $\gamma(t)=(t,c_1(t),c_2(t))$ to
$(\Omega,x_1^{*},x_2^*)$, we deduce that
\[
\int_{t}^{r}\frac{\sqrt{\mu_{i}}}{\alpha_i(s)}\left(\sum_{k=1}^2\frac{\mu_k}{\alpha_k(s)}\right)^{-1/2} > d_{i}(x_{i},c_{i}(t))\quad\hbox{for large $t$.}
\]
So, from Prop. \ref{c0}, $\gamma(t) \ll  (r,x_1,x_2)$ for  all $t$, which implies $(r,x_1,x_2) \in \uparrow I^{-}(\gamma)$.

\smallskip

For the last assertion, assume that $P$ is associated to $(\Omega^+,x^*_1,x^*_2)\in (a,b)\times M_1^C\times M_2^C$. From the first part of this proposition, $\uparrow P= I^+(\sigma)$, where $\sigma$ is a past-directed timelike curve converging to $(\Omega^+,x^*_1,x^*_2)$. So, $F=I^+(\sigma)$ is the unique maximal IF into the common future of $P$. Reasoning analogously we deduce that $P$ is the unique maximal IP into the common past of $F$. In conclusion, $P$ is $S$-related just with the indecomposable future set $F$, and vice versa.
\end{proof}



From this result it is clear that $\overline{V}$ is simple as a point set (see Defn. \ref{simpletop}). On the other hand, if we define
\[
\partial^{\B}V:=\hat{\partial}^{\B}V\cup\check{\partial}^{\B}V,
\]
the following identification is deduced:
\[
\overline{V}\setminus \partial^{\B} V\leftrightarrow (a,b) \times M_1^C\times M_2^C.
  \]
  In particular, $\partial V\setminus \partial^{\B} V$ can be identified with a cone with base $(M_1^C\times M_2^C)\setminus (M_1\times M_2)$. Moreover, if we assume that both $M_1^C,M_2^C$ are locally compact, Prop. \ref{topcurvasfinitas} ensures that previous bijection is a homeomorphism. Particularly, this proves that, given $(P,F)\in \overline{V}\setminus \partial^{\B}V$,
    \[
P\in \hat{L}(\{P_n\}) \iff F\in \check{L}(\{F_n\})
      \]
      for any sequence $\{(P_n,F_n)\}_n\in \overline{V}$. Hence, $\overline{V}\setminus \partial^{\B}V$ is also simple topologically.

  Finally, the following lemma ensures that the line over each point $(x_1^*,x_2^*)\in (M_1^C\times M_2^C)\setminus (M_1\times M_2)$ is timelike:
 \begin{lemma}\label{causalstructurenoinf}
     If $(P,F),(P',F')\in \partial V\setminus \partial^{\B}V$, with $(P,F)\equiv (\Omega,x^*_1,x^*_2), (P',F')\equiv (\Omega',x^*_1,x^*_2)$ in $(a,b)\times M_1^C\times M_2^C$, satisfy that $a<\Omega<\Omega'<b$ then $(P,F)\ll (P',F')$.
  \end{lemma}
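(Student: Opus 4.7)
By definition of the extended chronological relation, $(P,F)\,\overline{\ll}\,(P',F')$ holds if and only if $F\cap P'\neq\emptyset$, so the whole task reduces to exhibiting a single point of $V$ lying in both $F$ and $P'$. My plan is to do this by a direct analytic construction, using the integral characterizations already available.

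First, I combine Prop.~\ref{pastofcurve} (for $P'$) with Prop.~\ref{tip} together with the identification $F=\uparrow P$ observed in its last paragraph (for $F$). After normalizing the constants (which is legitimate since the integrand in \eqref{ee2} is invariant under $(\mu_1,\mu_2)\mapsto (c\mu_1,c\mu_2)$), a point $(t,x_1,x_2)\in V$ belongs to $F$ precisely when there exist $\mu_1,\mu_2>0$ with $\mu_1+\mu_2=1$ satisfying
\[
\Integral{\Omega}{t}{\mu_{i}}{i}{\mu_{k}} > d_i(x_i,x_i^*)\qquad (i=1,2),
\]
and it belongs to $P'$ precisely when there exist $\mu'_1,\mu'_2>0$ with $\mu'_1+\mu'_2=1$ satisfying
\[
\Integral{t}{\Omega'}{\mu'_{i}}{i}{\mu'_{k}} > d_i(x_i,x_i^*)\qquad (i=1,2).
\]

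Second, I fix any intermediate time, say $t:=(\Omega+\Omega')/2\in(\Omega,\Omega')\subset(a,b)$, and pick the symmetric choice $\mu_1=\mu_2=\mu'_1=\mu'_2=1/2$. Since $\alpha_1,\alpha_2>0$ are smooth on the compact interval $[\Omega,\Omega']$, both integrands are strictly positive and continuous there, so the two integrals above define strictly positive constants $I_1,I_2>0$ (independent of the point $(x_1,x_2)$).

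Third, since $(x_1^*,x_2^*)\in M_1^C\times M_2^C$ is the limit of a Cauchy sequence in $M_1\times M_2$, I can pick $(x_1,x_2)\in M_1\times M_2$ with $d_i(x_i,x_i^*)<\min\{I_1,I_2\}$ for $i=1,2$. With this choice the two displayed inequalities are simultaneously satisfied, so $(t,x_1,x_2)\in F\cap P'\neq\emptyset$, whence $(P,F)\,\overline{\ll}\,(P',F')$.

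There is no serious obstacle here; the only delicate point is checking that the characterizations of $F$ and $P'$ are still valid when the limit point lies in the Cauchy boundary rather than in $M_i$ itself, but this is precisely the content already incorporated into Props.~\ref{pastofcurve} and~\ref{tip}, where $d_i(\cdot,x_i^*)$ is understood as the distance in the Cauchy completion. Everything else is a transparent use of positivity and the approximation property of $M_i^C$.
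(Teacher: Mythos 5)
Your proof is correct and follows essentially the same route as the paper's: the paper also takes $t=(\Omega+\Omega')/2$, $\mu_1=\mu_2=1/2$, and chooses spatial points close enough to $(x_1^*,x_2^*)$ that both integral inequalities hold, citing Prop.~\ref{pastofcurve} and its past analogue (equivalently, your use of Prop.~\ref{tip} for $F=\uparrow P$) to conclude $(t,y_1,y_2)\in F\cap P'$. Your extra remarks on the scale-invariance of the integrand and the density of $M_i$ in $M_i^C$ are accurate but not points of divergence.
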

  \begin{proof}
     Take $t=(\Omega+\Omega')/2$ and $\mu_1=\mu_2=1/2$. For $i=1,2$, consider $y_i$ close enough to $x_i^*$ so that
    \[
\left\{\begin{array}{l}
\displaystyle\Integral{t}{\Omega'}{\mu_{i}}{i}{\mu_{k}}>
             d_{i}(y_i,x_i^*) \\

\displaystyle\Integral{\Omega}{t}{\mu_{i}}{i}{\mu_{k}}>
             d_{i}(y_i,x_i^*),

\end{array}\right.
\quad i=1,2.
      \]
    From Prop. \ref{pastofcurve} (and its past analogous) we deduce that $(t,y_1,y_2)\in F\cap P'$, as desired.
  \end{proof}

  \smallskip

  The $S$-relation described in Prop. \ref{tip} implies that each pair $(P,F)\in\overline{V}$ is determined by any of its non-empty components, that is, $\overline{V}$ is simple as a point set. Even more, from Prop. \ref{topcurvasfinitas} and the definition of the chronological limit (see \eqref{eq:29} and \eqref{limcrono}), $\overline{V}$ is topologically simple as well (recall Defn. \ref{simpletop}); concretely, if $(P,F)\in \overline{V}$, $P\neq\emptyset$, and $\sigma=\{(P_n,F_n)\}_n\subset \overline{V}$, then $(P,F)\in L_{chr}(\sigma)$ if, and only if, $P\in \hat{L}_{chr}(\{P_n\}_n)$. Therefore, in order to determine the, pointwise and topological, structure of the (total) $c$-boundary, it suffices to study the partial boundaries. Consequently, we will describe $\overline{V}$ in two different ways, according to our convenience, namely:
  \[
\overline{V}= (a,b) \times M_1^C\times M_2^C\cup\hat{\partial}^{\B}V\cup\check{\partial}^{\B}V=\hat{V}\cup \check{\partial}^{\B}V=\hat{\partial}^{\B}V\cup \check{V}.
    \]
     Restricting conveniently, the open sets of $\overline{V}$ containing a pair $(P,F)$ can be viewed as: (i) open sets in $(a,b) \times M_1^C\times M_2^C$ if $P\neq\emptyset\neq F$, (ii) open sets in $\hat{V}$ if $F=\emptyset$ or (iii) open sets in $\check{V}$ if $P=\emptyset$.

\smallskip

It rests to determine the causal structure of $\overline{V}$. This is contained in the following result, which summarizes all the information about the (total) c-completion of doubly warped spacetimes:
  \begin{thm}\label{thm:main}
    Let $(V,g)$ be a {\multiwarped} spacetime as in \eqref{eq:1-aux}. Then, there exists a homeomorphism
    \[
\overline{V}\setminus \partial^{\B}V \leftrightarrow (a,b) \times M_1^C\times M_2^C,
      \]
    where each line $\{(t,x_1^*,x_2^*): t\in (a,b),\; (x_1^*,x_2^*)\in M_1^C\times M_2^C\}$ is timelike. Moreover:
     \begin{itemize}
      \item[(i)] If \eqref{eq:7} and \eqref{eqq:7} hold, then $\partial^{\B} V$ is homeomorphic to a couple of spacelike copies of $M_1^C\times M_2^C$. As consequence, we have the following homeomorphism:
        \begin{equation}
          \label{eq:18}
         \overline{V}\equiv [a,b]\times M_1^C\times M_2^C\quad\hbox{pointwise and topologically.}
        \end{equation}

        \item[(ii)] If \eqref{eq:7} and \eqref{eqq:9} hold, then $\partial^{\B} V$ has a copy of $M_1^C\times M_2^C$ for the future, with spatial causal character; and a copy of $M_1^C\times \left({\cal B}(M_2)\cup \{\infty\}\right)$ for the past.  This second set can be seen as a cone with base $M_1^C\times \partial_{\cal B}(M_2)$ generated by horismotic lines over each pair $(x_1^*,[b_{c_2}])$ ending at the point $(x_1^*,\infty)$. As consequence, we have the following homeomorphism
          \[
              \overline{V}\equiv\left\{\begin{array}{l} \hat{V}\cup \check{\partial}^{\B}V \leftrightarrow \left((a,b]\times M_1^C\times M_2^C\right) \cup \left(M_1^C\times \left({\cal B}(M_2)\cup \{\infty\}\right)\right) \\ \hat{\partial}^{\B} V\cup \check{V} \leftrightarrow \left(\{b\}\times M_1^C\times M_2^C\right) \cup \left(M_1^C\times \left(B(M_2)\cup \{\infty\} \right) \right).
            \end{array}\right.
            \]

          \item[(iii)] If \eqref{eq:9} and \eqref{eqq:7} hold we have a structure analogous to (ii), but interchanging the roles of future and past.

            \item[(iv)] If \eqref{eq:9} and \eqref{eqq:9} hold, then $\partial^{\B} V$ has two copies of the space $M_1^C\times \left({\cal B}(M_2)\cup \{\infty\}\right)$, one for the future and the other one for the past, formed by horismotic lines over each point $(x_1^*,[b_{c_2}])\in M_1^C\times \partial_{\cal B}(M_2)$ ending at the point $(x_1^*,\infty)$. As consequence,
              \[
                  \overline{V}\equiv \left\{\begin{array}{l} \hat{V}\cup \check{\partial}^{\B}V \leftrightarrow \left(M_1^C\times \left(B(M_2)\cup \{\infty\} \right)\right) \cup \left(M_1^C\times \left({\cal B}(M_2)\cup \{\infty\}\right)\right) \\
                  \hat{\partial}^{\B} V\cup \check{V} \leftrightarrow \left( M_1^C\times \left({\cal B}(M_2)\cup \{\infty\}\right)\right) \cup  \left(M_1^C\times \left(B(M_2)\cup \{\infty\}\right)\right).
                \end{array}\right.
            \]
      \end{itemize}
    \end{thm}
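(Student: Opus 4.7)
The plan is to assemble the theorem by combining the machinery already developed in Sections~\ref{sec:futurecompletion}--\ref{ss6} with the discussion immediately preceding the statement. The key structural fact, which drives everything, is Prop.~\ref{tip}: a TIP (resp.\ TIF) associated to an endpoint in $(a,b)\times M_1^C\times M_2^C$ is $S$-related to exactly one TIF (resp.\ TIP), namely the one with the same endpoint, while a TIP with $\Omega=b$ satisfies $\uparrow I^-(\gamma)=\emptyset$ and is therefore $S$-related only to $\emptyset$ (and dually for $\Omega=a$). Hence $\overline{V}$ splits, as a point set, into the ``interior'' part coming from finite endpoints and the disjoint pieces $\hat\partial^{\B} V$ and $\check\partial^{\B} V$ (each paired with $\emptyset$).

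First I would establish the initial homeomorphism $\overline{V}\setminus \partial^{\B}V \leftrightarrow (a,b)\times M_1^C\times M_2^C$. The underlying bijection is immediate from Prop.~\ref{tip} together with Prop.~\ref{structuraparcialsininfinito'} and its past analog. The topology follows from Prop.~\ref{topcurvasfinitas}: since $\overline{V}$ is topologically simple on this part (any pair in $\overline{V}\setminus\partial^{\B}V$ has both components nonempty, so the limit operator $L$ is controlled by $\hat{L}$ alone), and $\hat{V}\setminus\hat\partial^{\B}V$ is already homeomorphic to $(a,b)\times M_1^C\times M_2^C$ with the product topology, the claim transfers to $\overline{V}$. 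The timelike character of each fiber $\{(t,x_1^*,x_2^*):t\in(a,b)\}$ is precisely Lemma~\ref{causalstructurenoinf}.

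Next, I would address each of the four cases by assembling the appropriate boundary pieces. Under \eqref{eq:7} (resp.\ \eqref{eqq:7}), Thm.~\ref{futurestructurefiniteconditions} (resp.\ Thm.~\ref{pfuturestructurefiniteconditions}) identifies $\hat\partial^{\B}V$ (resp.\ $\check\partial^{\B}V$) with $\{b\}\times M_1^C\times M_2^C$ (resp.\ $\{a\}\times M_1^C\times M_2^C$); under \eqref{eq:9} (resp.\ \eqref{eqq:9}), Thm.~\ref{futurecomploneinfinite} (resp.\ Thm.~\ref{pfuturecomploneinfinite}) identifies it with $M_1^C\times({\cal B}(M_2)\cup\{\infty\})$. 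Combining the four pairs of options yields the point-set descriptions (i)--(iv); in case (i), the gluing with the interior $(a,b)\times M_1^C\times M_2^C$ is consistent and collapses to the product $[a,b]\times M_1^C\times M_2^C$. For the causal character of the ``finite'' boundary copies in (i)--(iii), two distinct points $(b,x_1,x_2),(b,y_1,y_2)$ cannot be chronologically related because, from Prop.~\ref{tip}, $\uparrow P=\emptyset$ for any such $P$, so $F\cap P'=\emptyset$; this gives the spacelike character. For the Busemann boundary pieces in (ii)--(iv), one fixes $x_1^*\in M_1^C$ and reduces the situation to the Generalized Robertson--Walker case (Section~\ref{sec:Robertson}), where the lines over $\partial_{\cal B}M_2$ are already known to be horismotic, with apex $\infty$.

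The main obstacle is the topological assembly rather than the point-set assembly: one must verify that the chronological limit operator $L$ on $\overline{V}$ is compatible, across boundary components, with the partial limits $\hat L$ and $\check L$ whose behaviour is known. The key point is that, by Prop.~\ref{tip}, an $S$-partner is unique whenever it exists, so for a sequence $(P_n,F_n)$ converging to $(P,F)$ in the interior, the partner $F_n$ is forced to track $P_n$ and converges to the unique $F$ paired with $P$; and whenever $P\in\hat\partial^{\B}V$ (so $F=\emptyset$), the absence of an $S$-partner on the future side means the definition \eqref{eq:29} of $L$ reduces to a condition purely on $\{P_n\}_n$, already handled by Thms.~\ref{futurestructurefiniteconditions} and~\ref{futurecomploneinfinite}. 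Once this compatibility is spelled out, the four homeomorphisms in (i)--(iv) follow by just reading off the partial completions.
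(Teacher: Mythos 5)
Your overall architecture is the same as the paper's: Prop.~\ref{tip} gives the $S$-pairing and point-set simplicity, Props.~\ref{structuraparcialsininfinito'} and \ref{topcurvasfinitas} give the interior homeomorphism, Lemma~\ref{causalstructurenoinf} gives the timelike lines, and the four cases are read off from Thms.~\ref{futurestructurefiniteconditions}, \ref{pfuturestructurefiniteconditions}, \ref{futurecomploneinfinite} and \ref{pfuturecomploneinfinite}, with topological simplicity handling the assembly. That part is fine.

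There is, however, a genuine gap in your argument for the \emph{spacelike} character of the finite boundary copies in (i)--(iii). You argue that two distinct pairs $(P,\emptyset),(P',\emptyset)$ over $(b,x_1,x_2)$ and $(b,y_1,y_2)$ cannot be chronologically related because $\uparrow P=\emptyset$ forces $F\cap P'=\emptyset$, and you conclude from this that they are spacelike. But ``spacelike'' means neither chronologically \emph{nor causally} related, and the extended causal relation $(P,\emptyset)\,\overline{\leqslant}\,(P',\emptyset)$ holds as soon as $P\subset P'$ (the condition $F'\subset F$ is vacuous when both future components are empty). So your argument leaves open the possibility that the two boundary points are horismotically related. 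To close this you must show $P\not\subset P'$ and $P'\not\subset P$ whenever the endpoints differ; this is exactly what the paper extracts from the proof of Prop.~\ref{structuraparcialsininfinito} together with Remark~\ref{rem:1} (the argument there only uses finiteness of $\int^b \alpha_i^{-1/2}$, not finiteness of $\Omega$, so it applies on $\{b\}\times M_1^C\times M_2^C$ under \eqref{eq:7}). A similar precision is advisable for the horismotic lines in (ii) and (iv): rather than an unexplained ``reduction to the GRW case,'' the paper uses Lemma~\ref{lemma:aux5}(i) to convert $b_{c_2}\le b_{c_2'}$ (for Busemann functions differing by a constant, with the same $x_1^*$) into the inclusion $P'\subset P$, which together with the automatic failure of chronological relatedness yields the horismotic character. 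With these two points repaired, your proof coincides with the paper's.
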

    \begin{proof}
As we have argued before, the first assertion about the point set topological and causal structure of $\overline{V}\setminus \partial^{\ncambios{b}}V$ is a direct consequence of \cambios{Props. \ref{structuraparcialsininfinito'}, \ref{topcurvasfinitas} (and its past analogous)}, \ref{tip} and Lemma \ref{causalstructurenoinf}. So, we will focus on the rest of assertions.

  \begin{itemize}
  \item[(i)] The point set and topological structure are straightforward from Thms. \ref{futurestructurefiniteconditions} and \ref{pfuturestructurefiniteconditions}. So, we only need to prove that $\partial^{\B}V=\hat{\partial}^{\B}V\cup \check{\partial}^{\B}V$ is spacelike. Take $(P,\emptyset),(P',\emptyset)\in \partial^{\B}V$ two different boundary points (for TIFs is completely analogous). By using the identification in \eqref{eq:18}, we can assume that $(P,\emptyset)\equiv (b,x_1^*,x_2^*)$ and $(P',\emptyset)\equiv (b,y_1^*,y_2^*)$ with $(x_1^*,x_2^*)\neq (y_1^*,y_2^*)$. From the proof of Prop. \ref{structuraparcialsininfinito} (recall also Rem. \ref{rem:1}) it follows both, $P\not \subset P'$ and $P'\not \subset P$, thus $(P,\emptyset)$ and $(P',\emptyset)$ are neither timelike nor lightlike related, i.e., they are spatially related.

  \item[(ii)] The point set and topological structure are deduced from Thm. \ref{futurestructurefiniteconditions} and Thm. \ref{pfuturecomploneinfinite}. For the causal structure, let us take two points $(P,\emptyset),(P',\emptyset)\in \partial^{\B}V$ over the same point $(x_1^*,[b_{c}])\in M_1^C \times \partial_{\cal B} M_2$. Hence, we can make the identifications $(P,\emptyset)\equiv (x_1^*,b_{c_1})$ and $(P',\emptyset)\equiv (x_1^*,b_{c_2})$ with $b_{c_1}-b_{c_2}={\cal K}$, ${\cal K}$ constant. If we assume that ${\cal K}>0$, then $b_{c_1}\geq b_{c_2}$, and so, $P'\subset P$ (recall Lemma \ref{lemma:aux5}), i.e., both points are lightlike related (the case with ${\cal K}<0$ is completely analogous).
  \end{itemize}

Finally, assertions (iii) and (iv) are easily deduced from (i) and (ii).

\end{proof}

\begin{rem} {\rm (1) Of course, the four cases considered in previous theorem do not cover all the possibilities compatible with the finiteness of at most one warping integral (since the finite warping integral may not be necessarily the last one). However, the structure of the c-completion for these additional cases are easily deducible from our approach, and can be considered an easy exercise for the reader.

(2) In order to simplify the exposition, we have considered along this paper multiwarped spacetimes with just two fibers. Nevertheless, the corresponding results for the general case of $n$ fibers can be easily deduced by the reader (see, for instance, Section \ref{sec:applications}).}
\end{rem}

\section{Some examples of interest}
\label{sec:applications}

In this section we are going to apply our results to compute the c-completion of some spacetimes of physical interest. Concretely, we will consider some Kasner models, the intermediate region of Reissner-Nordstr\"om and de Sitter models with (non necessarily compact) internal spaces.

%
    \subsection*{Kasner models}
{\em Generalized Kasner models} are multiwarped  spacetimes $(V,g)$ where $V=(0,\infty)\times \R^{n}$ and
\begin{equation}
  \label{eq:35}
g=-dt^2+t^{2p_1}dx_1^2+\dots +t^{2p_{n}}dx_{n}^2,\quad\hbox{$(p_1,\ldots,p_n)\in\R^n$}.
\end{equation}
These models are solutions to the vacuum Einstein equations if $(p_1,\dots,p_{n})\in \R^{n}$ belongs to the so-called {\em Kasner sphere}, i.e., if it satisfies
  \[
\sum_{i=1}^{n}p_i=1=\sum_{i=1}^{n}p^2_i.
\]
Even if this condition does not fall under the hypotheses of our results, this does not cover all the cases of interest, and so, we are not going to assume it.

As far as we know, the c-boundary of these models can be faced in two different ways. On the one hand, by using Harris' result (Thm. \ref{thm:harris}); taking into account that the fibers are complete, this result gives a full description of the future c-boundary when $p_i>1$ for all $i$, and provides some partial information in the other cases. On the other hand, these models have been studied
 by Garc\'ia-Parrado and Senovilla in \cite{GS03} by using the isocausal relation. They essentially prove that, depending on the values of the constants $p_1,\ldots,p_n$, the corresponding Kasner model is isocausal to a particular Robertson-Walker model whose c-boundary is well-known. This may be useful, since, although the c-boundary of isocausal spacetimes may be different (see \cite{0264-9381-28-17-175016}), they can share some qualitative properties (see \cite{FHSIso2}).

 \smallskip

Of course, Thm. \ref{futurestructurefiniteconditions} parallels Harris' result for Kasner models when $p_i>1$ for all $i$. However, now we can go a step further and give a complete description of the c-boundary when
\[
p_i>1\;\;\hbox{for $1\leq i\leq k$,}\quad p_i=q\;\;\hbox{for $k+1\leq i\leq n$}\quad\hbox{and}\quad \int_1^{\infty}\frac{1}{t^{q}}dt=\infty.
\]
In this case we can write
\[
V=(0,\infty)\times\R^k\times\R^{n-k},\qquad g=-dt^2+\sum_{i=1}^{k}t^{2p_i}dx_i^2+t^{2q}\left(\sum_{i=k+1}^ndx_i^2 \right),
\]
In particular,
\[
\int_{1}^{\infty}\frac{dt}{t^{p_i}}<\infty,\;\; i=1,\ldots,k,\qquad\int_{1}^{\infty}\frac{dt}{t^{q}}=\infty.
\]
Therefore, the spacetime falls under the hypotheses of (the obvious multiwarped version of) Thm. \ref{futurecomploneinfinite} (essentially, with $M_1=\R^k$ and $M_2=\R^{n-k}$), which provides
the following homeomophism:
          \[
\hat{V}\;\leftrightarrow\; \left((0,\infty)\times \R^n\right)\cup \left(\R^k\times \left({\cal B}(\R^{n-k})\cup \{\infty\}\right)\right).
            \]
            So, taking into account that (see, for instance, \cite[Section 5.1]{H2})
            \[
{\cal B}(\R^{n-k})\equiv \R\times \mathbb{S}^{n-k-1},
              \]
              we immediately deduce that
              \[
\hat{\partial} V\leftrightarrow \R^k \times \left(\left(\R\times \mathbb{S}^{n-k-1}\right)\cup \{\infty\} \right).
                \]

\subsection*{The intermediate  Reissner-Nordstr\"om}
The Reissner-Nordstr\"om model is a spacetime $(V,g)$, where $V=\R\times \R\times \mathbb{S}^2$ and
\[
g=-\left(1-\frac{2m}{r}+\frac{q^{2}}{r^{2}}\right)dt^{2}+\left(1-\frac{2m}{r}+\frac{q^{2}}{r^{2}}\right)^{-1}dr^{2}+r^{2}(d\theta^{2}+sin^{2}\theta d\phi^{2}).
\]
 This metric degenerates at the zeros of the function $f(r)=(1-2m/r+q^2/r^2)$, which depend on the parameters $m$ (mass) and $q$ (charge). For our purposes we will require that $q\leq m$, which ensures the zeros $r^{\pm}=m\left( 1\pm\sqrt{1-q^2/m^2}\right)$ for $f$. The {\em intermediate region} of the Reissner-Nordstr\"om  is the spacetime $(V_I,g)$, where $V_I=\R\times (r^-,r^+)\times \mathbb{S}^2$.

Taking into account that $f(r)<0$ on $(r^-,r^+)$, the metric $g$ can be rewritten on $V_I$ as
\begin{equation}
  \label{eq:37}
g= -f(r)dt^2 + \frac{1}{f(r)}dr^2 + r^2 d\sigma^2=-d\tau^2 + r(\tau)^2d\sigma^2-F(\tau)dt^2,
\end{equation}
where
\[
d\tau:=-\frac{dr}{\sqrt{f(r)}}=\frac{dr}{\sqrt{-1+2m/r-q^2/r^2}} \qquad \hbox{and}\qquad F(\tau)=f(r(\tau)).
\]
Note that $\tau$ ranges in a finite interval $(a,b)$, and so, $(V_I,g)$ clearly corresponds with the standard form of a doubly warped spacetime where $V_I=(a,b)\times\mathbb{S}^2\times \R$.
In order to proceed with the analysis of the c-completion of $(V_I,g)$, we need to distinguish two cases: $q\neq 0$ and $q=0$.\footnote{Since the Penrose's diagram of Reissner-Nordstr\"om is well-known (see, for instance, \cite{hawking1975large}), the c-completion of $(V_I,g)$ can be also studied by applying \cite[Thm. 4.32]{FHSFinalDef}.}
%
%

\subsubsection*{Intermediate Reissner-Nordstr\"om with charge, $q\neq 0$.}

In this case, the warping integrals satisfy, for $a<c<b$,
 \begin{align}
   \int_{a}^{c}\frac{1}{\sqrt{\alpha_1(\tau)}}d\tau = \int_{r^-}^{r(c)}\frac{1}{r\sqrt{-1+\frac{2m}{r}-\frac{q^2}{r^2}}}dr<\infty\label{eq:38a} \\
   \int_{c}^{b}\frac{1}{\sqrt{\alpha_1(\tau)}}d\tau=\int^{r^+}_{r(c)}\frac{1}{r\sqrt{-1+\frac{2m}{r}-\frac{q^2}{r^2}}}dr<\infty\label{eq:38b}
  \end{align}
and
  \begin{align}
    \int_{a}^{c}\frac{1}{\sqrt{\alpha_2(\tau)}}d\tau=\int_{r^-}^{r(c)}\frac{1}{-1+\frac{2m}{r}-\frac{q^2}{r^2}}dr=\infty \label{eq:38}\\
    \int_{c}^{b}\frac{1}{\sqrt{\alpha_2(\tau)}}d\tau=\int^{r^+}_{r(c)}\frac{1}{-1+\frac{2m}{r}-\frac{q^2}{r^2}}dr=\infty.\label{eq:38c}
  \end{align}
  So, from Thm. \ref{thm:main} (iv) (with $M_1=\mathbb{S}^2$ and $M_2=\R$), we deduce the homeomorphisms
  \[
    \begin{array}{c}
      \overline{V}\leftrightarrow \left((a,b)\times \mathbb{S}^2\times \R\right) \cup (\mathbb{S}^2\times \left(\left(\R\times \{z^-, z^+\}\right)\cup \{i^+\} \right))\cup (\mathbb{S}^2\times \left(\left(\R\times \{z^-, z^+\}\right)\cup \{i^-\}\right)),\\
      \\
\partial V\leftrightarrow (\mathbb{S}^2\times \left(\left(\R\times \{z^-, z^+\}\right)\cup \{i^+\} \right))\cup (\mathbb{S}^2\times \left(\left(\R\times \{z^-, z^+\}\right)\cup \{i^-\}\right)),
    \end{array}
    \]
    where we have used that ${\cal B}(\R)\equiv \R\times \{z^-,z^+\}$, being $z^-$ and $z^+$ the two asymptotic directions (left and right) of $\R$.

  \subsubsection*{Interior Schwarzschild, $q=0$.}

When $q=0$, $f(r)$ has only one zero, we can identify $(r^-,r^+)\equiv (0,2M)$, and the intermediate region of Reissner-Nordstr\"om coincides with the interior region of Schwarzschild. In this case, the warping integrals (\ref{eq:38a}), (\ref{eq:38b}) and (\ref{eq:38c}) still hold, but \eqref{eq:38} transforms into
  \[
\int_{a}^{c}\frac{1}{\sqrt{\alpha_2(\tau)}}d\tau = \int_{0}^{r(c)}\frac{1}{-1+\frac{2m}{r}}dr<\infty.
  \]
  So, from Thm. \ref{thm:main} (iii), we deduce the homeomorphism
  \[
\overline{V}\leftrightarrow \left([a,b)\times \mathbb{S}^2\times \R\right) \cup \left( \mathbb{S}^2\times \left(\R\times \{z^-,z^+\} \right)\right)
    \]
  and thus,\footnote{The usual time-orientation on Reissner-Nordstr\"om makes the vector field $\partial_r$ past-directed in the intermediate region. So, in formula (\ref{d}), the roles of the future and past c-boundaries are interchanged with respect to the (a priori) expected ones.}
 \begin{equation}\label{d}
  \partial V\equiv \hat{\partial} V\cup \check{\partial} V \leftrightarrow  \left(\{a\}\times \mathbb{S}^2\times \R   \right) \cup \left(\mathbb{S}^2\times \left( \left(\R\times \{z^-,z^+\}\right)\cup \{i^+\} \right)\right).
  \end{equation}

\subsection*{De Sitter models with (non-necessarily compact) internal spaces}

Motivated by the relevance for the problem of the dS/CFT correspondence, finally we study the c-boundary of warped products of de Sitter models with general Riemannian manifolds.

Recall that {\em de Sitter spacetime} can be seen as a Robertson-Walker spacetime $(M,g_{M})$, where
\[
M=\R\times \mathbb{S}^l,\qquad g_{M}=-dt^2 + cosh(t)^2 g_{\mathbb{S}^l}.
  \]
Consider the doubly warped spacetime $(V,g)$ obtained as the product of de Sitter space $(M,g_{M})$ and a Riemannian manifold $(F,g_{F})$, i.e.,
  \[V=\R\times \mathbb{S}^l\times F,\qquad g=-dt^2+cosh^2(t)g_{\mathbb{S}^{l}}+g_{F}.
    \]
  The first warping function $\alpha_1(t)=\cosh(t)^2$ satisfies the finite integral conditions for both, the future and the past directions, meanwhile the second one $\alpha_2(t)\equiv 1$ does not. Therefore, from Thm. \ref{thm:main} (iv) (with $M_1=\mathbb{S}^l$ and $M_2=F$), we deduce the following homeomorphism for the c-boundary of $(V,g)$:
\[
\partial {V}\equiv \hat{\partial} V \cup \check{\partial} V \leftrightarrow  \left(\mathbb{S}^l\times \left({\cal B}(F)\cup \{i^+\}\right) \right)\,  \cup \, \left(\mathbb{S}^l\times \left({\cal B}(F)\cup \{i^-\}\right) \right).
\]
  In particular, if $(F,g_{F})$ is compact, then ${\cal B}(F)$ is empty, and the c-boundary becomes (compare with the last assertion on Thm. \ref{thm:harris}):
\[
  \partial V\leftrightarrow \left(\mathbb{S}^l\times \{i^+\}) \right)\,  \cup \, \left(\mathbb{S}^l\times \{i^-\} \right).
  \]

\section*{Acknowledgments}

The authors are partially supported by the Spanish Grant MTM2016-78807-C2-2-P (MINECO and FEDER funds). L. Aké also acknowledges a grant funded by the Consejo Nacional de Ciencia y Tecnolog\'ia (CONACyT), M\'exico.

\bibliographystyle{unsrt}
\bibliography{biblio2}

\begin{thebibliography}{10}

\bibitem{tHooft:1993dmi}
Gerard 't~Hooft.
\newblock {Dimensional reduction in quantum gravity}.
\newblock In {\em {Salamfest 1993:0284-296}}, pages 0284--296, 1993.

\bibitem{doi:10.1063/1.531249}
Leonard Susskind.
\newblock The world as a hologram.
\newblock {\em Journal of Mathematical Physics}, 36(11):6377--6396, 1995.

\bibitem{Mal}
Juan Maldacena.
\newblock The large {$N$} limit of superconformal field theories and
  supergravity [ {MR}1633016 (99e:81204a)].
\newblock In {\em Trends in theoretical physics, {II} ({B}uenos {A}ires,
  1998)}, volume 484 of {\em AIP Conf. Proc.}, pages 51--63. Amer. Inst. Phys.,
  Woodbury, NY, 1999.

\bibitem{PhysRevD.80.124008}
Monica Guica, Thomas Hartman, Wei Song, and Andrew Strominger.
\newblock The kerr/cft correspondence.
\newblock {\em Phys. Rev. D}, 80:124008, Dec 2009.

\bibitem{GHODSI201079}
Ahmad Ghodsi and Mohammad~R. Garousi.
\newblock The rn/cft correspondence.
\newblock {\em Physics Letters B}, 687(1):79 -- 83, 2010.

\bibitem{1126-6708-2009-04-019}
Thomas Hartman, Keiju Murata, Tatsuma Nishioka, and Andrew Strominger.
\newblock Cft duals for extreme black holes.
\newblock {\em Journal of High Energy Physics}, 2009(04):019, 2009.

\bibitem{Witten:2001kn}
Edward Witten.
\newblock {Quantum gravity in de Sitter space}.
\newblock In {\em {Strings 2001: International Conference Mumbai, India,January
  5-10, 2001}}, 2001.

\bibitem{1126-6708-2001-10-034}
Andrew Strominger.
\newblock The ds/cft correspondence.
\newblock {\em Journal of High Energy Physics}, 2001(10):034, 2001.

\bibitem{0264-9381-34-1-015009}
Dionysios Anninos, Thomas Hartman, and Andrew Strominger.
\newblock Higher spin realization of the ds/cft correspondence.
\newblock {\em Classical and Quantum Gravity}, 34(1):015009, 2017.

\bibitem{Gibbons:1984kp}
Gary~W. Gibbons.
\newblock {Aspects of supergravity theories}.
\newblock In {\em {XV GIFT Seminar on Supersymmetry and Supergravity Gerona,
  Spain, June 4-11, 1984}}, 1984.

\bibitem{BMN}
David Berenstein, Juan Maldacena, and Horatiu Nastase.
\newblock Strings in flat space and pp waves from {$\scr N=4$} super {Y}ang
  {M}ills.
\newblock {\em J. High Energy Phys.}, (4):No. 13, 30, 2002.

\bibitem{MR1}
Donald Marolf and Simon~F. Ross.
\newblock Plane waves: to infinity and beyond!
\newblock {\em Classical Quantum Gravity}, 19(24):6289--6302, 2002.

\bibitem{AF}
Virginia Ala{\~n}a and Jos\'e~Luis Flores.
\newblock The causal boundary of product spacetimes.
\newblock {\em Gen. Relativity Gravitation}, 39(10):1697--1718, 2007.

\bibitem{doi:10.1142/S0217751X01003937}
Juan Maldacena and Carlos Nuñez.
\newblock Supergravity description of field theories on curved manifolds and a
  no go theorem.
\newblock {\em International Journal of Modern Physics A}, 16(05):822--855,
  2001.

\bibitem{H}
Steven~G. Harris.
\newblock Discrete group actions on spacetimes: causality conditions and the
  causal boundary.
\newblock {\em Classical Quantum Gravity}, 21(4):1209--1236, 2004.

\bibitem{GKP}
Robert. Geroch, E.~H. Kronheimer, and Roger Penrose.
\newblock Ideal points in space-time.
\newblock {\em Proceedings of the Royal Society A: Mathematical, Physical and
  Engineering Sciences}, 327(1571):545--567, apr 1972.

\bibitem{Sz}
L\'aszl\'o~B. Szabados.
\newblock Causal boundary for strongly causal spacetimes.
\newblock {\em Classical and Quantum Gravity}, 5(1):121, 1988.

\bibitem{FHSFinalDef}
Jos{\'e}~Luis Flores, J{\'o}natan Herrera, and Miguel S{\'a}nchez.
\newblock On the final definition of the causal boundary and its relation with
  the conformal boundary.
\newblock {\em Adv. Theor. Math. Phys.}, 15(4):991--1057, 2011.

\bibitem{FHSBuseman}
Jos\'e~Luis Flores, J\'onatan Herrera, and Miguel S{\'a}nchez.
\newblock Gromov, {C}auchy and causal boundaries for {R}iemannian, {F}inslerian
  and {L}orentzian manifolds.
\newblock {\em Mem. Amer. Math. Soc.}, 226(1064):vi+76, 2013.

\bibitem{FHSHaus}
Jos\'e~Luis Flores, J\'onatan Herrera, and Miguel S{\'a}nchez.
\newblock Hausdorff separability of the boundaries for spacetimes and
  sequential spaces.
\newblock {\em J. Math. Phys.}, 57(2):022503, 25, 2016.

\bibitem{FHSIso2}
Jos\'e~Luis Flores, J\'onatan Herrera, and Miguel S{\'a}nchez.
\newblock Computability of the causal boundary by using isocausality.
\newblock {\em Classical and Quantum Gravity}, 30(7):075009, 2013.

\bibitem{Perlick2004}
Volker Perlick.
\newblock Gravitational lensing from a spacetime perspective.
\newblock {\em Living Reviews in Relativity}, 7(1):9, 2004.

\bibitem{FS2}
José~Luis Flores and Miguel S{\'a}nchez.
\newblock The causal boundary of wave-type spacetimes.
\newblock {\em J. High Energy Phys.}, (3):036, 43, 2008.

\bibitem{FS}
Jos{\'e}~Luis Flores and Miguel S{\'a}nchez.
\newblock Geodesic connectedness of multiwarped spacetimes.
\newblock {\em J. Differential Equations}, 186(1):1--30, 2002.

\bibitem{beem}
John~K. Beem, Paul~E. Ehrlich, and Kevin~L. Easley.
\newblock {\em {Global Lorentzian Geometry}}, volume 202 of {\em Pure and
  Applied Mathematics}.
\newblock Marcel Dekker, New York, 1996.

\bibitem{Bartolo2002}
Rossella Bartolo, Anna Germinario, and Miguel S{\'a}nchez.
\newblock Convexity of domains of riemannian manifolds.
\newblock {\em Annals of Global Analysis and Geometry}, 21(1):63--84, 2002.

\bibitem{GS03}
Alfonso Garc{\'{\i}}a-Parrado and Jos{\'e} M.~M. Senovilla.
\newblock Causal relationship: a new tool for the causal characterization of
  {L}orentzian manifolds.
\newblock {\em Classical Quantum Gravity}, 20(4):625--664, 2003.

\bibitem{0264-9381-28-17-175016}
Jos\'e~Luis Flores, J\'onatan Herrera, and Miguel Sánchez.
\newblock Isocausal spacetimes may have different causal boundaries.
\newblock {\em Classical and Quantum Gravity}, 28(17):175016, 2011.

\bibitem{H2}
Steven~G. Harris.
\newblock Topology of the future chronological boundary: universality for
  spacelike boundaries.
\newblock {\em Classical Quantum Gravity}, 17(3):551--603, 2000.

\bibitem{hawking1975large}
Stephen~W. Hawking and George F.~R. Ellis.
\newblock {\em The Large Scale Structure of Space-Time (Cambridge Monographs on
  Mathematical Physics)}.
\newblock Cambridge University Press, 1975.

\end{thebibliography}

\end{document}